\newif\ifignore % when set to true, additional text appears containing
\def \tnil {\langle\rangle}
\spnewtheorem{assumption}[theorem]{Assumption}{\bfseries}{\upshape}
\newdimen\proofrulebreadth \proofrulebreadth=.05em
\newdimen\proofdotseparation \proofdotseparation=1.25ex
\newdimen\proofrulebaseline \proofrulebaseline=2ex
\let\then\relax
\def\hfi{\hskip0pt plus.0001fil}
\mathchardef\squigto="3A3B
\newif\ifinsideprooftree\insideprooftreefalse
\newif\ifonleftofproofrule\onleftofproofrulefalse
\newif\ifproofdots\proofdotsfalse
\newif\ifdoubleproof\doubleprooffalse
\let\wereinproofbit\relax
\newdimen\shortenproofleft
\newdimen\shortenproofright
\newdimen\proofbelowshift
\newbox\proofabove
\newbox\proofbelow
\newbox\proofrulename
\def\shiftproofbelow{\let\next\relax\afterassignment\setshiftproofbelow\dimen0 }
\def\shiftproofbelowneg{\def\next{\multiply\dimen0 by-1 }%
\afterassignment\setshiftproofbelow\dimen0 }
\def\setshiftproofbelow{\next\proofbelowshift=\dimen0 }
\def\setproofrulebreadth{\proofrulebreadth}
\def\prooftree{% NESTED ZERO (\ifonleftofproofrule)
%
% first find out whether we're at the left-hand end of a proof rule
\ifnum  \lastpenalty=1
\then   \unpenalty
\else   \onleftofproofrulefalse
\fi
%
% some space on left (except if we're on left, and no infinity for outermost)
\ifonleftofproofrule
\else   \ifinsideprooftree
        \then   \hskip.5em plus1fil
        \fi
\fi
%
% begin our proof tree environment
\bgroup% NESTED ONE (\proofbelow, \proofrulename, \proofabove,
%               \shortenproofleft, \shortenproofright, \proofrulebreadth)
\setbox\proofbelow=\hbox{}\setbox\proofrulename=\hbox{}%
\let\justifies\proofover\let\leadsto\proofoverdots\let\Justifies\proofoverdbl
\let\using\proofusing\let\[\prooftree
\ifinsideprooftree\let\]\endprooftree\fi
\proofdotsfalse\doubleprooffalse
\let\thickness\setproofrulebreadth
\let\shiftright\shiftproofbelow \let\shift\shiftproofbelow
\let\shiftleft\shiftproofbelowneg
\let\ifwasinsideprooftree\ifinsideprooftree
\insideprooftreetrue
%
% now begin to set the top of the rule (definitions local to it)
\setbox\proofabove=\hbox\bgroup$\displaystyle % NESTED TWO
\let\wereinproofbit\prooftree
%
% these local variables will be copied out:
\shortenproofleft=0pt \shortenproofright=0pt \proofbelowshift=0pt
%
% flags to enable inner proof tree to detect if on left:
\onleftofproofruletrue\penalty1
}
\def\eproofbit{% NESTED TWO
%
% various hacks applicable to hypothesis list 
\ifx    \wereinproofbit\prooftree
\then   \ifcase \lastpenalty
        \then   \shortenproofright=0pt  % 0: some other object, no indentation
        \or     \unpenalty\hfil         % 1: empty hypotheses, just glue
        \or     \unpenalty\unskip       % 2: just had a tree, remove glue
        \else   \shortenproofright=0pt  % eh?
        \fi
\fi
%
% pass out crucial values from scope
\global\dimen0=\shortenproofleft
\global\dimen1=\shortenproofright
\global\dimen2=\proofrulebreadth
\global\dimen3=\proofbelowshift
\global\dimen4=\proofdotseparation
\global\count255=\proofdotnumber
%
% end the box
$\egroup  % NESTED ONE
%
% restore the values
\shortenproofleft=\dimen0
\shortenproofright=\dimen1
\proofrulebreadth=\dimen2
\proofbelowshift=\dimen3
\proofdotseparation=\dimen4
\proofdotnumber=\count255
}
\def\proofover{% NESTED TWO
\eproofbit % NESTED ONE
\setbox\proofbelow=\hbox\bgroup % NESTED TWO
\let\wereinproofbit\proofover
$\displaystyle
}%
\def\proofoverdbl{% NESTED TWO
\eproofbit % NESTED ONE
\doubleprooftrue
\setbox\proofbelow=\hbox\bgroup % NESTED TWO
\let\wereinproofbit\proofoverdbl
$\displaystyle
}%
\def\proofoverdots{% NESTED TWO
\eproofbit % NESTED ONE
\proofdotstrue
\setbox\proofbelow=\hbox\bgroup % NESTED TWO
\let\wereinproofbit\proofoverdots
$\displaystyle
}%
\def\proofusing{% NESTED TWO
\eproofbit % NESTED ONE
\setbox\proofrulename=\hbox\bgroup % NESTED TWO
\let\wereinproofbit\proofusing
\kern0.3em$
}
\def\endprooftree{% NESTED TWO
\eproofbit % NESTED ONE
% \dimen0 =     length of proof rule
% \dimen1 =     indentation of conclusion wrt rule
% \dimen2 =     new \shortenproofleft, ie indentation of conclusion
% \dimen3 =     new \shortenproofright, ie
%                space on right of conclusion to end of tree
% \dimen4 =     space on right of conclusion below rule
  \dimen5 =0pt% spread of hypotheses
% \dimen6, \dimen7 = height & depth of rule
%
% length of rule needed by proof above
\dimen0=\wd\proofabove \advance\dimen0-\shortenproofleft
\advance\dimen0-\shortenproofright
%
% amount of spare space below
\dimen1=.5\dimen0 \advance\dimen1-.5\wd\proofbelow
\dimen4=\dimen1
\advance\dimen1\proofbelowshift \advance\dimen4-\proofbelowshift
%
% conclusion sticks out to left of immediate hypotheses
\ifdim  \dimen1<0pt
\then   \advance\shortenproofleft\dimen1
        \advance\dimen0-\dimen1
        \dimen1=0pt
%       now it sticks out to left of tree!
        \ifdim  \shortenproofleft<0pt
        \then   \setbox\proofabove=\hbox{%
                        \kern-\shortenproofleft\unhbox\proofabove}%
                \shortenproofleft=0pt
        \fi
\fi
%
% and to the right
\ifdim  \dimen4<0pt
\then   \advance\shortenproofright\dimen4
        \advance\dimen0-\dimen4
        \dimen4=0pt
\fi
%
% make sure enough space for label
\ifdim  \shortenproofright<\wd\proofrulename
\then   \shortenproofright=\wd\proofrulename
\fi
%
% calculate new indentations
\dimen2=\shortenproofleft \advance\dimen2 by\dimen1
\dimen3=\shortenproofright\advance\dimen3 by\dimen4
%
% make the rule or dots, with name attached
\ifproofdots
\then
        \dimen6=\shortenproofleft \advance\dimen6 .5\dimen0
        \setbox1=\vbox to\proofdotseparation{\vss\hbox{$\cdot$}\vss}%
        \setbox0=\hbox{%
                \advance\dimen6-.5\wd1
                \kern\dimen6
                $\vcenter to\proofdotnumber\proofdotseparation
                        {\leaders\box1\vfill}$%
                \unhbox\proofrulename}%
\else   \dimen6=\fontdimen22\the\textfont2 % height of maths axis
        \dimen7=\dimen6
        \advance\dimen6by.5\proofrulebreadth
        \advance\dimen7by-.5\proofrulebreadth
        \setbox0=\hbox{%
                \kern\shortenproofleft
                \ifdoubleproof
                \then   \hbox to\dimen0{%
                        $\mathsurround0pt\mathord=\mkern-6mu%
                        \cleaders\hbox{$\mkern-2mu=\mkern-2mu$}\hfill
                        \mkern-6mu\mathord=$}%
                \else   \vrule height\dimen6 depth-\dimen7 width\dimen0
                \fi
                \unhbox\proofrulename}%
        \ht0=\dimen6 \dp0=-\dimen7
\fi
%
% set up to centre outermost tree only
\let\doll\relax
\ifwasinsideprooftree
\then   \let\VBOX\vbox
\else   \ifmmode\else$\let\doll=$\fi
        \let\VBOX\vcenter
\fi
% this \vbox or \vcenter is the actual output:
\VBOX   {\baselineskip\proofrulebaseline \lineskip.2ex
        \expandafter\lineskiplimit\ifproofdots0ex\else-0.6ex\fi
        \hbox   spread\dimen5   {\hfi\unhbox\proofabove\hfi}%
        \hbox{\box0}%
        \hbox   {\kern\dimen2 \box\proofbelow}}\doll%
%
% pass new indentations out of scope
\global\dimen2=\dimen2
\global\dimen3=\dimen3
\egroup % NESTED ZERO
\ifonleftofproofrule
\then   \shortenproofleft=\dimen2
\fi
\shortenproofright=\dimen3
%
% some space on right and flag we've just made a tree
\onleftofproofrulefalse
\ifinsideprooftree
\then   \hskip.5em plus 1fil \penalty2
\fi
}
\newcommand{\Pow}{\mathcal{P}}
\newcommand{\after}{\mathrel{\circ}}
\newcommand{\idmap}{\textrm{id}}
\newcommand{\cat}[1]{\ensuremath{\mathbf{#1}}}
\newcommand{\Cat}[1]{\ensuremath{\mathbf{#1}}}
\newcommand{\Sets}{\Cat{Sets}}
\newcommand{\Kl}{\mathcal{K}{\kern-.2ex}\ell}
\newcommand{\EM}{\mathcal{E}{\kern-.2ex}\mathcal{M}}
\newcommand{\lift}[1]{\smash{\widehat{#1}}}
\newcommand{\free}[1]{{#1^*}}
\newcommand{\klafter}{\circ}
\newcommand{\toFinal}[1]{{!_{#1}}}%\langle #1 \rangle}
\DeclareMathSymbol{\fromInit}{\mathord}{operators}{"3C}
\newcommand{\N}{\mathbb{N}} % natural numbers
\newcommand{\C}{\cat{C}} % complex numbers
\newcommand{\cppo}{\cat{Cppo}}
\newcommand{\J}{\mathcal{J}}
\newcommand{\state}[1]{*++[F-:<10pt>]{#1}}
\newcommand{\fstate}[1]{*++[F=:<10pt>]{#1}}
\mathchardef\ls="213C    % less symbol (< used as \langle)
\mathchardef\gr="213E    % greater symbol (> used as \rangle)
\renewcommand{\>}{\rangle}
\newcommand{\bb}[1]{[\![ #1 ]\!]}
\newcommand{\congrightarrow}{\mathrel{\stackrel{
           \raisebox{.5ex}{$\scriptstyle\cong\,$}}{
           \raisebox{0ex}[0ex][0ex]{$\rightarrow$}}}}
\newcommand{\takeout}[1]{\empty}
\def\id{\mathrm{id}}
\def\Id{\mathrm{Id}}
\def\sol#1{{#1}^\bullet}
\def\eps{\epsilon}
\def\inl{\mathsf{inl}}
\def\inr{\mathsf{inr}}
\def\refeq#1{(\ref{#1})}
\def\colim{\mathop{\textrm{colim}}}
\def\lsol#1{{#1}^\dag} % least solution
\def\gensol#1{{#1}^\dag} % generic operator
\def\cansol#1{{#1}^\dag}
\def\altsol#1{{#1}^\ddag}
\def\quotsol#1{{#1}^{\sim}}
\def\epselim#1{{#1}\!\setminus\! \epsilon}
\def\carrier{I}
\newcommand{\circone}[1]{\mbox{\textcircled{\scriptsize #1}}}%
\newcommand{\bbq}[1]{\bb{#1}_{\sim}} % quotient semantics
\def\To{\Rightarrow}
\def\:{\colon}
\def\GF{F} %Generic Functor
\def\GFG{G} %Generic Functor2
\def\MM{R} %Monad of Mazurkiewicz
\def\quot{\xi} %Quotient morphism in Sets (like for Mazurkiewicz
\def\quotG{\gamma} % Generic Quotient morphism = generic natural transformation 
\newtheorem{definition_th}{Definition}[section]  % <- Numbers are of the form <section>.<environment>
\newtheorem{algo_th}[definition_th]{Algorithm}
\newenvironment{theorem_for}[2][\empty]{\bigskip\noindent{\bf
    Theorem~\ref{#2}}\ifthenelse{\equal{#1}{\empty}}{{\bf.}}{ {\bf (#1).}}\it}{\vspace{0.5cm}}
\newenvironment{corollary_for}[2][\empty]{\bigskip\noindent{\bf
    Corollary~\ref{#2}}\ifthenelse{\equal{#1}{\empty}}{{\bf.}}{ {\bf (#1).}}\it}{\vspace{0.5cm}}
\newenvironment{proposition_for}[2][\empty]{\bigskip\noindent{\bf
    Proposition~\ref{#2}}\ifthenelse{\equal{#1}{\empty}}{{\bf.}}{ {\bf
      (#1).}}\it}{\vspace{0.5cm}}
\begin{document}

\title{How to Kill Epsilons with a Dagger}
\subtitle{A Coalgebraic Take on Systems with Algebraic Label Structure}

\author{Filippo Bonchi\inst{1} \and Stefan Milius\inst{2} \and Alexandra Silva\inst{3} \and Fabio Zanasi\inst{1} }
\institute{ENS Lyon, U. de Lyon, CNRS, INRIA, UCBL, France \and Lehrstuhl f\"ur Theoretische Informatik, Friedrich-Alexander Universit\"at Erlangen-N\"urnberg \and Institute for Computing and Information Sciences,
  Radboud University Nijmegen\thanks{\scriptsize Also affiliated to Centrum Wiskunde \& Informatica (Amsterdam, The Netherlands) and HASLab / INESC TEC, Universidade do Minho (Braga, Portugal).} }

\maketitle

\begin{abstract}
We propose an abstract framework for modeling state-based systems with internal behavior as e.g. given by silent or $\epsilon$-transitions. Our approach employs monads with a parametrized fixpoint operator $\dagger$ to give a semantics to those systems and implement a sound procedure of abstraction of the internal transitions, whose labels are seen as the unit of a free monoid. More broadly, our approach extends the standard coalgebraic framework for state-based systems by taking into account the algebraic structure of the labels of their transitions. This allows to consider a wide range of other examples, including Mazurkiewicz traces for concurrent systems.
\end{abstract}

\section{Introduction}\label{Sec:Intro}
The theory of coalgebras provides an elegant mathematical framework to express the semantics of computing devices:
the operational semantics, which is usually given as a state machine, is modeled as a coalgebra for a functor; the denotational semantics as the unique map into the final coalgebra of that functor. While  the denotational semantics is often \emph{compositional} (as, for instance, ensured by the bialgebraic approach of \cite{plotkin-semop}), it is sometimes not \emph{fully-abstract}, i.e, it discriminates systems that are equal from the point of view of an external observer. This is due to the presence of internal transitions (also called $\epsilon$-transitions) that are not observable but that are not abstracted away by the usual coalgebraic semantics using the unique homomorphism into the final coalgebra.

%Developing an abstract framework to give in a uniform way compositional and fully abstract semantics of different sorts of programming languages is, in our opinion, a challenging research program that

In this paper, we focus on the problem of giving trace semantics to systems with internal transitions.
Our approach stems from an elementary observation (pointed out in previous work, e.g. \cite{Sobocinski2012}): the labels of transitions form a monoid and the internal transitions are those labeled by the unit of the monoid. Thus, there is an \emph{algebraic structure} on the labels that needs to be taken into account when modeling the denotational semantics of those systems.
%
%The operational behavior of computing devices is usually expressed as state machines with transitions labeled with different kinds of information: symbols from an alphabet (in automata), substitutions of terms (in logic programming), elements of a lattice (in constraint programming), syntactical contexts (in process calculi), arrows of a category (in tile systems).
%While state machines can be conveniently modeled as coalgebras, their labels usually form algebras.
%
%Often, amongst these labels there are neutral elements that denote internal, unobservable, transitions: the empty word $\epsilon$, the identity substitution, the bottom of a lattice, the label $\tau$ corresponding to the identity context and identity arrows.
%
%\noindent
To illustrate this point, consider the following two non-deterministic automata (NDA).
$$\small
\xymatrix@C=.4cm {\ar[r]&\state{q_0}\ar@(ul,ur)^{b}%\ar@/_{1.2pc}/[rrrr]|\epsilon
\ar@/^/[rr]^a&&\fstate{q_2}\ar@/^/[ll]^b\ar@/_/[rr]_{b}\ar@(ul,ur)^{a,c}&& \state{q_3}\ar@(ul,ur)^{b}\ar@/_/[ll]_{c}\ar@{<-} `d[l] `[llll]^\epsilon [llll]} 
\qquad\qquad
\xymatrix@C=.4cm {\ar[r]&\state{q_0}\ar@/^/[rrr]^{a+b^*c}\ar@(ul,ur)^{b}&&
&\fstate{q_2}\ar@/^/[lll]^{b}\ar@(ul,ur)^{bb^*c+a+c} }
$$
The one on the left (that we call $\mathbb{A}$) is an NDA with $\epsilon$-transitions: its transitions are labeled either by the symbols of the alphabet $A=\{a,b,c\}$ or by the empty word $\epsilon\in A^*$. The one on the right (that we call $\mathbb{B}$) has transitions labeled by languages in $\Pow(A^*)$, here represented as regular expressions.
The monoid structure on the labels is explicit on $\mathbb{B}$, while it is less evident in $\mathbb{A}$ since the set of labels $A\cup \{\epsilon \}$ does not form a monoid. However, this set can be trivially embedded into $\Pow(A^*)$ by looking at each symbols as the corresponding singleton language. For this reason each automaton with $\epsilon$-transitions, like $\mathbb{A}$, can be regarded as an automaton with transitions labeled by languages, like $\mathbb{B}$. Furthermore, we can define the semantics of NDA with $\epsilon$-transitions by defining the semantics of NDA with transitions labeled by languages: a word $w$ is accepted by a state $q$ if there is a path $\xymatrix@C=.4cm {q\ar[r]^{L_1} & \cdots  \ar[r]^{L_n} & p}$ where $p$ is a final state, and there exist a decomposition $w=w_1\cdots w_n$ such that $w_i \in L_i$.
Observe that, with this definition, $\mathbb{A}$ and $\mathbb{B}$ accept the same language: all words over $A$ that end with $a$ or $c$. In fact, $\mathbb{B}$ was obtained from $\mathbb{A}$ in a well-known process to compute the regular expression denoting the language accepted by a given automaton~\cite{Hopcroft}.

We propose to define the semantics of systems with internal transitions following the same idea as in the above example. Given some transition type (i.e.~an endofunctor) $F$, one first defines an embedding of $F$-systems with internal transitions into $F^*$-system, where $F^*$ has been derived from $F$ by making explicit the algebraic structure on the labels. Next one models the semantics of an $F$-system as the one of the corresponding $F^*$-system $e$. Naively, one could think of defining the semantics of $e$ as the unique map $\toFinal{e}$ into the final coalgebra for $F^*$. However, this approach turns out to be too fine grained, essentially because it ignores the underlying algebraic structure on the labels of $e$. The same problem can be observed in the example above: $\mathbb{B}$ and the representation of $\mathbb{A}$ as an automaton with languages as labels have different final semantics---they accept the same language only modulo the equations of monoids.

Thus we need to extend the standard coalgebraic framework by taking into account the algebraic structure on labels. To this end, we develop our theory for systems whose transition type $F^*$ has a \emph{canonical fixpoint}, i.e. its initial algebra and final coalgebra coincide. This is the case for many relevant examples, as observed in \cite{HasuoJS:07}. Our \emph{canonical fixpoint semantics} will be given as the composite $\fromInit \circ \toFinal{e}$, where $\toFinal{e}$ is a coalgebra morphism given by finality and $\fromInit$ is an algebra morphism given by initiality. %of the final coalgebra.
The target of $\fromInit$ will be an algebra for $F^*$ encoding the equational theory associated with the labels of $F^*$-systems. Intuitively, $\fromInit$ being an \emph{algebra} morphism, will take the quotient of the semantics given by $\toFinal{e}$ modulo those equations. Therefore the extension provided by $\fromInit$ is the technical feature allowing us to take into account the algebraic structure on labels.

To study the properties of our canonical fixpoint semantics, it will be convenient to formulate it as an operator $e \mapsto \cansol{e}$ assigning to systems (seen as sets of equations) a certain \emph{solution}. Within the same perspective we will implement a different kind of solution $e \mapsto \altsol{e}$ turning any system $e$ with internal transitions into one $\altsol{e}$ where those have been abstracted away. By comparing the operators $e \mapsto \cansol{e}$ and $e \mapsto \altsol{e}$, we will then be able to show that such a procedure (also called \emph{$\epsilon$-elimination}) is sound with respect to the canonical fixpoint semantics.

To conclude, we will explore further the flexibility of our framework. In particular, we will model the case in which the algebraic structure of the labels is quotiented under some equations, resulting in a coarser equivalence than the one given by the canonical fixpoint semantics. As a relevant example of this phenomenon, we give the first coalgebraic account of Mazurkiewicz traces.

\paragraph{Synopsis} After recalling the necessary background in Section \ref{Sec:Trace}, we discuss our motivating examples---automata with $\epsilon$-transitions and automata on words---in Section \ref{SSec:Mot}. Section \ref{sec:Theory} is devoted to present the canonical fixpoint semantics and the sound procedure of $\epsilon$-elimination. This framework is then instantiated to the examples of Section \ref{SSec:Mot}. Finally, in Section \ref{ssec:quot} we show how a quotient of the algebra on labels induces a coarser canonical fixpoint semantics. We propose Mazurkiewicz traces as a motivating example for such a construction. A full version of this paper with all proofs and extra material can be found in \url{http://arxiv.org/abs/1402.4062}.

\section{Preliminaries}\label{Sec:Trace}
In this section we introduce the basic notions we need for our abstract framework. We assume some familiarity with
category theory. We will use boldface capitals $\cat{C}$ to denote categories, $X, Y, \ldots$ for objects and $f,g,\ldots$ for morphisms.
We use Greek letters and double arrows, e.g. $\eta \colon F \To G$, for natural transformations, monad morphisms and any kind of 2-cells. If $\mathbf{C}$ has coproducts we will denote them by $X+Y$ and use $\inl,\inr$ for the coproduct injections.

\subsection{Monads}

We recall the basics of the theory of monads, as needed here. For more information, see \textit{e.g.}~\cite{MacLane71}. A monad is a functor $T\colon\cat{C} \rightarrow \cat{C}$ together with two natural transformations, a \emph{unit} $\eta\colon \idmap_{\cat{C}} \To T$ and a \emph{multiplication} $\mu \colon T^{2} \To T$, which are required to satisfy the following equations, for every $X\in\cat{C}$: $\mu_X\after\eta_{T X} = \mu_X\after T\eta_{X} = \id$ and $T\mu_{X}\after \mu_{T X} = \mu_X\after \mu_X$.

A \emph{morphism of monads} from $(T,\eta^T, \mu^T)$ to $(S,\eta^S, \mu^S)$ is a natural transformation $\gamma \colon T \Rightarrow S$ that preserves unit and multiplication: $\gamma_X \circ \eta^T_X=\eta^S_X$ and $\gamma_X \circ \mu^T_X=\mu_X^S \circ \gamma_{SX} \circ T\gamma_X$. A \emph{quotient of monads} is a morphism of monads with epimorphic components.

\begin{example}\label{ex:mnds}
We briefly describe the examples of monads that
we use in this paper.
\begin{enumerate}%[(A)]
\item \label{pt:powersetmonad} Let $\cat{C}=\Sets$. The powerset monad $\Pow$ maps a set $X$ to the set $\Pow X$ of
  subsets of $X$, and a function $f\colon X \to Y$ to $\Pow  f\colon
  \Pow X \to \Pow Y$ given by direct image. The unit is given by the
  singleton set map $\eta_X(x) = \{x\}$ and multiplication by union
  $\mu_X(U) = \bigcup_{S \in U} S$.

\item \label{pt:exceptionmonad}  Let $\cat{C}$ be a category with coproducts and $E$ an object of $\cat C$. The exception monad $\mathcal E$ is defined on objects as $\mathcal E  X = E + X$
and on arrows $f\colon X\to Y$ as $\mathcal E   f = \Id_E + f$. Its unit and multiplication are given on $X \in \cat C$ respectively as $\inr_X\colon X \to E + X$ and $\nabla_{E} + \Id_X : E + E + X\to E + X$, where $\nabla_E = [\id_E,\id_E]$ is the codiagonal. When $\cat C=\Sets$, $E$ can be thought as a set of {\em exceptions} and this monad is often used to encode computations that might fail throwing an exception chosen from the set $E$.

\item \label{pt:freemonad} Let $H$ be an endofunctor on a category $\C$ such that for every
  object $X$ there exists a free $H$-algebra $\free{H}X$ on $X$
  (equivalently, an initial
  $H+X$-algebra) with the structure $\tau_X: H\free{H} X \to \free{H}X$ and
  universal morphism $\eta_X: X \to \free{H} X$. Then as proved by Barr~\cite{barr:70} (see also
  Kelly~\cite{kelly:80}) $\free{H}\colon \C \to \C$ is the functor part of a \emph{free
  monad} on $H$ with the unit given by the above $\eta_X$ and the
  multiplication given by the freeness of $\free{H}\free{H}X$: $\mu_X$ is the
  unique $H$-algebra homomorphism from $(\free{H}\free{H}X, \tau_{\free{H}X})$ to
  $(\free{H}X , \tau_X)$ such that $\mu_X \cdot \eta_{\free{H}X} = \eta_X$.
  Also notice that for a complete category every free monad
  arises in this way. Finally, for later use we fix the notation $\kappa = \tau
  \cdot H\eta\colon H \To \free{H}$ for the universal natural transformation of the free monad.
\end{enumerate}
\end{example}
Given a monad $M\colon \C \to \C$, its \emph{Kleisli category} $\Kl(M)$ has the same objects as $\cat{C}$, but morphisms $X\rightarrow Y$ in $\Kl(M)$ are morphisms $X\rightarrow MY$ in $\cat{C}$. The identity map $X\rightarrow X$ in $\Kl(M)$ is $M$'s unit $\eta_{X}\colon X\rightarrow M X$; and composition $g \klafter f$ in $\Kl(M)$ uses $M$'s multiplication: $g \klafter f = \mu \after Mg \after f$. There is a forgetful functor $\mathcal{U}\colon\Kl(T)\rightarrow \cat{C}$, sending $X$ to $T X$ and $f$ to $\mu \after T  f$. This functor has a left adjoint $\J$, given by $\J X = X$ and $\J  f = \eta \after f$. The Kleisli category $\Kl(M)$ inherits coproducts from the underlying category $\cat{C}$. More precisely, for every objects $X$ and $Y$ their coproduct $X+Y$ in $\cat C$ is also a coproduct in $\Kl(M)$ with the injections $\J\inl$ and $\J\inr$.
\subsection{Distributive laws and liftings}\label{SSec:Distributive}
The most interesting examples of the theory that we will present in Section~\ref{sec:Theory} concern coalgebras
for functors $\lift{H}\colon \Kl(M) \to \Kl(M)$ that are obtained as liftings of endofunctors $H$ on $\Sets$.
Formally, given a monad $M\colon \C \to \C$, a \emph{lifting} of $H \colon \C \to \C$ to $\Kl(M)$ is an endofunctor
 $\lift{H}\colon \Kl(M) \to \Kl(M)$ such that $\J \circ H = \lift{H} \circ \J$. The lifting of a monad $(T,\eta, \mu)$ is a monad $(\lift{T},\lift{\eta}, \lift{\mu})$ such that
$\lift{T}$ is a lifting of $T$ and $\lift{\eta}$, $\lift{\mu}$ are given on $X \in \Kl(M)$ (i.e. $X \in \Sets$) respectively as $\J(\eta_{X})$ and $\J(\mu_{X})$.

A natural way of lifting functors and monads is by mean of distributive laws.
A {\em distributive law} of a monad $(T,\eta^T,\mu^T)$ over a monad $(M,\eta^M,\mu^M)$ is a
natural transformation $\lambda \colon
TM\To MT$, that commutes
appropriately with the unit and multiplication of both monads; more precisely, the diagrams below commute:
\begin{equation*}
%\label{KLlawDiag}
\vcenter{\xymatrix@R-.5pc{
TX\ar[d]_{T\eta^M_{X}}\ar@{=}[r] & TX\ar[d]^{\eta^M_{TX}}
& \quad
TM^{2}X\ar[d]_{T\mu^M_{X}}\ar[r]^-{\lambda_{MX}} &
   MTMX\ar[r]^-{M\lambda_{X}} &
   M^{2}TX\ar[d]^{\mu^M_{TX}} \\
TMX\ar[r]_-{\lambda_X} & MTX
& \quad
TMX\ar[rr]_-{\lambda_X} & & MTX\\
MX\ar[u]^{\eta^T_{MX}}\ar@{=}[r] & MX\ar[u]_{M\eta^T_{X}}
& \quad
T^{2}MX\ar[u]^{\mu^T_{MX}}\ar[r]_{T\lambda_{MX}} &
   TMTX\ar[r]_{\lambda_{TX}} &
   T^{2}MX\ar[u]_{M\mu^T_{X}}  }}
\end{equation*}
A distributive law of a {\em functor } $T$ over a {\em monad} $(M,\eta^M,\mu^M)$ is a
natural transformation $\lambda \colon
TM\To MT$ such that only the two topmost squares above commute.

The following ``folklore'' result gives an alternative description of
distributive laws in terms of liftings to Kleisli categories, see also~\cite{Johnstone75}, \cite{Mulry93} or~\cite{BalanK11}.
\begin{proposition}[\cite{Mulry93}]
\label{LiftProp}
Let $(M,\eta^M,\mu^M)$ be a monad on a category $\C$. Then the following holds:
\begin{enumerate}
\item For every endofunctor $T$ on $\C$, there is a bijective correspondence between liftings of $T$ to $\Kl(M)$ and distributive laws of $T$ over $M$.
\item For every monad $(T,\eta^T,\mu^T)$ on $\C$, there is a bijective correspondence between liftings of $(T,\eta^T,\mu^T)$ to $\Kl(M)$ and distributive laws of $T$ over $M$.
 \end{enumerate}
\end{proposition}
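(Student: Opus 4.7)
My plan is to establish the bijection by giving explicit constructions in both directions and verifying that they are mutually inverse. Since the result is classical folklore (with references already cited in the excerpt), the proof is mostly a matter of carefully unwinding the definition of the Kleisli category.

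For part (1), given a distributive law $\lambda\colon TM \To MT$, I define the lifting $\lift{T}$ on objects by $\lift{T}X := TX$, and on a Kleisli morphism $f\colon X \to MY$ (i.e. a $\C$-morphism into $MY$) by
\[
\lift{T}f \;:=\; \lambda_Y \after Tf \;\colon\; TX \to TMY \to MTY.
\]
The unit (topmost) diagram for $\lambda$ gives $\lambda_Y \after T\eta^M_Y = \eta^M_{TY}$, which yields $\lift{T}(\eta^M_X) = \eta^M_{TX}$, i.e. $\lift{T}$ preserves Kleisli identities, and is equivalent to the lifting condition $\lift{T}\after \J = \J\after T$. Functoriality with respect to Kleisli composition $g \klafter f = \mu^M \after Mg \after f$ reduces, after expanding, exactly to the multiplication (bottom-right) diagram of $\lambda$, using naturality of $\lambda$.

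Conversely, given a lifting $\lift{T}$, I extract $\lambda$ by applying it to a canonical Kleisli morphism. The $\C$-identity $\id_{MX}\colon MX \to MX$ is a Kleisli morphism $MX \to X$ in $\Kl(M)$; applying $\lift{T}$ produces a Kleisli morphism $TMX \to TX$, i.e. a $\C$-morphism $\lambda_X \colon TMX \to MTX$. Naturality of $\lambda$ follows from functoriality of $\lift{T}$ applied to the Kleisli naturality squares for $Mf$. The two required diagrams of $\lambda$ are recovered by applying $\lift{T}$ to the equations $\id_{MX} = \eta^M_X \klafter \id_{MX}$ (for the unit square) and $\mu^M_X \klafter \id_{M^2X} = \id_{MX} \klafter \id_{MX}$ suitably rephrased (for the multiplication square). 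A direct computation then shows that these two assignments $\lambda \mapsto \lift{T}$ and $\lift{T} \mapsto \lambda$ are mutually inverse.

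For part (2), I reuse the bijection of part (1) and observe that a lifting of the monad $(T, \eta^T, \mu^T)$ amounts to a lifting $\lift{T}$ of the underlying functor together with the requirement that $\J\eta^T \colon \J \To \lift{T}\after \J$ and $\J\mu^T\colon \lift{T}^2\after \J \To \lift{T}\after \J$ are (components of) natural transformations in $\Kl(M)$. Naturality in $\Kl(M)$ of the would-be unit $\lift{\eta}^T = \J\eta^T$ translates, via the formula for $\lift{T}$ on a Kleisli arrow $f\colon X \to MY$, to the commutativity of the bottom-left diagram in the definition of a distributive law of monads; similarly, naturality of $\lift{\mu}^T = \J\mu^T$ yields the top-left diagram. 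Conversely, these two diagrams guarantee that $\J\eta^T$ and $\J\mu^T$ lift to Kleisli-natural transformations, and the monad laws for $\lift{T}$ follow from those of $T$ since $\J$ is faithful on hom-sets via the adjunction $\J\dashv \mathcal{U}$.

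The main routine obstacle is bookkeeping: each verification requires translating Kleisli composition back to $\C$-composition with $\mu^M$ and applying naturality of $\lambda$ at the right stage. The only genuinely subtle point is checking that the construction of $\lambda$ from $\lift{T}$ in part (1) is well-defined and natural; here the key insight is that the lifting condition $\lift{T}\after\J = \J\after T$ forces $\lift{T}$ to be determined on objects by $T$, so all the content of $\lift{T}$ is concentrated in its action on non-trivial Kleisli morphisms, and this action is completely determined by its value on the canonical morphisms $\id_{MX}\colon MX \to X$ in $\Kl(M)$.
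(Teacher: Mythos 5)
The paper itself gives no proof of this proposition --- it is cited as folklore from the literature (Johnstone, Mulry, Balan--Kurz), and the appendix explicitly defers to those references --- so there is no in-paper argument to compare against. Your construction is the standard one and is correct in substance: $\lift{T}f = \lambda_Y \after Tf$ in one direction, $\lambda_X = \lift{T}(\id_{MX})$ (reading $\id_{MX}$ as a Kleisli morphism $MX \to X$) in the other, and bijectivity follows because every Kleisli arrow $f\colon X \to Y$ factors as $\id_{MY}\klafter \J f$, so $\lift{T}$ is determined by its values on these canonical arrows. Three small repairs are needed. First, the Kleisli identities to which you apply $\lift{T}$ in the converse direction should be $\id_{MX}\klafter \J\eta^M_X = \J\id_X$ and $\id_{MX}\klafter \J\mu^M_X = \id_{MX}\klafter \id_{M^2X}$; the equation $\eta^M_X \klafter \id_{MX} = \id_{MX}$ that you wrote is true but vacuous (it only records that $\eta^M$ is the Kleisli identity) and does not yield the unit square. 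Second, $\J$ is not faithful in general (for the terminal monad every Kleisli hom-set is a singleton), so you cannot invoke faithfulness to transport the monad laws; you also do not need to, since the unit and multiplication of the lifting are by definition $\J\eta^T$ and $\J\mu^T$, so the monad laws for the lifting are literally the $\J$-images of those for $T$ and hold by functoriality of $\J$. Third, naturality of $\J\mu^T$ in $\Kl(M)$ corresponds to the square of the distributive law involving $\mu^T$ (the bottom-right one in the paper's diagram), not the top-left one, which is part of the functor-over-monad data already handled in part (1). None of these affects the viability of the argument.
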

In what follows we shall simply write $\lift{H}$ for the
lifting of an endofunctor $H$.
\begin{proposition}[\cite{HasuoJS:07}]\label{prop:liftinginitialalgebra}
 Let $M \colon \C \to \C$ be a monad and $H \colon \C \to \C$ be a functor with a lifting
 $\lift{H} \colon \Kl(M) \to \Kl(M)$.
 If $H$ has an initial algebra $\iota \colon H\carrier \congrightarrow  \carrier$ (in $\cat{C}$),
 then $\J\iota \colon \lift{H}\carrier \to \carrier$ is an initial algebra for $\lift{H}$ (in $\Kl(M)$).
\end{proposition}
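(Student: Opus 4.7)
The plan is to reduce initiality in $\Kl(M)$ to initiality in $\C$ by means of the distributive law $\lambda \colon HM \To MH$ corresponding, via Proposition~\ref{LiftProp}, to the lifting $\lift{H}$. First I would unpack the statement: a $\lift{H}$-algebra $\alpha \colon \lift{H}X \to X$ in $\Kl(M)$ is, on the nose, an arrow $\alpha \colon HX \to MX$ in $\C$ (since $\lift{H}\J = \J H$, so $\lift{H}X = HX$ on objects), and $\J\iota$ corresponds in $\C$ to $\eta_I \klafter \iota$. Finding an algebra homomorphism $\tilde f \colon (I, \J\iota) \to (X,\alpha)$ in $\Kl(M)$ amounts to finding a morphism $f \colon \carrier \to MX$ in $\C$ making the Kleisli square $\tilde f \klafter \J\iota = \alpha \klafter \lift{H}\tilde f$ commute.

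Next I would turn this into an $H$-algebra problem in $\C$. Define the $H$-algebra $\bar\alpha = \mu^M_X \after M\alpha \after \lambda_X \colon HMX \to MX$ on $MX$. Initiality of $(\carrier,\iota)$ in $\Alg(H)$ yields a unique $f \colon \carrier \to MX$ with $f \after \iota = \bar\alpha \after Hf$. I would then verify that $f$, read as a Kleisli morphism $\tilde f \colon \carrier \to X$, satisfies the required equation. The left-hand side unfolds to $\mu^M_X \after Mf \after \eta^M_\carrier \after \iota = f \after \iota$ by naturality of $\eta^M$ and the unit law. For the right-hand side, the lifting acts on Kleisli arrows by $\lift{H}\tilde f = \lambda_X \after Hf$, hence $\alpha \klafter \lift{H}\tilde f = \mu^M_X \after M\alpha \after \lambda_X \after Hf = \bar\alpha \after Hf$. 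Equality of the two sides is exactly the $H$-algebra homomorphism condition above.

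For uniqueness, any competing Kleisli morphism $\tilde g \colon (I,\J\iota) \to (X,\alpha)$ corresponds to some $g \colon \carrier \to MX$ in $\C$, and the same computation shows that the $\lift{H}$-algebra morphism condition translates back into $g \after \iota = \bar\alpha \after Hg$, i.e., $g$ is an $H$-algebra morphism $(\carrier,\iota) \to (MX,\bar\alpha)$; initiality of $\iota$ forces $g = f$.

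The only real obstacle is bookkeeping: checking that the Kleisli-level equation really does rewrite, via the distributive law axioms and the monad laws for $M$, to the base-level equation defining $f$. The distributive law axioms are not actually needed in full here, since only the defining equation $\lift{H}\tilde f = \lambda_X \after Hf$ for the action of $\lift{H}$ on a Kleisli morphism $\tilde f = \eta^M \after f$ enters. There is no deeper conceptual difficulty; the statement essentially says that $M$-effects can be hidden inside the carrier $MX$ of the target algebra, and initiality in $\C$ takes care of the rest.
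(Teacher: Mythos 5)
Your proof is correct; the paper gives no proof of its own here (the result is cited from \cite{HasuoJS:07}), and your argument is the standard direct one: transposing a $\lift{H}$-algebra $\alpha\colon H X\to MX$ to the $H$-algebra $\bar\alpha=\mu^M_X\after M\alpha\after\lambda_X$ on $MX$ sets up a bijection between $\lift{H}$-algebra morphisms $(\carrier,\J\iota)\to(X,\alpha)$ and $H$-algebra morphisms $(\carrier,\iota)\to(MX,\bar\alpha)$, so initiality transfers. Your two Kleisli computations (the left-hand side reducing to $f\after\iota$ by naturality of $\eta^M$ and the unit law, the right-hand side to $\bar\alpha\after Hf$ via the formula $\lift{H}\tilde f=\lambda_X\after Hf$ coming from Proposition~\ref{LiftProp}) are exactly what is needed, and the uniqueness direction is handled correctly by the same translation.
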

In our examples, we will often consider the free monad (Example \ref{ex:mnds}.\ref{pt:freemonad}) $\free{\lift{H}}$ generated by a lifted functor $\lift{H}$. The following result will be pivotal.
\newcommand{\propliftingfreemonad}{ Let $H \colon \C \to \C$ be a functor and $M \colon \C \to \C$ be a monad such that there is a lifting
 $\lift{H} \colon \Kl(M) \to \Kl(M)$. Then the free monad $\free{H} \colon \C \to \C$ lifts to a monad
 $\lift{\free{H}} \colon \Kl(M) \to \Kl(M)$. Moreover, $\lift{\free{H}} = \free{\lift{H}}$.}
\begin{proposition}\label{prop:liftingfreemonad}
\propliftingfreemonad
\end{proposition}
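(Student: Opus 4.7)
By Proposition~\ref{LiftProp}, the lifting $\lift{H}$ corresponds to a distributive law $\lambda\colon HM \To MH$, and exhibiting a lifting $\lift{\free{H}}$ of the monad $\free{H}$ to $\Kl(M)$ amounts to producing a distributive law $\lambda^{\star}\colon \free{H}M \To M\free{H}$ of the monad $\free{H}$ over $M$. The plan has two parts: (i) construct such a $\lambda^{\star}$ from $\lambda$ and verify the four distributive law axioms; and (ii) identify the resulting lifting $\lift{\free{H}}$ with the free monad $\free{\lift{H}}$ in $\Kl(M)$.

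For (i), since $\free{H}MX$ is the initial $(H(-) + MX)$-algebra with structure $[\tau_{MX},\,\eta^{\free{H}}_{MX}]$, I would endow $M\free{H}X$ with the $(H(-) + MX)$-algebra structure
\[
[\, M\tau_X \after \lambda_{\free{H}X},\; M\eta^{\free{H}}_X\,] \colon HM\free{H}X + MX \to M\free{H}X,
\]
and define $\lambda^{\star}_X$ as the unique induced algebra homomorphism. Naturality of $\lambda^{\star}$ in $X$ then follows from naturality of $\lambda$, $\tau$ and $\eta^{\free{H}}$ together with the uniqueness clause of initiality.

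The four axioms are then verified one at a time. The $\eta^{\free{H}}$-axiom $\lambda^{\star}_X \after \eta^{\free{H}}_{MX} = M\eta^{\free{H}}_X$ holds immediately by construction, since $\eta^{\free{H}}_{MX}$ is the right injection into the initial algebra $\free{H}MX$. The two $M$-axioms (for $\eta^M$ and $\mu^M$) reduce to the corresponding axioms for $\lambda$ by rewriting each side as an algebra homomorphism out of an appropriate initial algebra (e.g.\ the initial $(H(-) + M^2 X)$-algebra $\free{H}M^2X$ for the $\mu^M$-axiom) and invoking uniqueness. The main obstacle is the $\mu^{\free{H}}$-axiom
\[
\lambda^{\star}_X \after \mu^{\free{H}}_{MX} \;=\; M\mu^{\free{H}}_X \after \lambda^{\star}_{\free{H}X} \after \free{H}\lambda^{\star}_X,
\]
where both sides are morphisms $\free{H}\free{H}MX \to M\free{H}X$. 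To prove them equal I would exhibit both as $H$-algebra homomorphisms from $(\free{H}\free{H}MX,\, \tau_{\free{H}MX})$ to $(M\free{H}X,\, M\tau_X \after \lambda_{\free{H}X})$, and check that they agree after precomposition with the unit $\eta^{\free{H}}_{\free{H}MX}$; the characterisation of $\mu^{\free{H}}$ as the unique such $H$-algebra homomorphism (Example~\ref{ex:mnds}.\ref{pt:freemonad}) then forces the equality.

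For (ii), I would apply Proposition~\ref{prop:liftinginitialalgebra} to the functor $H(-) + X \colon \C \to \C$. Because $\J$ preserves coproducts, the lifting of this functor to $\Kl(M)$ is $\lift{H}(-) + X$, and the initial $(H(-)+X)$-algebra $[\tau_X, \eta^{\free{H}}_X] \colon H\free{H}X + X \to \free{H}X$ in $\C$ lifts under $\J$ to an initial $(\lift{H}(-)+X)$-algebra in $\Kl(M)$ with the same carrier $\free{H}X$. This simultaneously shows that $\free{\lift{H}}$ exists in $\Kl(M)$ and that its carrier, unit, and $\lift{H}$-algebra structure coincide (under $\J$) with those of the just-constructed $\lift{\free{H}}$. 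By the uniqueness of free monads on $\lift{H}$, we conclude $\lift{\free{H}} = \free{\lift{H}}$.
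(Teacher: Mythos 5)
Your proposal is correct and follows essentially the same route as the paper: you define the distributive law $\free{H}M \To M\free{H}$ by initiality of $\free{H}MX$ as an $(H(-)+MX)$-algebra, equipping $M\free{H}X$ with the structure $[M\tau_X \after \lambda_{\free{H}X},\, M\eta^{\free{H}}_X]$, and you identify $\lift{\free{H}}$ with $\free{\lift{H}}$ via Proposition~\ref{prop:liftinginitialalgebra} applied to $H(-)+X$, exactly as the paper does. You in fact give more detail than the paper on verifying the distributive-law axioms, which the paper dispatches with ``by diagram chasing.''
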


Recall from~\cite{HasuoJS:07} that for every polynomial endofunctor
$H$ on $\Sets$ there exists a canonical distributive law of $H$ over
any \emph{commutative} monad $M$ (equivalently, a canonical lifting of
$H$ to $\Kl(M)$); this result was later extended to so-called analytic
endofunctors of $\Sets$ (see~\cite{mps:09}). This can be used in our
applications since the power-set functor $\Pow$ is commutative, and so
is the exception monad $\mathcal E$ iff $E = 1$.

\subsection{$\cppo$-enriched categories}
\label{sec:cppo}

For our general theory we are going to assume that we work in a
category where the hom-sets carry a cpo structure. Recall that a
\emph{cpo} is a partially ordered set in which all $\omega$-chains
have a join. A cpo with bottom is a cpo with a least element $\bot$. A
function between cpos is called \emph{continuous} if it preserves
joins of $\omega$-chains. Cpos with bottom and continuous maps form a category
that we denote by $\cppo$.

A \emph{\cppo-enriched category} $\cat{C}$ is a category where (a)
each hom-set $\cat{C}(X,Y)$ is a cpo %(with the ordering denoted by
% $\sqsubseteq$, say)
with a bottom element $\bot_{X,Y}\colon X \to Y$ and
(b) composition is continuous, that is:
$$ g \circ \left(\bigsqcup_{n< \omega} f_n\right) = \bigsqcup_{n< \omega} (g \circ f_n) \qquad \text{and}\qquad \left(\bigsqcup_{n < \omega} f_n\right) \circ g = \bigsqcup_{n< \omega} (f_n \circ g)\text{.}$$
The composition is called \emph{left strict} if $\bot_{Y,Z} \circ f
= \bot_{X,Z}$ for all arrows $f\colon X \to Y$
%(resp.~\emph{right  strict} if $f \circ \bot_{Z,X} = \bot_{Z,Y}$)
.

In our applications, $\cat C$ will mostly be a Kleisli category for a
monad on $\Sets$. Throughout this subsection we assume
that $\cat C$ is a $\cppo$-enriched category.

An endofunctor $H\colon \cat C \to \cat C $ is said to be \emph{locally
  continuous} if for any $\omega$-chain $f_n\colon X \to
Y$, $n < \omega$ in $\cat C(X,Y)$ we have:
\[
H\left(\bigsqcup_{n< \omega} f_n\right)= \bigsqcup_{n< \omega}H(f_n)\text{.}
\]

We are going to make use of the fact that a locally continuous endofunctor
$H$ on $\cat C$ has a \emph{canonical fixpoint}, i.e. whenever its
initial algebra exists it is also its final coalgebra:

\begin{theorem}[\cite{Freyd}]\label{thm:Freyd}
 Let $H \colon \cat{C} \to \cat{C}$ be a locally continuous endofunctor on the \cppo-enriched category $\cat{C}$ whose composition is left-strict.
 If an initial $H$-algebra $\iota \colon HI \congrightarrow I$ exists,
 then $\iota^{-1} \colon I \congrightarrow HI$ is a final $H$-coalgebra.
\end{theorem}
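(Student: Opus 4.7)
The plan is to construct, for any $H$-coalgebra $c\colon X \to HX$, a morphism $h\colon X \to I$ satisfying $\iota^{-1} \circ h = Hh \circ c$, and to show it is unique. I will split the argument into existence (via a Kleene-style least fixed point) and uniqueness (reducing to initiality of $(I,\iota)$ as an $H$-algebra and exploiting left-strictness).

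For existence, define $\Phi\colon \cat{C}(X,I) \to \cat{C}(X,I)$ by $\Phi(f) = \iota \circ Hf \circ c$; its fixed points are precisely the coalgebra morphisms $(X,c) \to (I,\iota^{-1})$. Since $\cat{C}(X,I)$ is a cpo with bottom, $H$ is locally continuous, and composition in $\cat{C}$ is continuous, $\Phi$ is Scott-continuous. Kleene's theorem then supplies a fixed point $h = \bigsqcup_{n < \omega} h^{(n)}$ where $h^{(0)} = \bot_{X,I}$ and $h^{(n+1)} = \Phi(h^{(n)})$, which is the required coalgebra morphism.

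For uniqueness, I would first handle the special case $(X,c) = (I,\iota^{-1})$: any $f\colon I \to I$ with $\iota^{-1} \circ f = Hf \circ \iota^{-1}$ satisfies $f \circ \iota = \iota \circ Hf$, making $f$ an algebra endomorphism of $(HI,\iota)$, so $f = \id_I$ by initiality. Consequently, the Kleene approximation $h_0^{(0)} = \bot_{I,I}$, $h_0^{(n+1)} = \iota \circ Hh_0^{(n)} \circ \iota^{-1}$ satisfies $\bigsqcup_n h_0^{(n)} = \id_I$.

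Now, given an arbitrary coalgebra morphism $g\colon X \to I$, I would prove by induction that $h^{(n)} = h_0^{(n)} \circ g$ for every $n$. The base case is precisely where left-strictness is needed: $h_0^{(0)} \circ g = \bot_{I,I} \circ g = \bot_{X,I} = h^{(0)}$. The inductive step uses $\iota^{-1} \circ g = Hg \circ c$ together with functoriality of $H$, giving $h_0^{(n+1)} \circ g = \iota \circ Hh_0^{(n)} \circ \iota^{-1} \circ g = \iota \circ H(h_0^{(n)} \circ g) \circ c = h^{(n+1)}$. Taking joins and invoking continuity of composition yields $h = \bigsqcup_n h^{(n)} = (\bigsqcup_n h_0^{(n)}) \circ g = \id_I \circ g = g$. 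I expect the main obstacle to be exactly this reduction: without left-strictness the two Kleene sequences could not be aligned at the bottom, so the proof would fail to show that arbitrary coalgebra morphisms coincide with the least fixed point constructed from $\bot$.
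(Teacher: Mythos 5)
Your proof is correct. Note first that the paper does not actually prove this statement: it is imported verbatim from Freyd's work on algebraically compact categories, and the appendix explicitly defers to the cited literature, so there is no in-paper argument to compare against. What you have written is the standard proof of the initial-algebra/final-coalgebra coincidence in the $\cppo$-enriched setting, and every step checks out: the fixed points of $\Phi(f) = \iota \circ Hf \circ c$ are exactly the coalgebra morphisms $(X,c) \to (I,\iota^{-1})$; Scott-continuity of $\Phi$ follows from local continuity of $H$ together with continuity of composition; the special case $f \circ \iota = \iota \circ Hf \Rightarrow f = \id_I$ is a correct use of initiality; and the induction $h^{(n)} = h_0^{(n)} \circ g$ pins down precisely where left-strictness is indispensable, namely in aligning the bottoms of the two Kleene chains. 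One pleasant side effect of your argument, worth making explicit, is that it shows the unique coalgebra morphism into $(I, \iota^{-1})$ \emph{is} the least fixed point of $\Phi$ --- a fact the paper itself relies on later (in the proof of Proposition~\ref{prop:lfp=cfpsolution}, where $\toFinal{e}$ is computed as an $\omega$-join of approximants starting from $\bot$), so your proof dovetails with how the theorem is actually used downstream.
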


In the sequel, we will be interested in free algebras for a
functor $H$ on $\cat C$ and the free monad $\free{H}$
(cf.~Example~\ref{ex:mnds}.\ref{pt:freemonad}). For this observe that coproducts in $\C$ are always $\cppo$-enriched, i.e.~all copairing maps $[-,-]:\cat C(X,Y) \times \cat C(X', Y) \to \cat C(X+X',Y)$ are continuous; in fact, it is easy to show that this map is continuous in both of its arguments using that composition with the coproduct injections is continuous.

\newcommand{\lemHstar}{  Let $\cat C$ be $\cppo$-enriched with composition left-strict.
  Furthermore, let $H: \cat C \to \cat C$ be locally continuous and assume that all
  free $H$-algebras exist. Then the free monad $\free{H}$ is locally continuous.
}

\begin{proposition}
  \label{lem:Hstar}
\lemHstar
  \end{proposition}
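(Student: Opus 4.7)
The plan is to exploit the universal property characterising the functorial action of $\free{H}$ on morphisms and to propagate joins of $\omega$-chains through it. Recall that for $f \colon X \to Y$ in $\cat{C}$, the morphism $\free{H}f \colon \free{H}X \to \free{H}Y$ is the unique $H$-algebra homomorphism from $(\free{H}X, \tau_X)$ to $(\free{H}Y, \tau_Y)$ extending $\eta_Y \circ f$; equivalently, $\free{H}f$ is the unique $g \colon \free{H}X \to \free{H}Y$ satisfying
\begin{equation*}
g \circ \tau_X = \tau_Y \circ Hg
\qquad\text{and}\qquad
g \circ \eta_X = \eta_Y \circ f.
\end{equation*}

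Given an $\omega$-chain $(f_n)_{n<\omega}$ in $\cat{C}(X,Y)$ with join $f = \bigsqcup_n f_n$, I would set $g_n := \free{H}f_n$ and form their join $g := \bigsqcup_n g_n$ in $\cat{C}(\free{H}X,\free{H}Y)$. The aim is to verify that $g$ itself satisfies the two defining equations with respect to $f$, so that uniqueness forces $g = \free{H}f$, which is precisely local continuity of $\free{H}$ on the chain $(f_n)$.

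For the algebra-homomorphism equation, continuity of composition on both sides together with local continuity of $H$ yields
\begin{equation*}
g \circ \tau_X \;=\; \bigsqcup_n (g_n \circ \tau_X) \;=\; \bigsqcup_n (\tau_Y \circ Hg_n) \;=\; \tau_Y \circ H\bigl(\bigsqcup_n g_n\bigr) \;=\; \tau_Y \circ Hg.
\end{equation*}
For the extension condition, continuity of composition alone suffices:
\begin{equation*}
g \circ \eta_X \;=\; \bigsqcup_n (g_n \circ \eta_X) \;=\; \bigsqcup_n (\eta_Y \circ f_n) \;=\; \eta_Y \circ f.
\end{equation*}
By uniqueness $g = \free{H}f$, that is, $\free{H}\bigl(\bigsqcup_n f_n\bigr) = \bigsqcup_n \free{H}f_n$, so $\free{H}$ is locally continuous.

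There is no real obstacle in this argument: it rests entirely on the universal property of each free $H$-algebra $\free{H}X$ together with the two $\cppo$-enrichment hypotheses (continuity of composition in each argument and local continuity of $H$). Note that left-strictness of composition is not actually invoked in this step; it enters the development elsewhere, in particular in Theorem~\ref{thm:Freyd} where it is needed to identify initial algebra and final coalgebra.
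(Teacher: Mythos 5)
Your proof is correct and follows essentially the same route as the paper: both characterize $\free{H}f$ uniquely via the (initial-algebra) universal property of $\free{H}X$ and then show that $\bigsqcup_n \free{H}f_n$ satisfies the same characterizing equations for $\bigsqcup_n f_n$, using continuity of composition and local continuity of $H$; the paper merely packages your two equations into a single continuous operator $\Phi(f,h) = \alpha_Y \circ (Hh + f) \circ \alpha_X^{-1}$ with $\alpha_X = [\tau_X,\eta_X]$ and argues via uniqueness of its fixpoint. Your closing observation that left-strictness is not used here is also consistent with the paper's argument.
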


\subsection{Final Coalgebras in Kleisli categories}\label{SSec:Coalg}

In our applications the $\cppo$-enriched
category will be the Kleisli category $\cat C = \Kl(M)$ of a
monad on $\Sets$ and the endofunctors of interest are liftings
$\lift{H}$ of endofunctors $H$ on $\Sets$. It is known that in this
setting a final coalgebra  for the lifting $\lift H$ can be obtained
as a lifting of an initial $H$-algebra (see Hasuo et
al.~\cite{HasuoJS:07}).
 The following result is a variation of Theorem~3.3 in~\cite{HasuoJS:07}:

\begin{theorem}\label{thm:HJS}
Let $M\colon \Sets \to \Sets$ be a monad and $H\colon \Sets \to \Sets$ be a functor such that
\begin{enumerate}[(a)]
 \item $\Kl(M)$ is \cppo-enriched with composition left strict;
 \item $H$ is accessible (i.e., $H$ preserves $\lambda$-filtered colimits for some cardinal $\lambda$) and has a lifting $\lift{H} \colon \Kl(M) \to \Kl(M)$ which is
   locally continuous.
\end{enumerate}
If $\iota \colon H\carrier \congrightarrow \carrier$ is the initial algebra for the functor $H$,
then
\begin{enumerate}
 \item $\J\iota\colon \lift{H}\carrier \to \carrier$ is the initial algebra for the functor $\lift{H}$;
 \item $\J\iota^{-1}\colon \carrier \to \lift{H}\carrier$ is the final coalgebra for the functor $\lift{H}$.
\end{enumerate}
\end{theorem}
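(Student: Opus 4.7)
The plan is to derive both claims by combining two results already established in the excerpt: Proposition~\ref{prop:liftinginitialalgebra} (which lifts initial algebras from $\cat C$ to $\Kl(M)$) and Freyd's theorem (Theorem~\ref{thm:Freyd}) (which turns an initial algebra into a final coalgebra in a $\cppo$-enriched setting). Because all technical content has been packaged into those two results, the argument will be almost entirely a verification of hypotheses.

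For Part~(1), I would simply invoke Proposition~\ref{prop:liftinginitialalgebra} applied to the functor $H\colon\Sets\to\Sets$ and its lifting $\lift{H}\colon\Kl(M)\to\Kl(M)$, which exists by hypothesis~(b). Since $\iota\colon H\carrier\congrightarrow\carrier$ is given as the initial $H$-algebra in $\Sets$, the proposition yields that $\J\iota\colon\lift{H}\carrier\to\carrier$ is an initial algebra for $\lift{H}$ in $\Kl(M)$.

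For Part~(2), the plan is to apply Freyd's theorem to the endofunctor $\lift H$ on $\Kl(M)$. I would verify the three hypotheses in turn: the $\cppo$-enrichment and left-strictness of $\Kl(M)$ are immediate from assumption~(a); the local continuity of $\lift H$ is precisely the second part of assumption~(b); and the existence of an initial $\lift H$-algebra is exactly Part~(1). By Lambek's lemma the initial algebra $\iota$ is an isomorphism in $\Sets$, so since $\J$ preserves isomorphisms the structure map $\J\iota$ is an iso in $\Kl(M)$ with inverse $(\J\iota)^{-1}=\J(\iota^{-1})$. Theorem~\ref{thm:Freyd} then delivers that $\J(\iota^{-1})\colon\carrier\to\lift H\carrier$ is a final $\lift H$-coalgebra, which is the required conclusion.

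There is no real obstacle: the substantive work lives in Proposition~\ref{prop:liftinginitialalgebra} and Theorem~\ref{thm:Freyd}. The only point that merits brief comment is why the hypothesis that $H$ is accessible appears at all, since the above argument does not use it; it is retained because the original proof of Theorem~3.3 of~\cite{HasuoJS:07} builds the final coalgebra as a limit over an ordinal chain and needs accessibility of $M$ (and hence of $H$) to take that limit, whereas the present $\cppo$-enriched route bypasses such a construction via Freyd's theorem.
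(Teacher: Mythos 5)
Your proof is correct and follows exactly the paper's own argument: part~(1) is Proposition~\ref{prop:liftinginitialalgebra} and part~(2) is Theorem~\ref{thm:Freyd} applied to $\lift{H}$, with the hypothesis checks you give. One small correction to your closing remark: the paper retains accessibility not out of fidelity to the original proof in~\cite{HasuoJS:07}, but because it guarantees the existence of all free algebras for $H$ and $\lift{H}$ (hence of initial $\free{\lift{H}}(\Id+Y)$-algebras), which is needed later when applying the framework of Section~\ref{sec:Theory}.
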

The first item follows from Proposition~\ref{prop:liftinginitialalgebra} and the second one follows
from Theorem~\ref{thm:Freyd}. There are two differences with Theorem~3.3 in~\cite{HasuoJS:07}:
\begin{enumerate}[(1)]
\item The functor $H\colon \Sets \to \Sets$ is supposed to preserve $\omega$-colimits rather that being accessible. We use the assumption of accessibility because it guarantees the existence of all free algebras for $H$ and for $\lift{H}$, which implies also that for all $Y \in \Kl(M)$ an initial $\free{\lift{H}}(\Id+Y)$-algebra exists. This property of $\free{\lift{H}}$ will be needed for applying our framework of Section \ref{sec:Theory}.

\item We assume that the lifting $\lift{H} \colon \Kl(M) \to \Kl(M)$ is locally continuous rather than locally monotone. We will need continuity to ensure the double dagger law in Remark~\ref{rmk:doubledagger}. This assumption is not really restrictive since, as explained in Section 3.3.1 of~\cite{HasuoJS:07}, in all the meaningful examples where $\lift{H}$ is locally monotone, it is also locally continuous.
\end{enumerate}

\begin{example}[NDA]\label{ex:NDA} Consider the powerset monad $\Pow$ (Example \ref{ex:mnds}.\ref{pt:powersetmonad}) and the functor $HX = A \times X + 1$ on $\Sets$ (with $1=\{\checkmark\}$). The functor $H$ lifts to $\lift{H}$ on $\Kl(\Pow)$ as follows: for any $f \colon X \to Y$ in $\Kl(\Pow)$ (that is $f \colon X \to \Pow(Y)$ in $\Sets$), $\lift{H}f \colon A\times X + 1 \to A \times Y +1 $ is given by $\lift{H}f( \checkmark ) = \{\checkmark\}$ and $\lift{H}f(\<a,x\>) = \{\<a,y\> \mid y \in f( x )\}$.

  Non-deterministic automata (NDA) over the input alphabet $A$ can be
regarded as coalgebras for the functor
$\lift{H}\colon \Kl(\Pow) \to \Kl(\Pow)$. Consider, on the left, a 3-state NDA, where the only final state is marked by a double circle.
\[
\begin{array}{@{}l@{\qquad}l}
  \xymatrix{
    \state{1} \ar@(ld,ul)[]^{a,b} \ar[r]^-b & \state{2} \ar@<3pt>[r]^{b}
    \ar@(lu,ru)[]^a &
    \fstate{3} \ar@<3pt>[l]^{a} \ar@(ur,dr)[]^b
  } & {\small
  \begin{array}{l}
 X=\{1,2,3\} \quad A = \{a,b\}\\
e(1)=\{ \<a,1\>,
\<b,1\>, \<b,2\>\} \\
e(2) = \{\<a,2\>, \<b,3\> \}\quad
e(3)=\{\checkmark, \<a,2\>,\<b,3\>\}
  \end{array}}
  \end{array}
  \]
  It can be represented as a coalgebra $e \colon X \to
\lift{H}X$, that is a function $e \colon X \to \Pow ( A
\times X +1 )$, given above on the right, which assigns to each state $x\in X$ a set which: contains
$\checkmark$ if $x$ is final; and $\<a,y\>$ for
all transitions $x\xrightarrow{a} y$.
\end{example}

It is easy to see that $M=\Pow$ and $H$ above satisfy the conditions of Theorem \ref{thm:HJS} and therefore both the final $\lift{H}$-coalgebra and the initial $\lift{H}$-algebra are the lifting of the initial algebra for the functor $H X=A \times X +1$, given by $A^*$ with structure $\iota \colon A \times A^{*} +1 \to A^{*}$ which maps $\<a,w\>$ to $aw$ and $\checkmark$ to $\epsilon$.

For an NDA $(X,e)$, the final coalgebra homomorphism $\toFinal{e} \colon X \to A^*$ is the function $X \to \Pow A^*$ that maps every state in $X$ to the language that it accepts. In $\Kl(\Pow)$:
$$
\xymatrix@C=4cm@R=0.5cm{X \ar[dd]_e \ar@{}[ddrr]|{\small
\begin{array}{rcl}
\epsilon \in \toFinal{e}(x) & \Leftrightarrow & \checkmark \in e(x)\\
aw \in \toFinal{e}(x) & \Leftrightarrow & \text{ for some } y\in X, \, (a,y)\in e(x) \text{ and } w\in \toFinal{e}(y) \\
\end{array}} \ar@{-->}[rr]^{\toFinal{e}} & & A^{*} \ar[dd]^{\J\iota^{-1}}\\
\\
A\times X +1 \ar@{-->}[rr]_{A \times \toFinal{e}+1} & & A \times A^* +1
}
$$

\subsection{Monads with Fixpoint Operators}\label{Sec:Elgot}

In order to develop our theory of systems with internal behavior, we will adopt an equational perspective on coalgebras. In the sequel we recall some preliminaries on this viewpoint.

Let $T: \C \to \C$ be a monad on any category $\C$. Any morphism $e: X \to T(X+Y)$ (i.e.~a coalgebra for the functor
$T(\Id+Y)$) may be understood as a system of mutually recursive equations. In our
applications we are interested in the case where $\C = \Kl(M)$ and
 $T = \lift{\free{H}}$ is a (lifted) free monad. As in the example of NDA (Example \ref{ex:NDA}) take $M =
\Pow$ and $HX = 1 + A \times X$. Now, set $TX = A^{*}
+ A^{*} \times X$ and consider the following system of mutually
recursive equations
\[
  x_0  \approx  \{ c, (ab, x_1) \}, \qquad \quad x_1  \approx  \{ d, (a, x_0), (\eps, y) \},
\]
where $x_0, x_1 \in X$ are \emph{recursion variables}, $y \in Y$ is a \emph{parameter} and $a, b, c, d \in A$. A \emph{solution} assigns to each
of the two variables $x_0, x_1$ an element of $\Pow(TY)$ such that
the formal equations $\approx$ become actual identities in $\Kl(\Pow)$:
\[
x_0 \mapsto \{ (aba)^*c, (aba)^*abd, ((aba)^*ab, y)\},
\quad
x_1 \mapsto \{ (aab)^*d, (aab)^*ac, ((aab)^*, y)\}.
\]
Observe that the above system of equations corresponds to an \emph{equation morphism} $e \colon X \to T(X+Y)$ and the solution to a morphism $\gensol e \colon X \to TY$, both in $\Kl(M)$. The property that $\gensol e$ is a solution for $e$ is expressed by the following equation in $\Kl(M)$:
% square in $\Kl(M)$:
\begin{equation}
\label{diag:fixp}
\gensol e = (\xymatrix@1{
  X \ar[r]^-e
  &
  T(X+Y) \ar[rr]^-{T[\gensol e, \eta^T_Y]}
  &&
  TTY
  \ar[r]^-{\mu^T_Y}
  &
  TY
  }).
\end{equation}
So $e \mapsto \gensol e$ is a \emph{parametrized fixpoint operator}, i.e.~a family of fixpoint operators indexed by parameter sets $Y$.
\begin{remark} \label{rmk:doubledagger} In our
applications we shall need a certain equational property of the operator $e \mapsto \gensol e$: for all $Y \in \C$ and equation morphism $e: X \to T(X+X+Y)$,
the following equation, called \emph{double dagger law}, holds:
\[
e^{\dagger\dagger} = (\xymatrix@1{
  X \ar[r]^-e & T(X+X+Y) \ar[rr]^-{T(\nabla_X + Y)} && T(X+Y)
})^\dagger.
\]
This and other laws of parametrized fixpoint operators have been
studied by Bloom and \'Esik in the context of \emph{iteration
  theories}~\cite{be93}. %---in fact, the law above is one of the axioms of iteration theories.
A closely related notion is that of \emph{Elgot monads}~\cite{amv10,amv11}.
%
%The equation above is usually referred to as the \emph{double dagger law}, justified by the fact that $\bullet$ is standardly denoted by $\dagger$.
%
\end{remark}
\begin{example}[Least fixpoint solutions]\label{ex:LfpSolCPO}
  Let $T: \C \to \C$ be a locally continuous monad on the \cppo-enriched category $\C$. Then $T$ is equipped wi th a parametrized fixpoint operator obtained by taking least fixpoints: given a morphism $e: X \to
  T(X+Y)$ consider the function $\Phi_e$ on $\C(X,TY)$ given by $\Phi_e(s) = \mu^T_Y \circ T[s,\eta^T_Y] \circ e$. Then $\Phi_e$ is continuous and we take $\lsol e$ to be the least fixpoint of $\Phi_e$. Since $\lsol e= \Phi_e(\lsol e)$, equation~\refeq{diag:fixp} holds, and it follows from the argument in Theorem~8.2.15 and Exercise~8.2.17 in~\cite{be93} that the operator $e \mapsto \lsol e$ satisfies the axioms of iteration theories (or Elgot monads, respectively). In particular the double dagger law holds for the least fixpoint operator $e \mapsto \lsol e$.
\end{example}

\section{Motivating examples}\label{SSec:Mot}

The work of~\cite{HasuoJS:07} bridged a gap in the theory of coalgebras: for certain functors, taking the final coalgebra directly in $\Sets$ does not give the right notion of equivalence. For instance, for NDA, one would obtain bisimilarity instead of language equivalence. The change to Kleisli categories allowed the recovery of the usual language semantics for NDA and, more generally, led to the development of \emph{coalgebraic trace semantics}.
%In automata theory, and more generally in the theory of dynamical/concurrent systems, it is usual to have different types of systems recognizing the same class of behaviors. The changes in the type of the system lead to more powerful representation tools (more compact or suitable for algorithms) but do not increase their expressivity. The most basic example is the use of the following acceptors when studying regular languages: deterministic finite automata (DFA), NFA, NFA with $\epsilon$-transitions. $\epsilon$-transitions can be regarded in two ways: as internal computations or as computations that are {\em neutral} for the operation of combining transitions (concatenation). This extra algebraic structure of the label set is an important feature of dynamical systems that is however not addressed by trace semantics. Another example where this algebraic structure plays a role is in the construction of a regular expression from an NFA. One method, as described in the introduction, involves successively transforming the NFA into an automaton where the transitions are labeled by languages. There are many more examples from automata and concurrency theory where the algebraic structure of the labels plays a key role.

In the Introduction we argued that there are relevant examples for which this approach still yields the unwanted notion of equivalence, the problem being that it does not consider the extra algebraic structure on the label set. In the sequel, we motivate the reader for the generic theory we will develop by detailing two case studies in which this phenomenon can be observed: NDA with $\epsilon$-transitions and NDA with word transitions. Later on, in Example \ref{Sec:MazurTraces}, we will also consider Mazurkiewicz traces~\cite{Mazurkiewicz77}.

\paragraph{NDA with $\epsilon$-transitions.} In the world of automata, $\epsilon$-transitions are considered in order to enable easy composition of automata and compact representations of languages. These transitions are to be interpreted as the empty word when computing the language accepted by a state. Consider, on the left, the following simple example of an NDA with $\epsilon$-transitions, where states $x$ and $y$ just make $\epsilon$ transitions. The intended semantics in this example is that  all states accept words in $a^*$.

$$
\begin{array}{cll}
  \xymatrix{
    \state{x} \ar[r]^{\epsilon}  & \state{y} \ar[r]^{\epsilon} & \fstate{z} \ar@(ur,dr)[]^{a} }
 &\qquad
\begin{array}{rcl}
 e(x) &=& \{(\epsilon, y)\} \\
 e(y) &=& \{(\epsilon, z)\} \\
 e(z) &=& \{(a,z), \checkmark \}
\end{array}\qquad
\begin{array}{rcl}
 \toFinal{e}(x) &=& \epsilon \epsilon a^{*} \\
 \toFinal{e}(y) &=& \epsilon a^{*} \\
 \toFinal{e}(z) &=&  a^{*}
\end{array}

\end{array}
$$
Note that, more explicitly, these are just NDA where the alphabet has a distinguished symbol $\epsilon$. So, they are coalgebras for the functor $\lift{H + \Id}\colon \Kl(\Pow) \to \Kl(\Pow)$ (where $H$ is the functor of Example~\ref{ex:NDA}), i.e.~functions $e \colon X \to \Pow( (A \times X + 1) +X) \cong \Pow( (A+1) \times X + 1)$, as made explicit for the above automaton in the middle.

The final coalgebra for $\lift{H+\Id}$ is simply $(A+1)^{*}$ and the final map $\toFinal{e} \colon X \to (A+1)^{*}$ assigns to each state the language in $(A+1)^{*}$ that it accepts. However, the equivalence induced by $\toFinal{e}$ is too fine grained: for the automata above, $\toFinal{e}$ maps $x$, $y$ and $z$ to three different languages (on the right), where the number of $\epsilon$ plays an explicit role, but the intended semantics should disregard $\epsilon$'s.

%This phenomenon is not only restricted to $\epsilon$-transitions. We will briefly discuss one more motivating example: where the transitions are labelled by words in $A^*$. (This is a variation on the motivating example of the introduction where labels were languages. More generally, one could consider labels from a monoid.) In the sequel (Example \ref{Sec:MazurTraces}) we will also consider Mazurkiewicz traces~\cite{Mazurkiewicz77}.

\paragraph{NDA with word transitions.} This is a variation on the motivating example of the introduction: instead of languages, transitions are labeled by words\footnote{More generally, one could consider labels from an arbitrary monoid.}. Formally, consider again the functor $H$ from Example~\ref{ex:NDA}. Then NDA with word transitions are coalgebras for the functor $\lift{\free{H}} \colon \Kl(\Pow) \to \Kl(\Pow)$, that is, functions $e \colon X \to \Pow(A^{*} \times X + A^{*})\cong \Pow(A^{*}\times (X+1))$. We observe that they are like NDA  but (1) transitions are labeled by words in $A^{*}$, rather than just symbols of the alphabet $A$, and (2) states have associated output languages, rather than just $\checkmark$. We will draw them as ordinary automata plus an arrow $\stackrel{L}{\To}$ to denote the output language of a state (no $\To$ stands for the empty language). For an example, consider the following word automaton and associated transition function $e$.
$$\begin{array}{cll}
\xymatrix@R=.3cm{
    \state{x} \ar[r]^{a}  & \state{y} \ar[r]^{b} & \state{z} \ar@{=>}[d]^{\{c\}} \\
    \state{u} \ar[r]^{\epsilon}  & \state{v} \ar[ru]|{ab} &  \\}
  &\qquad
\begin{array}{l}
 e(x) = \{(a,y)\} \quad
 e(y) = \{(b,z)\} \quad
 e(z) = \{c\}\\
 e(u) = \{(\epsilon,v)\}\quad
  e(v) = \{(ab,z)\}
 \end{array}
\end{array}
$$
The semantics of NDA with word transitions is given by languages over $A$, obtained by concatenating the words in the transitions and ending with a word from the output language. For instance, $x$ above accepts word $abc$ but not $ab$.

However, if we consider the final coalgebra semantics we again have a mismatch.
The initial $\free{H}$-algebra has carrier  $(A^{*})^{*} \times A^{*}$ that can be represented as the set of non-empty lists of words over $A^*$, where $(A^{*})^{*}$ indicates possibly empty lists of words. Its structure $\iota \colon A^{*} \times ((A^{*})^{*} \times A^{*}) + A^{*} \to (A^{*})^{*} \times A^{*}$ maps $w$ into $(\tnil,w)$ and $(w',(l,w))$ into $(w'::l,w)$. Here, we use $\tnil$ to denote the empty list and $::$ is the append operation. By Theorem \ref{thm:HJS},
the final $\lift{\free{H}}$-coalgebra has the same carrier and structure $\J\iota^{-1}$. The final map, as a function $\toFinal{e} \colon X \to \Pow((A^{*})^{*} \times A^{*})$, is then defined by commutativity of the following square (in $\Kl(\Pow)$):
\begin{equation}
\label{eq:finalmap_wordaut}
\vcenter{
\small
\xymatrix@R=17pt@C=2.7cm{
X \ar@{-->}[rr]^{\toFinal{e}} \ar[dd]_{e} \ar@{}[rrdd]|{\small \begin{array}{rcl}
(\tnil, w) \in \toFinal{e}(x) & \Leftrightarrow & w \in e(x)  \\
(w::l,w') \in \toFinal{e}(x) & \Leftrightarrow & \exists_y\ (w,y) \in e(x) \text{ and } (l,w') \in \toFinal{e}(y).
\end{array}
}
 & &
(A^{*})^{*} \times A^{*} \ar[dd]^{\J\iota^{-1}} \\
 &  & \\
A^{*} \times X+A^{*} \ar@{-->}[rr]_{\id_{A^{*}} \times \toFinal{e} + \id_{A^{*}}} & & A^{*} \times ((A^{*})^{*} \times A^{*}) + A^{*}
}
}
\end{equation}

Once more, the semantics given by $\toFinal{e}$ is too fine grained: in the above example, $\toFinal{e}(x)=\{([a,b],c)\}$ and $\toFinal{e}(u)=\{([\epsilon,ab],c)\}$ whereas the intended semantics would equate both $x$ and $u$, since they both accept the language $\{abc\}$.

\medskip

Note that any NDA can be regarded as word automaton. Recall the natural transformation $\kappa \colon \lift{H} \To \lift{\free{H}}$ defined in Example \ref{ex:mnds}.\ref{pt:freemonad}: for the functor $\lift{H}$ of NDA, $$\kappa_X \colon A \times X +1 \to A^{*}\times X + A^{*}$$ maps any pair $(a,x)\in A\times X$ into $\{(a,x)\}\in \Pow(A^{*}\times X + A^{*})$ and $\checkmark \in 1$ into $\{ \epsilon \} \in \Pow(A^{*}\times X + A^{*})$. Composing an NDA $e \colon X \to \lift{H} X$ with $\kappa_X\colon \lift{H} X \to \lift{\free{H}} X$, one obtains the word automaton $\kappa_X \circ e$.

In the same way, every NDA with $\epsilon$-transitions can also be seen as a word automaton by postcomposing with the natural transformation $[\kappa, \eta] \colon \lift{H+\Id} \To \lift{\free{H}}$. Here, $\eta\colon \Id \To \lift{\free{H}}$ is the unit of the free monad
$\lift{\free{H}}$ defined on a given set $X$ below (the multiplication $\mu\colon \lift{\free{H}}\lift{\free{H}} \To \lift{\free{H}}$ is shown on the right).
$$
\begin{array}{l@{\qquad}l}
\eta_X \colon   X  \to      A^{*} \times X+A^{*}& \mu_X \colon   A^{*} \times (( A^{*} \times X  + A^{*} ) +  A^*  \to A^{*} \times X +  A^{*}\\
                x  \mapsto  \{(\epsilon, x)\}&               (w, (w',x))  \mapsto  \{(w \cdot w', x) \}\quad
               (w,  w')     \mapsto   \{ w\cdot w' \}\\
&  w \mapsto \{w\}
\end{array}
$$
In the next section, we propose to define the semantics of $\lift{\free{H}}$-coalgebras via a canonical fixpoint operator rather than with the final map which as we saw above might yield unwanted semantics. Then, using the observation above, the semantics of $\lift{H}$-coalgebras and $\lift{H+\Id}$-coalgebras will be defined by embedding them into $\lift{\free{H}}$-coalgebras via the natural transformations $\kappa$ and $[\kappa,\eta]$ described above.
\section{Canonical Fixpoint Solutions}\label{sec:Theory}
In this section we lay the foundations of our approach. A construction is introduced assigning canonical solutions to coalgebras seen as equation morphisms (\emph{cf.} Section \ref{Sec:Elgot}) in a \cppo-enriched setting. We will be working under the following assumptions.
\begin{assumption}\label{ass:CPPOenrichTcont}
  Let $\C$ be a \cppo-enriched category with coproducts and  composition left-strict.
  Let $T$ be a locally continuous monad on $\C$ such
  that, for all object $Y$, an initial algebra for $T(\Id +Y)$ exists.
\end{assumption}

As seen in Example \ref{ex:LfpSolCPO}, in this setting an equation morphism $e: X \to T(X+Y)$ may be given the least solution. Here, we take  a different approach, exploiting the initial algebra-final coalgebra coincidence of Theorem~\ref{thm:Freyd}.

%\begin{lemma}[\cite{SmythP82}] \label{lem:InitFinalCoincidence} Let $\C$ and $T$ be as in Assumption \ref{ass:CPPOenrichTcont}. For all object $Y \in \C$, the initial $T(\Id +Y)$-algebra $\omega_Y : T(\Omega_Y + Y)\xrightarrow{\cong} \Omega_Y$ yields a final $T(\Id +Y)$-coalgebra $\omega_Y^{-1} : \Omega_Y \xrightarrow{\cong} T(\Omega_Y + Y)$.\end{lemma}

For every parameter object $Y \in \C$, the endofunctor $T(\Id +Y)$ is
a locally continuous monad because it is the composition of $T$ with the (locally continuous) exception monad $\Id +Y$. Thus, by Theorem \ref{thm:Freyd} applied to $T(\Id +Y)$, the initial $T(\Id +Y)$-algebra $\iota_Y : T(\carrier_Y + Y)\xrightarrow{\cong} \carrier_Y$ yields a final $T(\Id +Y)$-coalgebra $\iota_Y^{-1} : \carrier_Y \xrightarrow{\cong} T(\carrier_Y + Y)$. This allows us to associate with any equation morphism $e: X \to T(X+Y)$ a canonical morphism of type $X \to TY$ as in the following diagram.
\begin{equation}\label{diag:canonicalSol}
\vcenter{
    \xymatrix@R=10pt{
    X \ar@{-->}[rr]^{\toFinal{e}} \ar[dd]_{e} & & \carrier_Y \ar@/_/[dd]_{\iota_Y^{-1}} \ar@{-->}[rr]^{\fromInit} & & TY\\
     & & & & TTY \ar[u]_{\mu^T_Y} \\
    T(X+Y)  \ar@{-->}[rr]_{T(\toFinal{e} + \id_Y)} & & T(\carrier_Y +Y)
    \ar@/_/[uu]_{\iota_Y}
    \ar@{-->}[rr]_{T( \fromInit + \id_Y)} & & T(TY+Y) \ar[u]_{T[\id_{T Y},\eta^T_Y]}
    }
}
\end{equation}
In \eqref{diag:canonicalSol}, the map $\toFinal{e}\: X \to \carrier_Y$ is the unique morphism of  $T(\Id +Y)$-coalgebras given by finality of  $\iota_Y^{-1} \: \carrier_Y \to T(\carrier_Y + Y)$, whereas $\fromInit: \carrier_Y \to TY$ is the unique morphism of $T(\Id +Y)$-algebras given by initiality of $\iota_Y \: T(\carrier_Y + Y)\to \carrier_Y$.

We call the composite $\fromInit \circ \toFinal{e}\: X \to TY$ the \emph{canonical fixpoint solution} of $e$. In the following we check that the canonical fixpoint solution is indeed a solution of $e$, in fact, it coincides with the least solution.

\newcommand{\proplfpcfpsolution}{Given a morphism $e\: X \to T(X+Y)$, then the least solution of $e$ as in Example \ref{ex:LfpSolCPO} is the canonical fixpoint solution: $\cansol e =\fromInit \circ \toFinal{e}\: X \to TY$ as in \eqref{diag:canonicalSol}.}
\begin{proposition} \label{prop:lfp=cfpsolution}
\proplfpcfpsolution
\end{proposition}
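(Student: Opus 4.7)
The plan is to verify mutual inequality between $\cansol e$ and $\lsol e$ inside the cpo $\C(X,TY)$. First I would show that $\cansol e$ is a fixpoint of the operator $\Phi_e(s) = \mu^T_Y \circ T[s,\eta^T_Y] \circ e$ of Example~\ref{ex:LfpSolCPO}. Abbreviate $F = T(\Id + Y)$ and write $\alpha = \mu^T_Y \circ T[\id_{TY},\eta^T_Y]\: T(TY+Y) \to TY$ for the canonical free $F$-algebra structure on $TY$. Finality gives $\toFinal e = \iota_Y \circ T(\toFinal e + \id_Y) \circ e$, while initiality makes $\fromInit$ an $F$-algebra morphism, i.e.~$\fromInit \circ \iota_Y = \alpha \circ T(\fromInit + \id_Y)$. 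Chaining the two identities and using $[\id_{TY},\eta^T_Y] \circ (\fromInit + \id_Y) \circ (\toFinal e + \id_Y) = [\cansol e,\eta^T_Y]$ yields $\cansol e = \Phi_e(\cansol e)$, so by leastness $\lsol e \sqsubseteq \cansol e$.

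For the reverse inequality I would exploit the \cppo-enriched construction of $\toFinal e$. Under Assumption~\ref{ass:CPPOenrichTcont} the functor $F$ is locally continuous---the exception monad $\Id + Y$ preserves local continuity by continuity of copairing---so Theorem~\ref{thm:Freyd} applies. The standard proof expresses the induced morphism into the final coalgebra as the join of an $\omega$-chain: $\toFinal e = \bigsqcup_{n<\omega} \Psi_e^n(\bot)$, where $\Psi_e(s) = \iota_Y \circ F(s) \circ e$. (This join is a fixpoint of $\Psi_e$ by continuity, hence a coalgebra morphism into $\iota_Y^{-1}$, which must coincide with $\toFinal e$ by finality.) A straightforward induction then gives $\fromInit \circ \Psi_e^n(\bot) = \Phi_e^n(\bot)$ for every $n$: the base case uses left-strictness, $\fromInit \circ \bot = \bot$; the inductive step replays the algebra-morphism identity from the previous paragraph to obtain $\fromInit \circ \Psi_e(s) = \Phi_e(\fromInit \circ s)$ for any $s$, so the hypothesis propagates. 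Continuity of left-composition with $\fromInit$ finally delivers $\cansol e = \fromInit \circ \bigsqcup_n \Psi_e^n(\bot) = \bigsqcup_n \Phi_e^n(\bot) = \lsol e$.

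I expect the most delicate point to be the characterisation $\toFinal e = \bigsqcup_n \Psi_e^n(\bot)$: Theorem~\ref{thm:Freyd} as cited in the excerpt merely records that $\iota_Y^{-1}$ is a final coalgebra, without describing how the induced morphisms are computed. This step I would either attribute to the standard proof of Freyd's theorem in~\cite{Freyd} or justify directly via the short uniqueness argument indicated above; everything else reduces to routine manipulations with the algebra/coalgebra morphism identities together with the local continuity of $T$ and of copairing.
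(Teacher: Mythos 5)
Your proof is correct and follows essentially the same route as the paper: both arguments compare the Kleene chain for $\Phi_e$ with the $\omega$-chain approximating $\toFinal{e}$, using left-strictness of composition for the base case and continuity of composition to pass to the join. The only difference is that the paper also approximates $\fromInit$ by a chain $d_n$ (with $d_{n+1} = \mu^T_Y \circ T[d_n,\eta^T_Y]\circ\iota_Y^{-1}$) and proves $\lsol{e}_n = d_n \circ c_n$ termwise, whereas you keep $\fromInit$ exact and push it through each iteration via the algebra-morphism square --- a mild simplification that also makes your first paragraph (the $\lsol{e} \sqsubseteq \cansol{e}$ half) redundant, since your chain computation already yields the equality outright.
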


As recalled in Example \ref{ex:LfpSolCPO}, the least fixpoint operator $e \mapsto \lsol e$ satisfies the double dagger law. Thus Proposition \ref{prop:lfp=cfpsolution} yields the following result\footnote{The equality of least and canonical fixpoint solutions can be used to state a stronger result, namely that canonical fixpoint solutions satisfy the axioms of iteration theories (\emph{cf.} Example \ref{ex:LfpSolCPO}). However, the double dagger law is the only property that we need here, explaining the statement of Corollary \ref{cor:cansolElgot}.}

\begin{corollary}\label{cor:cansolElgot} Let $\C$ and $T \: \C \to \C$ be as in Assumption \ref{ass:CPPOenrichTcont}. Then the canonical fixpoint operator $e \mapsto \cansol e$ associated with $T$ satisfies the double dagger law.
\end{corollary}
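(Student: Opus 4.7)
The plan is to derive the double dagger law for the canonical fixpoint operator $\cansol{(\cdot)}$ as a direct consequence of the corresponding law for the least fixpoint operator $\lsol{(\cdot)}$ from Example~\ref{ex:LfpSolCPO}, using Proposition~\ref{prop:lfp=cfpsolution} as the bridge between the two.

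First I would fix an equation morphism $e\colon X \to T(X+X+Y)$ and unfold what the double dagger law asks me to show, namely
\[
\cansol{(\cansol{e})} \;=\; \cansol{\bigl(T(\nabla_X + \id_Y) \circ e\bigr)}
\]
in $\C(X,TY)$. The inner occurrence $\cansol{e}$ on the left is formed with respect to the parameter object $X+Y$, while the outer occurrence and the occurrence on the right use the parameter object $Y$. All three daggers are well-defined because Assumption~\ref{ass:CPPOenrichTcont} supplies an initial $T(\Id + Z)$-algebra for every $Z \in \C$, so the construction of diagram~\eqref{diag:canonicalSol} applies uniformly in the parameter.

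Next I would apply Proposition~\ref{prop:lfp=cfpsolution} at each of these three instances. This converts the target equality into
\[
\lsol{(\lsol{e})} \;=\; \lsol{\bigl(T(\nabla_X + \id_Y) \circ e\bigr)},
\]
which is exactly the double dagger law for the least fixpoint operator. As noted in Example~\ref{ex:LfpSolCPO}, this law is already known to hold as a special case of the iteration-theory axioms of Bloom and \'Esik~\cite{be93}. Translating back via Proposition~\ref{prop:lfp=cfpsolution} then yields the desired identity for $\cansol{(\cdot)}$.

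I do not anticipate a genuine obstacle: the conceptual work has already been done in Proposition~\ref{prop:lfp=cfpsolution} (identifying the two solutions) and in Example~\ref{ex:LfpSolCPO} (recording that $\lsol{(\cdot)}$ satisfies the iteration-theoretic identities). The one point worth double-checking is that Proposition~\ref{prop:lfp=cfpsolution} applies in both of the parameter contexts $X+Y$ and $Y$ that arise when one iterates the dagger, and this is precisely what the ``for all object $Y$'' clause in Assumption~\ref{ass:CPPOenrichTcont} ensures. As the footnote accompanying the corollary already hints, the same strategy would transport any Conway or iteration identity satisfied by $\lsol{(\cdot)}$ across to $\cansol{(\cdot)}$.
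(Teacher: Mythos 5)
Your proposal is correct and follows essentially the same route as the paper: the paper likewise derives the corollary by combining Proposition~\ref{prop:lfp=cfpsolution} (canonical fixpoint solutions coincide with least fixpoint solutions) with the fact, recorded in Example~\ref{ex:LfpSolCPO} via Bloom and \'Esik, that the least fixpoint operator satisfies the double dagger law. Your additional remark about checking the two parameter contexts $X+Y$ and $Y$ is a sensible (if routine) point that the paper leaves implicit.
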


We now introduce a factorisation result on the operator $e \mapsto \cansol e$, which is useful for comparing solutions provided by different monads connected via a monad morphism.
\newcommand{\propfact}{Suppose that $T$ and $T'$ are monads on $\C$ satisfying Assumption \ref{ass:CPPOenrichTcont} and $\gamma \: T \Rightarrow T'$ is a monad morphism. For any morphism $e : X \to T(X+Y)$:
\[ \gamma_Y \circ \cansol e = \cansol {(\gamma_{X+Y} \circ e)} : X \to T' Y,\]
where $\cansol e$ is provided by the canonical fixpoint solution for $T$ and $\cansol {(\gamma_{X+Y} \circ e)}$ by the one for $T'$.}
\begin{proposition}[Factorisation Lemma] \label{prop:factorizationLemma}
\propfact
\end{proposition}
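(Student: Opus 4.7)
The plan is to exploit the initial-algebra construction of $\cansol{(-)}$ exhibited in diagram~\eqref{diag:canonicalSol}, rather than the least-fixpoint characterization of Proposition~\ref{prop:lfp=cfpsolution}. Denote by $\iota_Y\colon T(\carrier_Y + Y) \to \carrier_Y$ and $\iota'_Y\colon T'(\carrier'_Y + Y) \to \carrier'_Y$ the initial algebras of $T(\Id + Y)$ and $T'(\Id+Y)$ respectively, and let primed symbols refer throughout to the $T'$-side construction, so that $\cansol{e} = \fromInit \circ \toFinal{e}$ and $\cansol{(\gamma_{X+Y}\circ e)} = \fromInit' \circ \toFinal{\bar{e}}$, where $\bar{e} := \gamma_{X+Y} \circ e$. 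The strategy is to introduce a mediating map $h\colon \carrier_Y \to \carrier'_Y$ and split the target equation into two factored identities
\[
h \circ \toFinal{e} = \toFinal{\bar{e}}
\qquad\text{and}\qquad
\gamma_Y \circ \fromInit = \fromInit' \circ h,
\]
whose composition yields $\gamma_Y \circ \cansol{e} = \fromInit' \circ \toFinal{\bar{e}} = \cansol{\bar{e}}$.

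The map $h$ arises from $\gamma$ itself: the natural transformation $\gamma$ endows $\carrier'_Y$ with the $T(\Id + Y)$-algebra structure $\beta := \iota'_Y \circ \gamma_{\carrier'_Y + Y}$, and I take $h\colon (\carrier_Y, \iota_Y) \to (\carrier'_Y, \beta)$ to be the unique $T(\Id + Y)$-algebra morphism provided by initiality of $\iota_Y$.

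For the first identity I would check that $h \circ \toFinal{e}$ is a $T'(\Id + Y)$-coalgebra morphism from $\bar{e}$ to the final coalgebra $(\iota'_Y)^{-1}$ and conclude by finality. After unfolding $\bar{e} = \gamma_{X+Y} \circ e$ and using naturality of $\gamma$ twice to push its components past the applications of $T'$, the required equation $\iota'_Y \circ T'((h \circ \toFinal{e}) + \id) \circ \bar{e} = h \circ \toFinal{e}$ reduces, via the algebra law $\beta \circ T(h + \id) = h \circ \iota_Y$ for $h$ and the coalgebra law $\iota_Y \circ T(\toFinal{e} + \id) \circ e = \toFinal{e}$ for $\toFinal{e}$, to a tautology.

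For the second identity, equip $T'Y$ with the $T(\Id + Y)$-algebra structure $\alpha'' := \mu^{T'}_Y \circ T'[\id_{T'Y}, \eta^{T'}_Y] \circ \gamma_{T'Y + Y}$ obtained by precomposing the substitution structure on $T'Y$ with $\gamma$. A short diagrammatic check using naturality of $\gamma$ shows that $\fromInit'$ is a $T(\Id + Y)$-algebra morphism from $\beta$ to $\alpha''$; composing with $h$ exhibits $\fromInit' \circ h$ as a $T(\Id+Y)$-algebra morphism from $\iota_Y$ to $\alpha''$. In parallel, a computation weaving together naturality with the compatibility of $\gamma$ with units ($\gamma_Y \circ \eta^T_Y = \eta^{T'}_Y$) and multiplications ($\gamma_Y \circ \mu^T_Y = \mu^{T'}_Y \circ \gamma_{T'Y} \circ T\gamma_Y$) establishes that $\gamma_Y \circ \fromInit$ is also a $T(\Id+Y)$-algebra morphism from $\iota_Y$ to $\alpha''$, so that initiality of $\iota_Y$ forces the desired equality. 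The main technical obstacle is precisely this second check: one has to carefully interleave the monad-morphism axioms with naturality of $\gamma$ in order to recognise both $\gamma_Y \circ \fromInit$ and $\fromInit' \circ h$ as $T(\Id+Y)$-algebra morphisms with the same source and target. Once the algebra structures are in place, the rest of the argument is a systematic appeal to universal properties.
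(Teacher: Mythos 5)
Your proposal is correct and follows essentially the same route as the paper's proof: the mediating map $h$ is exactly the paper's unique $T(\Id+Y)$-algebra morphism $a\colon \carrier_Y \to \carrier'_Y$ into $\carrier'_Y$ equipped with the $\gamma$-induced algebra structure, and your two factored identities are precisely the two halves of the paper's argument, established the same way (finality of $(\iota'_Y)^{-1}$ for the coalgebra half, initiality of $\iota_Y$ together with the monad-morphism axioms for $\gamma$ for the algebra half). The only cosmetic difference is that you verify the composite $h \circ \toFinal{e}$ is a coalgebra morphism by one direct computation, where the paper frames it as a composite of two coalgebra morphisms.
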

\subsection{A Theory of Systems with Internal Behavior}

We now use canonical fixpoint solutions to provide an abstract theory of systems with internal behavior, that we will later instantiate to the motivating examples of Section \ref{SSec:Mot}. Throughout this section, we will develop our framework for the following ingredients.

\begin{assumption} \label{ass:TheoryHSystems} Let $\C$ be a \cppo-enriched category with coproducts and composition left-strict and let $\GF \colon \C \to \C$ be a locally continuous functor for which all free $\GF$-algebras exist. Consider the following two monads derived from $\GF$:
\begin{itemize}
  \item the free monad $\free{\GF} \colon \C \to \C$ (\emph{cf.} Example \ref{ex:mnds}.\ref{pt:freemonad}), for which we suppose that an initial $\free{\GF}(\Id +Y)$-algebra exists for all $Y\in\C$;
  \item for a fixed $X \in \C$, the exception monad $\GF X + \Id \colon \C \to \C$ (\emph{cf.} Example \ref{ex:mnds}.\ref{pt:exceptionmonad}), for which we suppose that an initial $\GF  X+\Id +Y$-algebra exists for all $Y\in\C$.
\end{itemize}
\end{assumption}
In the next proposition we verify that the construction introduced in the previous section applies to the two monads of Assumption \ref{ass:TheoryHSystems}.
 \newcommand{\propass}{Let $\C$, $\GF $,  $\free{\GF }$ and  $\GF X + \Id$ be as in Assumption \ref{ass:TheoryHSystems}. Then $\C$ and the monads $\free{\GF }\colon \C \to \C$ and $\GF X + \Id \colon \C \to \C$ satisfy Assumption \ref{ass:CPPOenrichTcont}. Thus both $\free{\GF }$ and  $\GF X + \Id$ are monads with canonical fixpoint solution (which satisfy the double dagger law by Corollary \ref{cor:cansolElgot}).}
 \begin{proposition}\label{prop:ass2->ass1}
 \propass
 \end{proposition}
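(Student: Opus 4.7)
The plan is to verify, for each of the two monads $\free{\GF}$ and $\GF X + \Id$, the three requirements of Assumption~\ref{ass:CPPOenrichTcont}: (i) that the ambient category is \cppo-enriched with coproducts and left-strict composition, (ii) that the monad is locally continuous, and (iii) that for every $Y \in \C$ an initial $T(\Id + Y)$-algebra exists. Condition (i) is inherited verbatim from Assumption~\ref{ass:TheoryHSystems}, so the real content lies in (ii) and (iii).

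For local continuity of $\free{\GF}$, I would simply invoke Proposition~\ref{lem:Hstar}: since $\GF$ is locally continuous and all free $\GF$-algebras exist in the \cppo-enriched category $\C$ with left-strict composition, $\free{\GF}$ is a locally continuous monad. For local continuity of the exception monad $\GF X + \Id$, I would argue directly: given an $\omega$-chain $f_n \colon Y \to Z$ in $\C(Y,Z)$, we have $(\GF X + \Id)(f_n) = \id_{\GF X} + f_n = [\inl_Z,\, \inr_Z \circ f_n]$. Continuity of this expression in $f_n$ follows from continuity of composition with $\inr_Z$ (part of the \cppo-enrichment of $\C$) combined with continuity of copairing in its second argument, which was observed in the discussion preceding Proposition~\ref{lem:Hstar}.

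For condition (iii), the case $T = \free{\GF}$ is immediate: the existence of an initial $\free{\GF}(\Id + Y)$-algebra for every $Y$ is exactly one of the hypotheses of Assumption~\ref{ass:TheoryHSystems}. For the exception monad $T = \GF X + \Id$, I would observe the canonical identification
\[
(\GF X + \Id)(\Id + Y) \;=\; \GF X + (\Id + Y) \;\cong\; \GF X + \Id + Y,
\]
so that an initial $T(\Id + Y)$-algebra coincides with an initial $(\GF X + \Id + Y)$-algebra, whose existence is precisely the other hypothesis of Assumption~\ref{ass:TheoryHSystems}.

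The main (minor) subtlety is book-keeping the reassociation of coproducts when translating between the formulation $T(\Id + Y)$ and the assumed functor $\GF X + \Id + Y$; everything else is a direct invocation of Proposition~\ref{lem:Hstar} and of the given hypotheses. The final clause of the proposition — that the double dagger law holds for both operators — then follows by Corollary~\ref{cor:cansolElgot} once Assumption~\ref{ass:CPPOenrichTcont} has been verified for each monad.
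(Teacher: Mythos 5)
Your proof is correct and follows essentially the same route as the paper's: local continuity of $\free{\GF}$ via Proposition~\ref{lem:Hstar}, local continuity of $\GF X + \Id$ via continuity of copairing, and the initial-algebra requirements read off directly from Assumption~\ref{ass:TheoryHSystems} (the paper silently elides the coproduct reassociation you spell out). No gaps.
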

  %Those two monads satisfy Assumption \ref{ass:CPPOenrichTcont},    \fzwarning{Check the assumption: for all $Y$, the initial $T(\Id +Y)$ algebra exists!}    meaning by Corollary \ref{cor:cansolElgot} that they are Elgot monads equipped with canonical fixpoint solution operators.
   To avoid ambiguity, we denote with $e \mapsto \cansol e$ the canonical fixpoint operator associated with $\free{\GF }$ and with $e \mapsto \altsol e$ the one associated with $\GF X + \Id$.

We will employ the additional structure of those two monads for the analysis of \emph{$\GF $-systems with internal transitions}. An $\GF $-system is simply an $\GF $-coalgebra $e\: X \to \GF X$, where we take the operational point of view of seeing $X$ as a space of states and $\GF $ as the transition type of $e$. An $\GF $-system with internal transitions is an $(\GF  + \Id)$-coalgebra $e \: X \to \GF X +X$, where the component $X$ of the codomain is targeted by those transitions representing the internal (non-interacting) behavior of system $e$.

A key observation for our analysis is that $\GF $-systems---with or without internal transitions---enjoy a standard representation as $\free{\GF }$-systems, that is, coalgebras of the form $e\: X \to \free{\GF }X$.
\begin{definition}[$\GF $-systems as $\free{\GF }$-systems] \label{def:reprHtofreeH} Let $\kappa : \GF  \to \free{\GF }$ be as in Example~\ref{ex:mnds}.\ref{pt:freemonad}. We introduce the following encoding $e \mapsto \bar{e}$ of $\GF $-systems and $\GF $-systems with internal transitions as $\free{\GF }$-systems.
\begin{itemize}
  \item Given an $\GF $-system $e \colon X \to \GF X$, define $\bar{e} \colon X \to \free{\GF }X$ as
\[
\bar{e} : \xymatrix@1{X \ar[r]^-{e} & \GF X \ar[r]^{ \kappa_X} & \free{\GF }X.}
\]
\item Given an $\GF $-system with internal transitions $e : X \to \GF X + X$, define $\bar{e} : X \to \free{\GF }X$ as $\bar{e} : \xymatrix@1{X \ar[r]|-{\ e\ } & \GF X + X \ar[rr]|-{\ [ \kappa_X, \eta^{\free{\GF }}_X]\ } & & \free{\GF }X}$.
  \end{itemize}
\end{definition}
Thus $\GF $-systems (with or without internal transitions) may be seen as equation morphisms $X \to \free{\GF}(X+0)$ for the monad $\free{\GF }$ (with the initial object $Y = 0$ as parameter), with solutions by canonical fixpoint ({\em cf.}~Section~\ref{Sec:Elgot}). This will allow us to achieve the following.
\begin{enumerate}[label={(\arabic*)}]
  \item[\bf \S 1] We supply a uniform trace semantics for $\GF $-systems, possibly with internal transitions, and $\free{\GF }$-systems, based on the canonical fixpoint solution operator of $\free{\GF }$. %\label{pt:semanticsCanFix}
  \item[\bf \S 2] We use the canonical fixpoint operator of $\GF X + \Id$ to transform any $\GF $-system $e: X \to \GF X + X$ with internal transitions into an $\GF $-system $\epselim{e} : X \to \GF X$ without internal transitions. %\label{pt:epselim}
  \item[\bf \S 3] We prove that the transformation of {\bf \S 2} is sound with respect to the semantics of {\bf \S 1}.
  %\label{pt:epselimsound}
\end{enumerate}
\paragraph{\bf \S 1: Uniform trace semantics.}  The canonical fixpoint semantics of $\GF $-systems, with or without internal transitions, and $\free{\GF }$-systems is defined as follows.
 \begin{definition}[Canonical Fixpoint Semantics] \label{def:canfixSem}
\begin{itemize}
  \item For an $\free{\GF }$-system $e\colon X \to\free{\GF }X$, its semantics $\bb{e} \colon X \to \free{\GF }0$ is defined as $\cansol {e}$
(note that $e$ can be seen as  an equation morphism for $\free{\GF }$ on parameter $Y = 0$).
\item For an $\GF $-system $e:X \to\GF X$, its semantics $\bb{e} \colon X \to \GF 0$ is defined as $\cansol {\bar{e}}= \cansol {(\kappa_X \circ e)}$.
\item For an $\GF $-system with internal transitions $e:X \to \GF X+X$, its semantics $\bb{e} \colon X \to \GF 0$ is defined as $\cansol {\bar{e}}= \cansol {([\kappa_X,\eta_X^{\free{\GF }}] \circ e)}$.
\end{itemize}
\end{definition}
 The underlying intuition of Definition \ref{def:canfixSem} is that canonical fixpoint solutions may be given an operational understanding. Given an $\free{\GF }$-system $e\: X \to \free{\GF }X$, its solution $\cansol{e}\: X \to \free{\GF }0$ is formally defined as the composite $\fromInit \circ \toFinal{e}$ (\emph{cf.} \eqref{diag:canonicalSol}): we can see the coalgebra morphism $\toFinal{e}$ as a map that gives the \emph{behavior} of system $e$ without taking into account the structure of labels and the algebra morphism $\fromInit$
%as the quotient of this behavior out of the equational theory associated with the free monad
as evaluating this structure, e.g. flattening words of words, using the initial algebra $\mu_0 \: \free{\GF }\free{\GF }0 \to \free{\GF }0$ for the monad $\free{\GF}$. In particular, the action of $\fromInit$ is what makes our semantics suitable for modeling ``algebraic'' operations on internal transitions such as $\epsilon$-elimination, as we will see in concrete instances of our framework.

\begin{remark} The canonical fixpoint semantics of Definition \ref{def:canfixSem} encompasses the framework for traces in \cite{HasuoJS:07}, where the semantics of an $\GF $-system $e : X \to \GF X$---without internal transitions---is defined as the unique morphism $\toFinal{e}$ from $X$ into the final $\GF $-coalgebra $\free{\GF }0$. Indeed, using finality of $\free{\GF }0$, it can be shown that $\toFinal{e} = \bb{e}$. Theorem~\ref{th:comparewithHJS} below guarantees compatibility with Assumption~\ref{ass:TheoryHSystems}.
\end{remark}
The following result is instrumental in our examples and in comparing our theory with the one developed in \cite{HasuoJS:07} for trace semantics in Kleisli categories.
\newcommand{\thcomparewithHJS}{Let $M \: \Sets \to \Sets$ be a monad and $H  \: \Sets \to \Sets$ be a functor satisfying the assumptions of Theorem \ref{thm:HJS}, that is:
\begin{enumerate}[(a)]
 \item $\Kl(M)$ is \cppo-enriched and composition is left strict;
 \item $H$ is accessible and has a locally continuous lifting $\lift{H} \colon \Kl(M) \to \Kl(M)$.
\end{enumerate}
Then $\Kl(M)$, $\lift{H}$, $\free{\lift{H}}$ and $\lift{H}\J  X + \Id$ (for a given set $X$) satisfy Assumption \ref{ass:TheoryHSystems}.}
\begin{theorem}\label{th:comparewithHJS}
\thcomparewithHJS
\end{theorem}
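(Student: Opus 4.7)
The plan is to verify each clause of Assumption~\ref{ass:TheoryHSystems} for $\C = \Kl(M)$ and $\GF = \lift{H}$; the verification splits naturally into three pieces. The ``background'' requirements come for free: $\Kl(M)$ is \cppo-enriched with left-strict composition by hypothesis (a), has coproducts inherited from $\Sets$ (Section~\ref{Sec:Trace}), and $\lift{H}$ is locally continuous by hypothesis (b).

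Next, I would handle the free monad $\free{\lift{H}}$ and its initial algebras. Accessibility of $H$ ensures that all free $H$-algebras in $\Sets$ exist, so the free monad $\free{H}$ is defined (cf.~\cite{kelly:80}). Proposition~\ref{prop:liftingfreemonad} then provides a lifting $\lift{\free{H}} : \Kl(M) \to \Kl(M)$ which coincides with $\free{\lift{H}}$; in particular, all free $\lift{H}$-algebras exist. To obtain an initial $\free{\lift{H}}(\Id + Y)$-algebra for each $Y$, I would first observe that, since coproducts in $\Kl(M)$ are inherited from $\Sets$, the functor $\Id + Y$ on $\Kl(M)$ is a lifting of $\Id + Y$ on $\Sets$. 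Composing with $\lift{\free{H}} = \free{\lift{H}}$, the endofunctor $\free{\lift{H}}(\Id + Y)$ on $\Kl(M)$ is itself a lifting of $\free{H}(\Id + Y)$ on $\Sets$. Since $H$ is accessible, so are $\free{H}$ and hence $\free{H}(\Id + Y)$, so this latter functor admits an initial algebra in $\Sets$; Proposition~\ref{prop:liftinginitialalgebra} then transports this initial algebra to an initial $\free{\lift{H}}(\Id + Y)$-algebra in $\Kl(M)$.

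The remaining requirement concerns initial $\lift{H}\J X + \Id + Y$-algebras. Here $\lift{H}\J X + Y$ is simply an object of $\Kl(M)$ (recall that $\J$ is the identity on objects), so the endofunctor in question is of the shape ``constant $+$ identity''. Its initial algebra is the constant itself, equipped with the codiagonal as structure map, and this construction needs only the existence of binary coproducts in $\Kl(M)$; no lifting argument is required. The main technical subtlety in the whole proof is ensuring that $\free{\lift{H}}(\Id + Y)$ on $\Kl(M)$ is genuinely a lifting of $\free{H}(\Id + Y)$ on $\Sets$, so that Proposition~\ref{prop:liftinginitialalgebra} is applicable; this is handled by combining Proposition~\ref{prop:liftingfreemonad} with the fact that $\J$ preserves coproducts on the nose.
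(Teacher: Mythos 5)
Your treatment of the background conditions and of the free monad $\free{\lift{H}}$ is correct and follows essentially the paper's route: the paper packages the argument into Proposition~\ref{prop:liftingCoproduct} (composite of liftings is a lifting; accessibility gives the initial algebra in $\Sets$; Proposition~\ref{prop:liftinginitialalgebra} transports it to $\Kl(M)$), and you simply inline that reasoning for $\free{\lift{H}}(\Id+Y)$. That part is fine.

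The third part, however, contains a genuine error. The functor in question is $Z \mapsto C + Z$ with $C = \lift{H}\J X + Y$, and its initial \emph{functor}-algebra is \emph{not} $C$ with the codiagonal. An algebra for this functor is a pair of maps $f\colon C \to A$ and $g\colon A \to A$, and $(C, [\id_C,\id_C])$ admits no homomorphism into, say, $(A,f,g)$ unless $g \circ f = f$; so it is not even weakly initial. What you have described is the free Eilenberg--Moore algebra on $0$ for the exception \emph{monad} $C + \Id$, which is a different notion from the one Assumption~\ref{ass:TheoryHSystems} asks for. The actual initial algebra has carrier $\N \times C$ (the colimit of the chain $0 \to C \to C+C \to \cdots$), as the paper itself displays in diagram \eqref{diag:cansolHX+Id} --- and indeed the $\N$-component is what records the number of internal steps and makes the $\epsilon$-elimination construction work. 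Moreover, your claim that ``this construction needs only the existence of binary coproducts in $\Kl(M)$; no lifting argument is required'' does not hold for the correct initial algebra: building $\N \times C$ as an initial algebra requires the relevant $\omega$-colimit to exist and be preserved, and $\Kl(M)$ is not cocomplete in general. The paper's argument is the uniform one: $HX + \Id + Y$ is accessible on $\Sets$ (constant plus identity), hence has an initial algebra there, and $\lift{H}\J X + \Id + Y$ is a lifting of it because $\J$ preserves coproducts, so Proposition~\ref{prop:liftinginitialalgebra} transports the initial algebra to $\Kl(M)$. You should replace your ``codiagonal'' argument with this lifting argument, exactly parallel to the one you already gave for $\free{\lift{H}}(\Id+Y)$.
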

\begin{example}[Semantics of NDA with word transitions]
%Using the abstract framework above with $H^*X = A^*\times X + A^*$, we compute the semantics $\bb{e} = \cansol{e} = \fromInit\after \toFinal{e}$ of an $H^*$-coalgebra $e$:
In Section \ref{SSec:Mot}, we have modeled NDA with word transitions as $\lift{\free{H }}$-coalgebras on $\Kl(M)$, where $H$ and $M$ are defined as for NDA (see Example \ref{ex:NDA}). By Proposition~\ref{prop:liftingfreemonad}, $\lift{\free{H}} = \free{\lift{H}}$ and thus, by virtue of Theorem \ref{th:comparewithHJS}, $\lift{\free{H}}$ satisfies Assumption \ref{ass:TheoryHSystems}. Therefore we can define the semantics of NDA with word transitions $e \colon X \to \Pow (A^*\times X + A^*)$ via canonical fixpoint solutions as $\bb{e} = \cansol{e} = \fromInit\after \toFinal{e}$:
%\begin{equation}\label{diag:canonicalSol}
%\vcenter{
%}
%eqnarray
\begin{eqnarray}%\label{diag:canonicalSolNDA}
\vcenter{\small
    \xymatrix@R=14pt{
    X \ar@{-->}[rr]^{\toFinal{e}} \ar[dd]_{e} & & (A^{*})^{*} \times A^{*} \ar@/_/[dd]^{\cong} \ar@{-->}[rrr]^{\fromInit}
    \ar@{}[rrrdd]|{\small \begin{array}{rcl}
\fromInit (\tnil, w) & = & \{w\} \\
\fromInit (w::l,w')  & = & \{wu \mid u\in \fromInit(l,w')\}
\end{array}}
    & & & A^{*}\\
     & & & &  & & & & & & \\
    A^{*} \times X + A^{*}  \ar@{-->}[rr]_-{\id \times \toFinal{e} +\id} & & A^{*} \times ((A^{*})^{*} \times A^{*}) + A^{*}\ar@/_/[uu]
    \ar@{-->}[rrr]_-{{\id \times \fromInit + \id}} & & & A^{*} \times A^{*} + A^{*} \ar[uu]_{\mu_0} & & & &
    }
}
\end{eqnarray}
Observe that the above diagram is just \eqref{diag:canonicalSol} instantiated with $T=\lift{\free{H }}$ and $Y=0$. Moreover, this diagram is in $\Kl(\Pow)$ and hence the explicit definition of $\cansol e$ as a function $X \to \Pow(A^*)$ is given by $\cansol e(x) = \bigcup \Pow(\fromInit)(\toFinal{e} (x))$.

Both $\toFinal{e}$ and $\fromInit$ can be defined uniquely by the commutativity of the above diagram.
We have already defined $\toFinal{e}$ in diagram~\eqref{eq:finalmap_wordaut} and the definition of $\fromInit$ is given in the right-hand square of the above diagram. The isomorphism in the middle and $\mu_0$ were defined in Section~\ref{SSec:Mot}.

Using the above formula $\cansol e(x) = \bigcup \Pow (\fromInit) (\toFinal{e}(x))$ we now have the semantics of $e$:
\begin{equation}\label{eq:language}
w \in \cansol{e}(x) \Leftrightarrow  
\begin{array}[t]{l}
w \in {e}(x)\ \textbf{ or }\\
\exists_{y\in X, w_1,w_2\in A^*}\ (w_1, y) \in {e}(x), w_2 \in \cansol{e}(y) \text{ and } w=w_1w_2.
\end{array}
\end{equation}
This definition is precisely the language semantics: a word $w$ is accepted by a state $x$ if there exists a decomposition $w=w_1\cdots w_n$ such that $\xymatrix@C=.4cm{x \ar[r]^-{w_1} & y_1 \ar[r]^-{w_2}& \cdots \ar[r]^-{w_{n-1}}& y_{n-1} \ar@{=>}[r]^-{w_n}&}$.
Take again the automaton of the motivating example. We can calculate the semantics and observe that we now get exactly what was expected: $ \cansol{e}(u)= \cansol{e}(v)$.
$$\begin{array}{c@{\qquad}l@{\qquad}l}
\vcenter{\xymatrix@R=.3cm{
    \state{x} \ar[r]^{a}  & \state{y} \ar[r]^{b} & \state{z} \ar@{=>}[d]^{\{c\}} \\
    \state{u} \ar[r]^{\epsilon}  & \state{v} \ar[ru]|{ab} &  \\}}
  &
\begin{array}{rcl}
 \toFinal{e}(x) &=& \{([a,b],c)\} \\
 \toFinal{e}(y) &=& \{([b],c)\} \\
 \toFinal{e}(z) &=&  \{(\tnil,c)\} \\
  \toFinal{e}(u) &=& \{([\epsilon,ab],c)\} \\
  \toFinal{e}(v) &=& \{([ab],c)\} \\
\end{array}
&
\begin{array}{rcl}
 \cansol{e}(x) &=& \{abc\} \\
 \cansol{e}(y) &=& \{bc\} \\
 \cansol{e}(z) &=&  \{c\} \\
 \cansol{e}(u) &=& \{abc\} \\
 \cansol{e}(v) &=& \{abc\} \\
\end{array}
\end{array}
$$
The key role played by the monad structure on $A^*$ can be appreciated by comparing the graphs of $\toFinal{e}$ and $\cansol{e} = \fromInit \circ \toFinal{e}$ as in the example above. The algebra morphism $\fromInit \: (A^*)^*\times A^* \to A^*$ maps values from the initial algebra $(A^*)^*\times A^*$ for the \emph{endofunctor} $\lift{\free{H }}$ into the initial algebra $A^*$ for the \emph{monad} $\lift{\free{H }}$: its action is precisely to take into account the additional equations encoded by the algebraic theory of the monad $\lift{\free{H }}$. For instance, we can see the mapping of $ \toFinal{e}(u) = \{([\epsilon,ab],c)\}$ into the word $abc$ as the result of concatenating the words $\epsilon$, $ab$, $c$ and then quotienting out of the equation $\epsilon abc = abc$ in the monoid $A^*$.
\end{example}
\begin{remark}[Multiple Solutions]\label{rm:multiple}
The canonical solution $\cansol e$ is not the unique solution. Indeed, the uniqueness of $\toFinal{e}$ in the left-hand square and of $\fromInit$ in the right-hand square of the diagram above does not imply the uniqueness of $e^\dagger$. To see this, take for instance the automaton
\[
  \xymatrix{
    \state{x}  \ar@(ur,dr)[]^{\epsilon}
  }
  \]
Both $s(x)=\emptyset$ and $s'(x)=A^{*}$ are solutions. The canonical one is the least one, i.e., $e^{\dagger}(x)=s(x)=\emptyset$.
\end{remark}
\begin{example}[{Semantics of NDA with $\epsilon$-transitions}]\label{ex:epselim}
NDA with $\epsilon$-transitions are modeled as $\lift{H + \Id}$-coalgebras on $\Kl(M)$, where $H$ and $M$ are defined as for NDA (see Example~\ref{ex:NDA}). We can define the semantics of NDA with $\epsilon$-transitions via canonical fixpoint solutions as $\bb{e} = \cansol{\bar{e}}$, where $\bar{e}$ is the automaton with word transitions corresponding to $e$ (see Definition~\ref{def:reprHtofreeH}). The first example in Section~\ref{SSec:Mot}
would be represented as follows,
$$
\begin{array}{cll}
  \xymatrix{
    \state{x} \ar[r]^{\epsilon}  & \state{y} \ar[r]^{\epsilon} & \fstate{z} \ar@(ur,dr)[]^{a} }
 &\qquad
\begin{array}{rcl}
\bar e(x) =  [\kappa_X,\eta_X] \circ e(x) &=& \{(\epsilon, y)\} \\
\bar e(y) =  [\kappa_X,\eta_X] \circ e(y) &=& \{(\epsilon, z)\} \\
\bar e(z) =  [\kappa_X,\eta_X] \circ e(z) &=& \{(a,z), \epsilon \}
\end{array}
\end{array}
$$
where $\eta$ and $\kappa$ are defined as at the end of Section~\ref{SSec:Mot}.
By using \eqref{eq:language}, it can be easily checked that the semantics $\bb{e}=\cansol{\bar{e}} \colon X \to \Pow A^{*}$ maps  $x$, $y$ and $z$ into $a^{*}$.
\end{example}
\paragraph{\bf \S 2: Elimination of internal transitions.}
We view an $\GF $-system $e \: X \to \GF X + X$ with internal transitions as an equation morphism for the monad $\GF X + \Id$, with parameter $Y = 0$. Thus we can use the canonical fixpoint solution of $\GF X + \Id$ to obtain an $\GF $-system $\altsol e\: X \to \GF X + 0 = \GF X$, which we denote by $\epselim{e}$. The construction is depicted below.
\begin{eqnarray}\label{diag:cansolHX+Id}
\vcenter{
    \xymatrix@R=10pt@C=20pt{
      X  %\ar@/^1.2pc/[rrrr]^{\epselim{e}\ \stackrel{\mathrm{def}}=\ \altsol e} 
      \ar@{-->}[rr]^{\toFinal{e}} \ar[dd]_{e} & & \N \times \GF X \ar@/^/[dd]_{\cong} \ar@{-->}[rr]^{\fromInit} & & \GF X \ar@{<-} `u[l] `[llll]_{\epselim{e}\ \stackrel{\mathrm{def}}=\ \altsol e} [llll] \\
      & & & &  
      \\
      \GF X + X  \ar[rr]_-{\id_{\GF X} +\toFinal{e}} & & \GF X + \N \times \GF X \ar@/^/[uu]
      \ar[rr]_{ \id_{\GF X} + \fromInit} & & \GF X + \GF X\ar[uu]_{\mu_0 = \nabla_{\GF X}}\\
    }
}
\end{eqnarray}
\begin{example}[$\epsilon$-elimination]
Using the automaton of Example~\ref{ex:epselim}, we can perform $\epsilon$-elimination, as defined in~\eqref{diag:cansolHX+Id}, using the canonical solution for the monad $\lift{H}\J X +\Id$:
\begin{eqnarray*}\small
\vcenter{
    \xymatrix@R=10pt@C=5pt{
    X  \ar@{-->}[rr]^{\toFinal{e}} \ar[dd]_{e} & & \N \times (A\times X +1) \ar@/^/[dd]_{\cong} \ar@{-->}[rr]^{\fromInit} & & (A\times X +1)  \\
     & & & &  \\
     (A\times X +1) + X  \ar[rr]_-{\id  + \toFinal{e}} & & (A\times X +1) + \N \times (A\times X +1)\ar@/^/[uu]
    \ar[rr]_-{\id + \fromInit} & & (A\times X +1) + (A\times X +1) \ar[uu]_{\mu_0 = \nabla}
    }
}
\end{eqnarray*}
We obtain the following NDA $\epselim{e} \stackrel{\mathrm{def}}=\fromInit \circ\toFinal{e} \colon X \to A \times X +1$.
$$\small
\begin{array}{ccc}
\begin{array}{lcl}
 \toFinal{e}(x) &=& \{(2,a,z), (2,\checkmark)\} \\
 \toFinal{e}(y) &=& \{(1,a,z), (1,\checkmark)\} \\
 \toFinal{e}(z) &=& \{(0,a,z), (0,\checkmark) \}
\end{array}\quad
&
\begin{array}{lcl}
 \epselim{e}(x) &=& \{(a,z), \checkmark\} \\
  \epselim{e}(y) &=& \{(a,z), \checkmark\} \\
  \epselim{e} (x) &=& \{(a,z), \checkmark \}
\end{array}
\quad
&
  \xymatrix@C=.5cm{
    \fstate{x}  & \fstate{y} \ar[r]^{a} & \fstate{z} \ar@(ur,dr)[]^{a} \ar@{<-} `u[l] `[ll]_a [ll] }
    \end{array}
$$
The semantics $\bb{\epselim e}$ is defined as $\cansol{\overline{\epselim e}}$, where $\overline{\epselim e}= \kappa_X \after \epselim e$ is the representation of the NDA $\epselim e$ as an automaton with word transitions (Definition \ref{def:reprHtofreeH}). It is immediate to see, in this case, that $\bb{\epselim e}=\bb{e}$. This fact is an instance of Theorem \ref{th:epselimsound} below.
\end{example}
\begin{remark} Note that $\epsilon$-elimination was recently defined using a trace operator on a Kleisli category~\cite{Hasuo06,SW13,Asada}. These works are based on the trace semantics of Hasuo et al.~\cite{HasuoJS:07} and tailored for $\epsilon$-elimination. They do not take into account any algebraic structure of the labels and are hence not applicable to the other examples we consider in this paper. \end{remark}
\paragraph{\bf \S 3: Soundness of $\epsilon$-elimination.}
We now formally prove that the canonical fixpoint semantics of $e$ and $\epselim{e}$ coincide. To this end, first we show how the construction $e \mapsto \epselim{e}$ can be expressed in terms of the canonical fixpoint solution of $\free{\GF }$. This turns out to be an application of the factorisation lemma (Proposition \ref{prop:factorizationLemma}), for which we introduce the natural transformation $\pi \: \GF X + \Id \To \free{\GF }(X+\Id)$ defined at $Y \in \C$ by
\[
\pi_Y \colon \xymatrix{\GF X + Y \ar[rr]^-{[\kappa_X,\ \eta^{\free{\GF }}_Y]} && \free{\GF }X + \free{\GF }Y \ar[rr]^{[\free{\GF }\inl,\free{\GF }\inr]} & & \free{\GF }(X+Y) }.
\]
Since $\free{\GF }$ is a monad with canonical fixpoint solutions, it can be verified that so is $\free{\GF }(X+\Id)$. Moreover, $\pi$ is a monad morphism between $\GF X + \Id$ and $\free{\GF }(X+\Id)$. These observations allow us to prove the following.

\newcommand{\propfactEpsilonElim}{ For any $\GF $-system $e : X \to \GF X + X$ with internal transitions, consider the equation morphism $\pi_{X} \circ e: X \to \free{\GF }(X+X)$. Then:
\[\pi_0 \circ \epselim{e} = \cansol {(\pi_{X} \circ e)} : X \to \free{\GF }X.\]}
\begin{proposition}[Factorisation property of $e \mapsto \epselim{e}$] \label{prop:factEpsilonElim}
\propfactEpsilonElim
%where the solution to $\pi_{X} \circ e: X \to \free{\GF }(X+X)$ is provided by the canonical fixpoint operator of $\free{\GF }$ with parameter $Y=X$.
\end{proposition}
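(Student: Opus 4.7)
The plan is to obtain the identity as a direct application of the Factorisation Lemma (Proposition~\ref{prop:factorizationLemma}), instantiated with $T = \GF X + \Id$, $T' = \free{\GF}(X+\Id)$, $\gamma = \pi$, and the equation morphism $e\colon X \to \GF X + X = T(X + 0)$ viewed at parameter $Y = 0$. The lemma then yields $\pi_0 \circ \cansol e = \cansol(\pi_X \circ e)\colon X \to T'(0) = \free{\GF}X$, where the left-hand $\cansol$ is the canonical fixpoint solution for $T$ and the right-hand one is the canonical fixpoint solution for $T'$. Since $\cansol e$ for $T = \GF X + \Id$ coincides with $\altsol e$, which by definition (see~\eqref{diag:cansolHX+Id}) equals $\epselim e$, the target equation follows at once.

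To justify this application, I would verify the two hypotheses of the Factorisation Lemma. First, I would check that $T'$ is a monad on $\C$ satisfying Assumption~\ref{ass:CPPOenrichTcont}. The functor $T'$ arises as the composite of $\free{\GF}$ with the exception monad $X + \Id$, which yields a monad structure via the standard distributive law (equivalently, because the exception monad composes cleanly on the right with any monad). Local continuity of $T'$ follows from local continuity of $\free{\GF}$ (Proposition~\ref{lem:Hstar}) together with $\cppo$-continuity of copairing; existence of initial algebras for $T'(\Id + Y) = \free{\GF}(\Id + X + Y)$ reduces to Assumption~\ref{ass:TheoryHSystems} applied with parameter $X + Y$ in place of $Y$.

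Second, I would verify that $\pi\colon (\GF X + \Id) \To T'$ is a monad morphism. The unit condition is immediate from the definition of $\pi_Y$: its restriction to the summand $Y$ is $\free{\GF}\inr \circ \eta^{\free{\GF}}_Y = \eta^{\free{\GF}}_{X+Y} \circ \inr_Y = \eta^{T'}_Y$, by naturality of $\eta^{\free{\GF}}$. The multiplication condition is the main technical point: it amounts to a case analysis on the three summands of $T T Y = \GF X + \GF X + Y$, each case reducing to the monad laws of $\free{\GF}$ after unfolding the composite-monad formula $\mu^{T'}_Y = \mu^{\free{\GF}}_{X+Y} \circ \free{\GF}([\eta^{\free{\GF}}_{X+Y} \circ \inl, \id_{\free{\GF}(X+Y)}])$ together with the definition of $\pi$. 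This bookkeeping is notationally intricate but conceptually routine; once it is established, the Factorisation Lemma delivers the desired identity without further work.
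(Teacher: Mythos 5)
Your route is the paper's own: both proofs obtain the identity as an instance of the Factorisation Lemma (Proposition~\ref{prop:factorizationLemma}) with $T = \GF X + \Id$, $T' = \free{\GF }(X+\Id)$, $\gamma = \pi$ and parameter $Y=0$, and your verification of the two hypotheses (that $\free{\GF }(X+\Id)$ is a locally continuous monad with the required initial algebras, and that $\pi$ is a monad morphism) is exactly the pair of observations the paper records just before the proposition, sketched in somewhat more detail on your side.

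There is, however, one step you skip, and it is precisely the step the paper's detailed proof makes explicit. The Factorisation Lemma gives you $\pi_0 \circ \altsol e = (\pi_X\circ e)^{\star}$, where the right-hand solution is the canonical fixpoint operator of the monad $T'=\free{\GF }(X+\Id)$ at parameter $0$. But $\cansol{(\pi_X\circ e)}$ in the statement is, by the paper's notational convention, the canonical fixpoint operator of the monad $\free{\GF }$, applied to $\pi_X\circ e\colon X\to\free{\GF }(X+X)$ viewed as an equation morphism with parameter set $X$. These two operators are built from different data: the initial algebra for $\free{\GF }(X+\Id)$ versus the one for $\free{\GF }(\Id+X)$ (harmless up to the coproduct symmetry), and, more to the point, the multiplications $\mu^{T'}$ and $\mu^{\free{\GF }}$, which are different natural transformations. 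One must check that the two solutions of $\pi_X\circ e$ coincide --- e.g.\ by Proposition~\ref{prop:lfp=cfpsolution}, showing that both are least fixpoints of the same continuous map on $\C(X,\free{\GF }X)$ once $\mu^{T'}$ is unfolded in terms of $\mu^{\free{\GF }}$. This identification is not optional: the proposition is used in Theorem~\ref{th:epselimsound} to feed $(\pi_X\circ e)^{\dagger\dagger}$ into the double dagger law \emph{for the monad $\free{\GF }$}, so the right-hand side really must be the $\free{\GF }$-dagger. The check is routine, but your ``the target equation follows at once'' silently conflates the two operators, and you should add this bridging observation to close the argument.
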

\begin{proof}
  This follows simply by an application of Proposition~\ref{prop:factorizationLemma} to $\epselim{e} = e^\ddagger$ and $\gamma = \pi$ with $Y = 0$. \qed
\end{proof}

\noindent We are now in position to show point {\bf \S 3}: soundness of $\epsilon$-elimination.
\newcommand{\propEpsilonElimSound}{For any $\GF $-system $e : X \to \GF X + X$ with internal transitions,
\[\bb{\epselim{e}} = \bb{e}.\]
}
\begin{theorem}[Eliminating internal transitions is sound] \label{th:epselimsound}
\propEpsilonElimSound
\end{theorem}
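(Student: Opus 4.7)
The plan is to chain Proposition~\ref{prop:factEpsilonElim}, the double dagger law from Corollary~\ref{cor:cansolElgot}, and a small calculation identifying $\bar{e}$ with the composite $F^{*}\nabla_{X}\circ \pi_{X}\circ e$.

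First I would unfold both sides via Definition~\ref{def:canfixSem}: $\bb{\epselim{e}} = \cansol{(\kappa_{X}\circ \epselim{e})}$ and $\bb{e} = \cansol{([\kappa_{X},\eta^{\free{\GF}}_{X}]\circ e)} = \cansol{\bar{e}}$. Using the canonical isomorphisms $\GF X+0\cong \GF X$ and $X+0\cong X$, the natural transformation $\pi_{0}\colon \GF X+0 \to \free{\GF}(X+0)$ reduces to $\kappa_{X}\colon \GF X \to \free{\GF}X$ (its component out of $0$ is vacuous). Hence $\kappa_{X}\circ \epselim{e} = \pi_{0}\circ \epselim{e}$, and Proposition~\ref{prop:factEpsilonElim} yields
\[
\bb{\epselim{e}} \;=\; \cansol{(\pi_{0}\circ \epselim{e})} \;=\; \cansol{(\cansol{(\pi_{X}\circ e)})} \;=\; (\pi_{X}\circ e)^{\dagger\dagger},
\]
where the inner dagger views $\pi_{X}\circ e\colon X\to \free{\GF}(X+X)$ as an equation morphism in the variable $X$ with parameter $X$, and the outer dagger takes the resulting $X\to \free{\GF}X$ as an equation morphism with parameter $0$.

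Next I would invoke the double dagger law from Corollary~\ref{cor:cansolElgot} with $Y=0$, which gives $(\pi_{X}\circ e)^{\dagger\dagger} = (\free{\GF}\nabla_{X}\circ \pi_{X}\circ e)^{\dagger}$. A short calculation using the definition $\pi_{X}=[\free{\GF}\inl,\free{\GF}\inr]\circ[\kappa_{X},\eta^{\free{\GF}}_{X}]$, the identities $\nabla_{X}\circ \inl = \nabla_{X}\circ \inr = \id_{X}$, and functoriality of $\free{\GF}$ shows that
\[
\free{\GF}\nabla_{X}\circ \pi_{X} \;=\; [\,\free{\GF}\nabla_{X}\circ \free{\GF}\inl \circ \kappa_{X},\ \free{\GF}\nabla_{X}\circ \free{\GF}\inr\circ \eta^{\free{\GF}}_{X}\,] \;=\; [\kappa_{X},\eta^{\free{\GF}}_{X}],
\]
so that $\free{\GF}\nabla_{X}\circ \pi_{X}\circ e = \bar{e}$. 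Combining,
\[
\bb{\epselim{e}} \;=\; (\pi_{X}\circ e)^{\dagger\dagger} \;=\; (\free{\GF}\nabla_{X}\circ \pi_{X}\circ e)^{\dagger} \;=\; \cansol{\bar{e}} \;=\; \bb{e}.
\]

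The substantive work is done by Proposition~\ref{prop:factEpsilonElim} (which compares the solutions produced by the two different monads along $\pi$) and by the double dagger law (which is what allows the two nested applications of $\cansol{(-)}$ to collapse into a single one). The only potential obstacle is bookkeeping: keeping track of which copy of $X$ plays the role of variable and which of parameter in the successive dagger applications, and checking that the isomorphisms $X+0\cong X$ used to identify $\pi_{0}\circ\epselim{e}$ with $\overline{\epselim{e}}$ are consistent with the parameter conventions of Definition~\ref{def:canfixSem}.
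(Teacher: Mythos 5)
Your proposal is correct and follows essentially the same route as the paper's proof: unfold the definitions, identify $\kappa_X\circ\epselim{e}$ with $\pi_0\circ\altsol{e}$, apply Proposition~\ref{prop:factEpsilonElim} to get $(\pi_X\circ e)^{\dagger\dagger}$, collapse the two daggers via the double dagger law, and check that $\free{\GF}\nabla_X\circ\pi_X=[\kappa_X,\eta^{\free{\GF}}_X]$ so that the result is $\cansol{\bar{e}}=\bb{e}$. The only difference is that you spell out the copairing computation for $\free{\GF}\nabla_X\circ\pi_X$, which the paper leaves implicit.
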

\begin{proof} The statement is shown by the following derivation.
\[
\begin{array}{lcl@{\qquad}l}
\bb{\epselim{e}}\ \ &= & \bb{ \altsol e} & \text{Definition of $\epselim{e}$}\\
    & = & \cansol {(\kappa_X \circ \altsol e)} & \text{Definition of $\bb{-}$ (Def. \ref{def:canfixSem})}\\
  & = &\cansol {(\pi_0 \circ \altsol e)} & \text{Definition of $\pi_0$} \\
& = & (\pi_{X} \circ e)^{\dagger\dagger} & \text{Proposition \ref{prop:factEpsilonElim}} \\
& = & \cansol {(\free{\GF }(\nabla_X) \circ (\pi_{X} \circ e))} & \text{double dagger law}\\%\text{Cor. \ref{cor:cansolElgot}, Rmk. \ref{rmk:doubledagger} for $\free{\GF }$ with $Y=0$} \\
& = &  \cansol {\bar{e}} & \text{Definition of $\bar{e}$ (Def. \ref{def:reprHtofreeH}) and $\pi_X$} \\
& = &  \bb{e} & \text{Definition of $\bb{-}$.}
\end{array}
\]
\qed
\end{proof}
\section{Quotient Semantics}\label{ssec:quot}
When considering behavior of systems it is common to encounter spectrums of successively coarser equivalences. For instance, in basic process algebra trace equivalence can be obtained by quotienting bisimilarity with an axiom stating the distributivity of action prefixing by non-determinism~\cite{Rabinovich93}. There are many more examples of this phenomenon, including Mazurkiewicz traces, which we will describe below.

In this section we develop a variant of the canonical fixpoint semantics, where we can encompass in a uniform manner behaviors which are quotients of the canonical behaviors of the previous section (that is, the object $\free{\GF }0$).
\begin{assumption}\label{ass:quotient} Let $\C$, $\GF $, $\free{\GF }$ and $\GF X+\Id$ be as in Assumption \ref{ass:TheoryHSystems} and $\quotG \colon\free{\GF }\To Q$ a monad quotient for some monad $Q$. Moreover, suppose that for all $Y \in \C$ an initial $Q(\Id +Y)$-algebra exists.
\end{assumption}
Observe that, as Assumption \ref{ass:quotient} subsumes Assumption \ref{ass:TheoryHSystems}, we are within the framework of previous section, with the canonical fixpoint solution of $\free{\GF }$ providing semantics for $\free{\GF }$- and $\GF $-systems. For our extension, one is interested in $Q0$ as a semantic domain coarser than $\free{\GF }0$ and we aim at defining an interpretation for $\GF $-systems in $Q0$. To this aim, we first check that $Q$ has canonical fixpoint solutions.
\newcommand{\propQElgot}{Let $\C$, $\GF $, $Q$ and $\quotG : \free{\GF } \To Q$ be as in Assumption \ref{ass:quotient}. Then Assumption \ref{ass:CPPOenrichTcont} holds for $\C$ and $Q$, meaning that $Q$ is a monad with canonical fixpoint solutions (which satisfy the double dagger law by Corollary \ref{cor:cansolElgot}).}
\begin{proposition}\label{prop:QElgot}
\propQElgot
\end{proposition}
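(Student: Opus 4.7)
The plan is to verify that $\C$ and $Q$ satisfy the three ingredients of Assumption~\ref{ass:CPPOenrichTcont}: that $\C$ is $\cppo$-enriched with coproducts and left-strict composition, that $Q$ is locally continuous, and that initial algebras for $Q(\Id + Y)$ exist for every $Y$. The first bullet is inherited verbatim from Assumption~\ref{ass:TheoryHSystems} (subsumed by Assumption~\ref{ass:quotient}), and the initial-algebra condition is an explicit hypothesis of Assumption~\ref{ass:quotient}. So the only nontrivial content is local continuity of $Q$, which I would derive by transporting local continuity from $\free{\GF}$ along the quotient $\quotG \colon \free{\GF} \To Q$.

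Concretely, first I would invoke Proposition~\ref{lem:Hstar} to note that $\free{\GF}$ is locally continuous, since $\GF$ is locally continuous by Assumption~\ref{ass:TheoryHSystems} and all free $\GF$-algebras exist. Then, given any $\omega$-chain $(f_n \colon X \to Y)_{n < \omega}$ in $\C(X,Y)$, naturality of $\quotG$ yields, for each $n$,
\[
Q(f_n) \circ \quotG_X = \quotG_Y \circ \free{\GF}(f_n),
\]
and likewise $Q(\bigsqcup_n f_n) \circ \quotG_X = \quotG_Y \circ \free{\GF}(\bigsqcup_n f_n)$. Using local continuity of $\free{\GF}$ and continuity of composition (on the right) in the $\cppo$-enriched $\C$, the right-hand side rewrites as
\[
\quotG_Y \circ \bigsqcup_n \free{\GF}(f_n) \;=\; \bigsqcup_n \bigl(\quotG_Y \circ \free{\GF}(f_n)\bigr) \;=\; \bigsqcup_n \bigl(Q(f_n) \circ \quotG_X\bigr),
\]
which, by continuity of composition on the left, equals $\bigl(\bigsqcup_n Q(f_n)\bigr) \circ \quotG_X$. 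Thus
\[
Q\!\bigl(\textstyle\bigsqcup_n f_n\bigr) \circ \quotG_X \;=\; \bigl(\textstyle\bigsqcup_n Q(f_n)\bigr) \circ \quotG_X.
\]

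The final step is to cancel $\quotG_X$ on the right. Since $\quotG$ is a quotient of monads, each component $\quotG_X$ is epimorphic, so right-cancellation yields $Q(\bigsqcup_n f_n) = \bigsqcup_n Q(f_n)$, establishing local continuity of $Q$. I expect the only subtle point to be the appeal to epi-cancellation: one should note that the category-theoretic notion of epimorphism used in the definition of monad quotient is precisely right-cancellability, which is exactly what the argument needs and does not require any further enrichment properties. With this in place, all the hypotheses of Assumption~\ref{ass:CPPOenrichTcont} are discharged for $Q$, so $Q$ carries canonical fixpoint solutions, and Corollary~\ref{cor:cansolElgot} applied to $Q$ yields the double dagger law.
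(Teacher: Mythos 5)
Your proof is correct and follows essentially the same route as the paper: the paper also reduces everything to local continuity of $Q$, obtains local continuity of $\free{\GF}$ from Proposition~\ref{prop:ass2->ass1} (which invokes Proposition~\ref{lem:Hstar}), and then transports it along the epi natural transformation $\quotG$ exactly as you do — the paper merely packages your naturality-plus-epi-cancellation argument as a separate lemma (Proposition~\ref{lem:quot}) instead of inlining it.
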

We use the notation $e \mapsto \quotsol e$ for the canonical fixpoint operator of $Q$. This allows us to define the semantics of $Q$-systems, analogously to what we did for $\free{\GF }$-systems in Definition \ref{def:canfixSem}. Moreover, the connecting monad morphism $\quotG \colon \free{\GF } \To Q$ yields an extension of this semantics to include also systems of transition type $\free{\GF }$ and $\GF $.
\begin{definition}[Quotient Semantics] \label{def:quotSem}
The quotient semantics of $\GF $-systems, with or without internal transitions, $\free{\GF }$-systems and $Q$-systems is defined as follows.
\begin{itemize}
  \item For a $Q$-system $e \colon X \to QX$, its semantics $\bbq{e} \colon X \to Q0$ is defined as $\quotsol{e}$ (note that $e$ can be regarded as an equation morphism for $Q$ with $Y=0$).
\item For an $\free{\GF }$-system $e\colon X \to \free{\GF }X$, its semantics $\bbq{e}\colon X \to Q0$ is defined as $\quotsol {(\quotG_X \circ e)}$.
\item For an $\GF $-system $e$---with or without internal transitions---its semantics $\bbq{e}\colon X \to Q0 $ is defined as $\quotsol {(\quotG_X \circ \bar{e})}$, where $\overline{e}$ is as in Definition \ref{def:reprHtofreeH}.
\end{itemize}
\end{definition}
The Factorisation Lemma (Proposition \ref{prop:factorizationLemma}) allows us to establish a link between the canonical fixpoint semantics $\bb{-}$ and the quotient semantics $\bbq{-}$.
\newcommand{\propfactQuotient}{ Let $e$ be either an $\free{\GF }$-system or an $\GF $-system (with or without internal transitions). Then:
\begin{equation} \label{eq:factQ} \bbq{e} = \quotG_0 \circ \bb{e}. \end{equation}}
\begin{proposition}[Factorisation for the quotient semantics] \label{prop:factQuotient}
\propfactQuotient
\end{proposition}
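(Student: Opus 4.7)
The plan is to reduce all three cases of the statement to a single application of the Factorisation Lemma (Proposition \ref{prop:factorizationLemma}), instantiated with the monad morphism $\quotG \colon \free{\GF} \To Q$. For this to be legitimate, both $\free{\GF}$ and $Q$ must satisfy Assumption \ref{ass:CPPOenrichTcont}, so that the canonical fixpoint solutions $\cansol{-}$ and $\quotsol{-}$ are defined and the lemma applies. These two prerequisites are exactly what Proposition \ref{prop:ass2->ass1} (for $\free{\GF}$) and Proposition \ref{prop:QElgot} (for $Q$) give us.

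The core step is the case of an $\free{\GF}$-system $e \colon X \to \free{\GF} X$. I would view $e$ as an equation morphism $X \to \free{\GF}(X+0)$ through the canonical iso $X \cong X+0$, and then invoke Proposition \ref{prop:factorizationLemma} with $T = \free{\GF}$, $T' = Q$, $\gamma = \quotG$ and parameter $Y = 0$. The conclusion reads
\[
\quotG_0 \circ \cansol{e} \;=\; \quotsol{(\quotG_{X+0} \circ e)} \;=\; \quotsol{(\quotG_X \circ e)},
\]
where the second equality uses naturality of $\quotG$ together with $X + 0 \cong X$. By Definitions \ref{def:canfixSem} and \ref{def:quotSem} the left-hand side is $\quotG_0 \circ \bb{e}$ and the right-hand side is $\bbq{e}$, which is exactly \eqref{eq:factQ} for this case.

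The remaining two cases---$\GF$-systems without internal transitions, and $\GF$-systems with internal transitions---are then immediate. By Definition \ref{def:reprHtofreeH} every such $e$ has a canonical representation as an $\free{\GF}$-system $\bar{e}$, and by Definitions \ref{def:canfixSem} and \ref{def:quotSem} both semantic brackets are defined through $\bar e$: namely $\bb{e} = \bb{\bar e}$ and $\bbq{e} = \bbq{\bar e}$. Applying the $\free{\GF}$-system case to $\bar e$ yields $\bbq{e} = \bbq{\bar e} = \quotG_0 \circ \bb{\bar e} = \quotG_0 \circ \bb{e}$.

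I do not expect any serious obstacle: the argument is essentially a packaging of the Factorisation Lemma plus the uniform representation of $\GF$-systems as $\free{\GF}$-systems. The only delicate bookkeeping point is the identification $\quotG_{X+0} = \quotG_X$ under $X+0 \cong X$ when we instantiate Proposition \ref{prop:factorizationLemma} at the degenerate parameter $Y = 0$; once this is handled, everything else is definitional unfolding.
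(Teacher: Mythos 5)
Your proposal is correct and follows essentially the same route as the paper: a single application of Proposition~\ref{prop:factorizationLemma} with $T=\free{\GF}$, $T'=Q$, $\gamma=\quotG$ and parameter $Y=0$ for $\free{\GF}$-systems, and then the reduction of the two $\GF$-system cases through the representation $\bar{e}$ of Definition~\ref{def:reprHtofreeH}. The paper is slightly more terse (it does not spell out the $X+0\cong X$ bookkeeping or restate the prerequisite propositions), but the content is the same.
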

As a corollary we obtain that eliminating internal transitions is sound also for quotient semantics.
\newcommand{\corSoundQuot}{
For any $\GF $-system $e : X \to \GF X +X$ with internal transitions,
\[\bbq{e} =\bbq{\epselim e}.\]
}
\begin{corollary} \label{for:sound-quot}
\corSoundQuot
\end{corollary}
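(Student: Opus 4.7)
The plan is to derive the corollary by chaining two results already established: the factorisation property for the quotient semantics (Proposition~\ref{prop:factQuotient}) and the soundness of $\epsilon$-elimination for the canonical fixpoint semantics (Theorem~\ref{th:epselimsound}). The intuition is that the quotient semantics factors through the canonical one via post-composition with $\quotG_0$, so any identity between canonical semantics lifts along $\quotG_0$ to an identity between quotient semantics.

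Concretely, I would proceed as follows. First, apply Proposition~\ref{prop:factQuotient} to $e$, regarded as an $\GF$-system with internal transitions, to obtain $\bbq{e} = \quotG_0 \circ \bb{e}$. Next, note that $\epselim e : X \to \GF X$ is an $\GF$-system (without internal transitions), so applying Proposition~\ref{prop:factQuotient} again yields $\bbq{\epselim e} = \quotG_0 \circ \bb{\epselim e}$. Finally, invoke Theorem~\ref{th:epselimsound}, which gives $\bb{\epselim e} = \bb{e}$, and compose with $\quotG_0$ on the left to conclude
\[ \bbq{e} \;=\; \quotG_0 \circ \bb{e} \;=\; \quotG_0 \circ \bb{\epselim e} \;=\; \bbq{\epselim e}. \]

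There is essentially no obstacle here: the corollary is a direct consequence of the two cited results, and the main thing to verify is that the factorisation in Proposition~\ref{prop:factQuotient} applies uniformly to $\GF$-systems both with and without internal transitions (which is precisely how that proposition is stated, since it covers all three classes of systems through the encoding $e \mapsto \bar e$ of Definition~\ref{def:reprHtofreeH}). So the proof is a short three-line calculation.
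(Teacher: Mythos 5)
Your proposal is correct and coincides exactly with the paper's own proof: both derive the corollary from the chain $\bbq{e} = \quotG_0 \circ \bb{e} = \quotG_0 \circ \bb{\epselim e} = \bbq{\epselim e}$, using Proposition~\ref{prop:factQuotient} for the outer equalities and Theorem~\ref{th:epselimsound} for the middle one.
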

The quotient semantics can be formulated in a Kleisli category $\Kl(M)$ by further assuming $(c)$ below. This is needed to lift a quotient of monads from $\Sets$ to $\Kl(M)$.

\newcommand{\thquotcomparewithHJS}{Let $M \: \Sets \to \Sets$ be a monad and $H \: \Sets \to \Sets$ be an accessible functor satisfying the assumptions of Theorem \ref{thm:HJS}. By Proposition \ref{prop:liftingfreemonad} the free monad $\free{H}$ on $H$ lifts to a monad $\lift{\free{H}}\colon \Kl(M) \to \Kl(M)$ via a distributive law $\lambda \: \free{H}M \To M\free{H}$ with $\lift{\free{H}} = \free{\lift{H}}$. Let $\MM \: \Sets \to \Sets$ be a monad and $\quot \: \free{H} \To \MM$ a monad quotient such~that
\begin{itemize}
\item[(c)] for each set $X$, there is a map
 $\lambda_X'\colon \MM MX \to M\MM X$ making the following commute.
 $$\xymatrix{
 \free{H}MX \ar[d]_{{\quot}_{MX}} \ar[r]^{\lambda_X} & M\free{H}X \ar[d]^{M{\quot}_X}\\
 \MM MX \ar[r]_{\lambda'_X} & M\MM X
 }$$
 \end{itemize}
Then the following hold:
\begin{enumerate}
\item there is a monad $\lift{\MM}\colon \Kl(M) \to \Kl(M)$ lifting $\MM$ and a monad morphism $\lift{\quot} \colon \free{\lift{H}} \To \lift{\MM}$ defined as $\lift{{\quot}_X}=\J({\quot}_X)$; \label{pt:QuotKleisli2}
 \item $\Kl(M)$, $\lift{H}$, $\free{\lift{H}}$, $\lift{H}\J X + \Id$ (for a given set $X$), $\lift{\MM}$ and $\lift{\quot} \colon \free{\lift{H}} \To \lift{\MM}$ satisfy Assumption \ref{ass:quotient}. \label{pt:QuotKleisli3}
     \end{enumerate}}
\begin{theorem}\label{th:quotcomparewithHJS}
\thquotcomparewithHJS
\end{theorem}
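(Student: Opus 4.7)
My plan splits into three parts: (i) promote the family $\lambda'$ from condition (c) to a full distributive law of the monad $R$ over the monad $M$, which by Proposition~\ref{LiftProp} yields the lifted monad $\lift{R}$ on $\Kl(M)$; (ii) verify that $X \mapsto \J(\xi_X)$ assembles into a monad morphism $\free{\lift{H}} \To \lift{R}$; and (iii) check the remaining clauses of Assumption~\ref{ass:quotient}.

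For (i), the key observation is that the square in condition (c) reads $\lambda'_X \circ \xi_{MX} = M\xi_X \circ \lambda_X$, and because $\xi_{MX}$ is epi (as a component of a monad quotient), any identity between morphisms out of $RMX$ follows from the corresponding identity out of $\free{H}MX$ by precomposing with $\xi_{MX}$. I use this epi-cancellation to verify simultaneously the naturality of $\lambda'$ and the four distributive-law diagrams: the axioms for $\eta^R$ and $\mu^R$ unfold using that $\xi$ preserves $\eta^{\free{H}}$ and $\mu^{\free{H}}$ combined with the corresponding axioms for $\lambda$, while the axioms for $\eta^M$ and $\mu^M$ unfold via the naturality of $\xi$ and the analogous axioms for $\lambda$. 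Proposition~\ref{LiftProp} then produces the Kleisli lifting $\lift{R}$ of $R$.

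For (ii), I unfold the definitions $\J(\xi_X) = \eta^M_{RX} \circ \xi_X$, Kleisli composition $g \circ_{\Kl} f = \mu^M \circ Mg \circ f$, and the standard formula for Kleisli-lifted action of an endofunctor via its distributive law. A short diagram chase reduces the naturality square for $\lift{\xi}$ at a Kleisli arrow $f \colon X \to MY$ to the identity $M\xi_Y \circ \lambda_Y \circ \free{H}f = \lambda'_Y \circ Rf \circ \xi_X$, which is exactly condition (c) combined with the naturality of $\xi$; preservation of the lifted units and multiplications is a similar calculation using the $\Sets$-level preservation properties of $\xi$ together with the monad laws for $M$.

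For (iii), Assumption~\ref{ass:TheoryHSystems} for the tuple $(\Kl(M), \lift{H}, \free{\lift{H}}, \lift{H}\J X + \Id)$ is precisely Theorem~\ref{th:comparewithHJS}, and componentwise epi-ness of $\lift{\xi}$ in $\Kl(M)$ follows because $\J$ is a left adjoint and therefore preserves epimorphisms. The remaining clause requires an initial $\lift{R}(\Id + Y)$-algebra in $\Kl(M)$ for every $Y$, and by Proposition~\ref{prop:liftinginitialalgebra} it suffices to exhibit an initial $R(\Id + Y)$-algebra in $\Sets$. This is the main obstacle: it amounts to showing that $R$ inherits accessibility from the accessible functor $\free{H}$. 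Since $\xi$ is componentwise surjective, $R$ is the pointwise quotient of $\free{H}$ by the natural equivalence relation $\ker(\xi)$; because $\lambda$-filtered colimits in $\Sets$ commute with the coequalizers realising this quotient, $R$ preserves $\lambda$-filtered colimits for the same $\lambda$ that witnesses accessibility of $\free{H}$. Hence $R(\Id+Y)$ is accessible, an initial algebra is produced by the standard initial-chain construction, and lifting it via Proposition~\ref{prop:liftinginitialalgebra} yields the required initial $\lift{R}(\Id+Y)$-algebra in $\Kl(M)$.
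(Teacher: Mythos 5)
Your parts (i) and (ii) follow the paper's own route almost verbatim: the paper isolates them as a separate lemma (Proposition~\ref{lemma:quotient}), which promotes $\lambda'$ to a distributive law of the monad $\MM$ over $M$ by epi-cancellation along the components of $\quot$, obtains $\lift{\MM}$ from Proposition~\ref{LiftProp}, checks that $\J(\quot_X)$ is a monad morphism by unfolding the Kleisli structure, and gets componentwise epi-ness from the fact that the left adjoint $\J$ preserves epis. Part (iii) also starts the same way (Theorem~\ref{th:comparewithHJS} discharges Assumption~\ref{ass:TheoryHSystems}, and the only remaining obligation is the initial $\lift{\MM}(\Id+Y)$-algebra, reduced to accessibility of $\MM$ on $\Sets$).

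The genuine gap is in your accessibility-transfer argument for $\MM$. Writing $\MM$ as the pointwise coequalizer of the kernel pair $K \rightrightarrows \free{H}$ and invoking commutation of $\lambda$-filtered colimits with coequalizers only tells you that $\colim_i \MM D_i$ is the quotient of $\free{H}(\colim_i D_i)$ by the equivalence relation generated by the image of $\colim_i K D_i$, whereas $\MM(\colim_i D_i)$ is the quotient by the full kernel pair $K(\colim_i D_i)$. These coincide exactly when the comparison map for $K$ is suitably surjective onto that kernel pair, i.e.\ when two elements identified by $\quot$ in the colimit are already identified at some stage --- but that is precisely the injectivity half of what you are trying to prove, so the argument as stated is circular (and the claim that the \emph{same} cardinal $\lambda$ works is not justified). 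The surjectivity half of the comparison map does follow as you say. The paper avoids this by going through the bounded-equals-accessible characterization of Ad\'amek and Porst (Remark~\ref{rem:chain}): $\free{H}$ is accessible, hence bounded; boundedness passes trivially to quotients, since any element of $\MM A$ lifts along the epi $\quot_A$ to an element of $\free{H}A$ supported on a small subset; hence $\MM$ is bounded and therefore accessible (for some, possibly larger, cardinal), which is all the theorem needs. You should replace your colimit-commutation step by this boundedness argument, and also note explicitly (as the paper does via Proposition~\ref{prop:liftingCoproduct}) that $\lift{\MM}(\Id+Y)$ really is a lifting of $\MM(\Id+Y)$ before applying Proposition~\ref{prop:liftinginitialalgebra}.
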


Notice that condition~(c) and the first part of statement \ref{pt:QuotKleisli2}~are
related to~\cite[Theorem~1]{BHKR13}; however, that paper treats distributive
laws of monads over endofunctors.

\begin{example}[Mazurkiewicz traces] \label{Sec:MazurTraces}
This example, using a known equivalence in concurrency theory, illustrates the use of the quotient semantics developed in Section~\ref{ssec:quot}.

The trace semantics proposed by Mazurkiewicz~\cite{Mazurkiewicz77} accounts for concurrent actions. Intuitively, let $A$ be the action alphabet and $a,b\in A$. We will call $a$ and $b$ concurrent, and write $a\equiv b$, if the order in which these actions occur is not relevant. This means that we equate words that only differ in the order of these two actions, e.g.~$uabv$ and $ubav$ denote the same Mazurkiewicz trace.

To obtain the intended semantics of Mazurkiewicz traces we use the quotient semantics defined above\footnote{Mazurkiewicz traces were defined over labelled transition systems which are similar to NDA but where every state is final. For simplicity, we consider LTS here immediately as NDA.}. In particular, for Mazurkiewisz traces one considers a symmetric and irreflexive ``independence'' relation $I$ on the label set $A$. Let $\equiv$ be the least congruence relation on the free monoid $A^*$ such that
\[
(a,b) \in I \Rightarrow ab \equiv ba.
\]

We now have two monads on $\Sets$, namely $\free{H}X=A^* \times X + A^*$ and $\MM X=A^*\!/_\equiv \times X + A^*\!/_\equiv$. There is the canonical quotient of monads $\quot \: \free{H} \To \MM$ given by identifying words of the same $\equiv$-equivalence class. It can be checked that those data satisfy the assumptions of Theorem \ref{th:quotcomparewithHJS} and thus we are allowed to apply the quotient semantics $\bbq{-}$.
This will be given on an NDA $e\colon X \to \lift{H}X$ by first embedding it into $\free{\lift{H}}$ as $\bar e = \kappa_X \circ e \colon X\to \free{\lift{H}}X$ and then into $\lift{\MM}$ as $\lift{\quot}_X \circ \bar e \colon X\to \lift{\MM}X$.
To this morphism we apply the canonical fixpoint operator of $\lift{\MM}$ to obtain $\quotsol{(\lift{\quot}_X \circ \bar e)}$, that is, the semantics $\bbq{e} \colon X \to R0=A^{*}\!/\!\equiv$.
It is easy to see that this definition captures the intended semantics: for all states $x\in X$
\[
\bbq{e}(x) = \{ [w]_\equiv \mid w \in \bb{e}(x)\}.
\]
Indeed, by Proposition \ref{prop:factQuotient}, $\bbq{e}= \lift{\quot}_0 \circ \bb{e}$ and $\lift{\quot}_0 \colon  \free{\lift{H}}0 \to \lift{\MM}0$ is just $\J\quot_0$ where $\quot_0 \colon A^* \to A^*\!/_\equiv$ maps every word $w$ into its equivalence class $ [w]_\equiv$.
\end{example}

\section{Discussion}\label{Sec:Discussion}
The framework introduced in this paper provides a uniform way to express the semantics of systems with internal behaviour via canonical fixpoint solutions. Moreover, these solutions are exploited to eliminate internal transitions in a sound way, i.e., preserving the semantics. We have shown our approach at work on NDA with $\epsilon$-transitions but, by virtue of Theorem \ref{th:comparewithHJS}, it also covers all the examples in~\cite{HasuoJS:07} (like probabilistic systems) and more (like the weighted automata on positive reals of~\cite{SW13}).

It is worth noticing that, in principle, our framework is applicable also to examples that do not arise from Kleisli categories. Indeed the theory of Section~\ref{sec:Theory} is formulated %independently from the category at hand:
for a general category $\cat{C}$: Assumption \ref{ass:TheoryHSystems} only requires $\cat{C}$ to be $\cppo$-enriched and the monad $T$ to be locally continuous. The role of these assumptions is two-fold: (a) ensuring the initial algebra-final coalgebra coincidence and (b) guaranteeing that the canonical fixpoint operator $e \mapsto \cansol{e}$ satisfies the \emph{double dagger law}. If (a) implies (b), we could have formulated our theory just assuming the coincidence of initial algebra and final coalgebra and without any $\cppo$-enrichment. Condition~(a) holds for some interesting examples not based on Kleisli categories, e.g. for examples in the category of join semi-lattices. Therefore it is of relevance to investigate the following question: given a monad $T$ with initial algebra-final coalgebra coincidence, under which conditions does the canonical fixpoint solution provided by $T$ satisfy the double dagger law?

As a concluding remark, let us recall that our original question concerned the problem of modeling the semantics of systems where labels carry an algebraic structure.
In this paper we have mostly been focusing on automata theory, but there are many other examples in which the information carried by the labels has relevance for the semantics of the systems under consideration:
in logic programming labels are substitutions of terms; in (concurrent) constraint programming they are elements of a lattice; in process calculi they are actions representing syntactical contexts and in tile systems~\cite{DBLP:conf/birthday/GadducciM00} they are morphisms in a category. We believe that our approach provides various insights towards a coalgebraic semantics for these computational models.

\paragraph{Acknowledgments.}
We are grateful to the anonymous referees for valuable comments.
The work of Alexandra Silva is partially funded by the ERDF through the
Programme COMPETE and by the Portuguese Foundation for Science and
Technology, project ref.~\texttt{FCOMP-01-0124-FEDER-020537} and
\texttt{SFRH/BPD/}\texttt{71956/2010}.
The first and the fourth author acknowledge support by project \texttt{ANR 12IS0} \texttt{2001 PACE}.

\bibliographystyle{splncs03}

\newpage
\appendix
\appendix

\section{Proofs of Section~\ref{Sec:Trace}}
%
%\fbnote{I believe that here it is clearer with the notation $H(-)+X$ rather than $H+X$}
In this appendix, we show the proofs of Proposition~\ref{prop:liftingfreemonad} and \ref{lem:Hstar}.
The proofs of the other results shown in Section~\ref{Sec:Trace} can be found in the referred literature.

\begin{proposition_for}{prop:liftingfreemonad}
\propliftingfreemonad
\end{proposition_for}
\begin{proof}

Let $\lambda \colon HM \to MH$ be the distributive law of the functor $H$ over the monad $M$ corresponding to the lifting $\lift{H}$ (see Proposition~\ref{LiftProp}).
For an object $X$, we define $\gamma_X \colon \free{H}M \to \free{H}M$ by the universal property of the initial $H(-)+MX$-algebra $\free{H}(MX)$.
\begin{equation}
  \vcenter{
  \xymatrix{
    H\free{H}MX \ar[rr]^-{\tau_{MX}} \ar[d]_-{H\gamma_X} && \free{H}MX \ar@{.>}[d]^{\gamma_X} & MX \ar[l]_-{\eta_{MX}} \ar[ld]^{M\eta_X}\\
    HM\free{H}X \ar[r]_-{\lambda_{TX}} & MH\free{H}X \ar[r]_-{M \tau_X}
    & M\free{H}X
  }}
\end{equation}
By diagram chasing, one can show that $\gamma \colon \free{H}M \To M\free{H}$ is a distributive law of the monad $\free{H}$
over the monad $M$ and, by Proposition~\ref{LiftProp}, we have a lifting $\lift{\free{H}}\colon \Kl(M)\to \Kl(M)$.

For proving $\lift{\free{H}} = \free{\lift{H}}$,
take $\alpha_X \colon H(\free{H}(X))+X \to \free{H}(X)$ to be the initial $H(-)+X$-algebra and observe that $\J(\alpha)$ is the initial
$\lift{H}(-)+X$-algebra (Proposition~\ref{prop:liftinginitialalgebra}). The fact that the units and the multiplications of $\lift{\free{H}}$ and $\free{\lift{H}}$ coincide is immediately proved by functoriality of $\J$.
%
%The unit $\eta^{\free{\lift{H}}}$ is defined as $\J(\alpha\circ \iota_2)$ that is $\eta^M \circ \alpha \circ \iota_2 = \eta_M \circ \eta^{\free{H}}$ that is $\J(\eta^{\free{H}})$ that, by definition of lifting, is the unit of $\lift{\free{H}}$.
 \qed
\end{proof}

\begin{proposition_for}{lem:Hstar}
 \lemHstar
\end{proposition_for}
\begin{proof}
  First recall that $\free{H} X$ is a free $H$-algebra with the
  structure $\tau_X$ and the universal morphism $\eta_X$
  (cf.~Example~\ref{ex:mnds}(5)). Equivalently, $\alpha_X = [\tau_X,
  \eta_X]: H(\free{H} X) + X \to \free{H} X$ is an initial algebra for
  $H(-) +X$. Given a morphism $f: X \to Y$, $\free{H} f$ is defined by
  initiality; more precisely, $\free{H} f$ is the unique morphism such
  that the following diagram commutes:
  \[
  \xymatrix@C+1pc{
    H(\free H X) + X \ar[rr]^-{\alpha_X}
    \ar[d]_{H(\free H f) + \id}
    &&
    \free H X
    \ar@{-->}[d]^{\free H f}
    \\
    H(\free H Y) + X
    \ar[r]_-{\id +f }
    &
    H(\free H Y) + Y
    \ar[r]_-{\alpha_Y}
    &
    \free H Y
    }
  \]
  Now recall that $\alpha_X$ is an isomorphism and consider the following function
  \[
  \Phi: \cat C(X,Y) \times \cat C(\free H X, \free H Y) \to \cat
  C(\free H X, \free H Y)
  \]
  with
  \[
  \Phi(f,h) = \alpha_Y \cdot (Hh + f) \cdot \alpha_X^{-1}.
  \]
  Since $H$ is locally continuous, we see that $\Phi$ is continuous (in
  both arguments). Clearly, $\free H f$ is the unique fixpoint of $\Phi(f, -)$. To see that $\free H f$ is
  locally continuous let $f_n: X \to Y$ be an $\omega$-chain in $\cat
  C(X,Y)$. It is easy to see that $\bigsqcup_{n < \omega} \free H f_n$
  is a fixpoint of $\Phi\left(\bigsqcup_{n< \omega} f_n,-\right)$; indeed we
  have (using continuity of $\Phi$):
  \begin{eqnarray*}
    \bigsqcup\limits_{n < \omega} \free H f_n
    & = & \bigsqcup\limits_{n < \omega} \Phi(f_n, \free H f_n) \\
    & = & \Phi\left(\bigsqcup\limits_{n < \omega} f_n,
      \bigsqcup\limits_{n < \omega}  \free H f_n\right).
  \end{eqnarray*}
  Thus, by the uniqueness of the fixpoint $\free
  H\left(\bigsqcup_{n < \omega} f_n\right)$ we have
  \[
  \free H\left(\bigsqcup\limits_{n < \omega} f_n \right) =
  \bigsqcup\limits_{n < \omega} \free H f_n
  \]
  as desired. \qed
\end{proof}

Finally, we record a simple lemma for future use:

\begin{proposition}
  \label{lem:quot}
 Let $H'$ be a quotient functor of the locally continuous functor $H$ on the \cppo-enriched category $\C$. Then $H'$ is locally continuous, too.
\end{proposition}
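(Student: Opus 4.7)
The plan is to exploit the defining natural transformation of a quotient. Since $H'$ is a quotient of $H$, there is a natural transformation $q \colon H \To H'$ whose components $q_X$ are epimorphisms. The strategy is to transport the local continuity of $H$ across $q$ and then cancel the (epi) component $q_X$ to obtain local continuity of $H'$.

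Concretely, fix an $\omega$-chain $f_n \colon X \to Y$ in $\cat C(X,Y)$. I would chase the naturality square of $q$ at each $f_n$, giving $q_Y \circ H(f_n) = H'(f_n) \circ q_X$ for all $n$. Taking the join on both sides and using continuity of composition in the $\cppo$-enriched category $\C$ (continuous in either argument), I get
\[
q_Y \circ \bigsqcup_{n<\omega} H(f_n) \;=\; \Bigl(\bigsqcup_{n<\omega} H'(f_n)\Bigr) \circ q_X.
\]
Local continuity of $H$ rewrites the left-hand side as $q_Y \circ H\!\left(\bigsqcup_{n<\omega} f_n\right)$, which by naturality of $q$ at $\bigsqcup_{n<\omega} f_n$ equals $H'\!\left(\bigsqcup_{n<\omega} f_n\right) \circ q_X$. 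Thus
\[
H'\!\Bigl(\bigsqcup_{n<\omega} f_n\Bigr) \circ q_X \;=\; \Bigl(\bigsqcup_{n<\omega} H'(f_n)\Bigr) \circ q_X.
\]

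Finally, since $q_X$ is an epimorphism, I can cancel it on the right to conclude $H'\!\left(\bigsqcup_{n<\omega} f_n\right) = \bigsqcup_{n<\omega} H'(f_n)$, which is local continuity of $H'$. The only potential subtlety is the meaning of ``quotient functor'' in this paper (the text only explicitly defines quotients of monads as monad morphisms with epimorphic components); I would simply take the analogous definition for functors, namely a natural transformation $q \colon H \To H'$ with epimorphic components, which is the standard notion and the minimum needed to make the cancellation step go through. No further assumption on $\C$ is required beyond those already built into the $\cppo$-enrichment.
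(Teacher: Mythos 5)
Your proof is correct and follows essentially the same route as the paper's: both chase the naturality squares of the epi natural transformation $q\colon H\To H'$, use local continuity of $H$ and continuity of composition to identify $H'\bigl(\bigsqcup_{n<\omega} f_n\bigr)\circ q_X$ with $\bigl(\bigsqcup_{n<\omega} H'f_n\bigr)\circ q_X$, and then cancel the epimorphism $q_X$. Your reading of ``quotient functor'' as a natural transformation with epimorphic components is exactly what the paper intends.
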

\begin{proof}
 Suppose that $\quotG \colon H \to H'$ is an epi natural transformation.
 Consider an $\omega$-chain $(f_n)_{n<\omega}$ in $\C(X,Y)$. To prove that $H'(\bigsqcup f_n) = \bigsqcup H'f_n$ we show that
\begin{align*}
H' (\bigsqcup_{n< \omega} f_n) \circ q_X &= & q_Y \circ H(\bigsqcup_{n< \omega} f_n) \tag{naturality of $\quotG$}\\
  & = &q_Y \circ (\bigsqcup_{i< \omega} Hf_n) & \tag{$H$ locally continuous} \\
& = & \bigsqcup_{n< \omega} (q_Y \circ  Hf_n ) & \tag{continuity of comp.} \\
& = & \bigsqcup_{n< \omega} (H'f_n \circ q_X ) & \tag{naturality of $\quotG$} \\
& = &  (\bigsqcup_{n< \omega} H'f_n) \circ q_X  & \tag{continuity of comp.} \\
\end{align*}
and we use that $\quotG_X$ is epi.
\qed
\end{proof}

%We conclude this section by fixing a statement about liftings and the existence of free algebras which is useful for later applications.

\section{Proofs of Section~\ref{sec:Theory}}\label{App:proofsTheory}

In this appendix, we report the proofs of the results stated in Section~\ref{sec:Theory}, apart from Theorem~\ref{th:comparewithHJS} that we prove separately in the next appendix.

\begin{proposition_for}{prop:lfp=cfpsolution}
\proplfpcfpsolution
\end{proposition_for}
\begin{proof} It suffices to show that $\fromInit \circ \toFinal{e}$ is the least fixpoint of the continuous function $\Phi_e$ on $\C(X,TY)$, defined as in Example~\ref{ex:LfpSolCPO}. To this aim, first observe that the least fixpoint of $\Phi_e$ can be obtained as the $\omega$-join
\[\lsol e = \bigsqcup_{n< \omega}\lsol e_n : X \to TY\]
where $\lsol e_0 = \bot_{X,TY}$ and $\lsol e_{n+1} = \mu^T_Y \circ T[\lsol e_{n},\eta_Y^T] \circ e$.

An analogous observation can be made for the coalgebra morphism $\toFinal{e} : X \to \carrier_Y$. By finality of $\carrier_Y$, $\toFinal{e}$ is the unique map making the left-hand square in \eqref{diag:canonicalSol} commute. In particular, it is the \emph{least} function---in the cpo $\C(X, \carrier_Y)$---to do so: thus it is the least fixpoint of a continuous function, expressed by the $\omega$-join
\[\toFinal{e} = \bigsqcup_{n< \omega} c_n : X \to \carrier_Y\]
 where $c_0 = \bot_{X,\carrier_Y}$ and $c_{n+1} = \iota_Y \circ T(c_n + \id_Y) \circ e$.
 Analogously, by initiality of $\carrier_Y$, $\fromInit : \carrier_Y \to TY$ is the unique---and thus the least---fixpoint of a continuous function on $\C(\carrier_Y,TY)$, calculated as follows:
\[\fromInit = \bigsqcup_{n< \omega} d_n : \carrier_Y \to TY\]
 where $d_0 = \bot_{\carrier_Y,TY}$ and $d_{n+1} = \mu^T_Y \circ T[d_{n},\eta_Y^T] \circ \iota^{-1}_Y$.

\noindent We now show by induction on $n$ that $\lsol e_n = d_n \circ c_n$:
\begin{itemize}
  \item for $n=0$, by left-strictness of composition we have
      \[\lsol e_0 = \bot_{X,TY} = \bot_{\carrier_Y,TY} \circ \bot_{X,\carrier_Y} = d_0 \circ c_0.\]
  \item For the inductive step, consider the following derivation:
      \begin{align*}d_{n+1} \circ c_{n+1} =\ &  \mu^T_Y \circ T[d_n,\eta^T_Y] \circ \iota^{-1}_Y \circ \iota_Y \circ T(c_n + \id_Y) \circ e \\
       =\ & \mu^T_Y \circ T[d_n,\eta^T_Y] \circ T(c_n + \id_Y) \circ e \\
      =\ & \mu^T_Y \circ T[d_n\circ c_n,\eta^T_Y] \circ e\\
        \overset{IH}{=}\ & \mu^T_Y \circ T[\lsol e_n,\eta^T_Y] \circ e \\
        =\ &\lsol e_{n+1}.
      \end{align*}
\end{itemize}

\noindent Thus we are allowed to conclude:
\[\lsol e = \bigsqcup_{n< \omega}\lsol e_n = \bigsqcup_{n< \omega} d_n \circ c_n = \bigsqcup_{n< \omega} d_n \circ \bigsqcup_{n< \omega} c_n = \fromInit \circ \toFinal{e},\]
where the third equality is given by continuity of composition in $\C$.\qed
\end{proof}

\begin{remark}\label{rm:uniqueness} In the proof of Proposition~\ref{prop:lfp=cfpsolution} one observes that both $\toFinal{e}$ and $\fromInit$ are \emph{unique} fixpoints for the continuous functions on $\cat C(X, \carrier_Y)$ and $\cat C(\carrier_Y, TY)$, respectively, corresponding to commutativity of the two inner squares in \eqref{diag:canonicalSol}. Nonetheless, the same is not true for their composite $\cansol e$, which we just prove to be the \emph{least} solution for $e$: there are possibly other maps making the outer rectangle in \eqref{diag:canonicalSol} commute (cf. Remark~\ref{rm:multiple}).
\end{remark}

 \begin{proposition_for}{prop:factorizationLemma}
 \propfact
 \end{proposition_for}
\begin{proof} First we construct the canonical fixpoint solution for $e \: X \to T(X+Y)$ and $e' \stackrel{\mathrm{def}}= \gamma_{X+Y} \circ e \colon X \to T'(X+Y)$. The former will factor through the initial $T(\Id +Y)$-algebra $\iota_Y : T(\carrier_Y+ Y) \to \carrier_Y$ and the latter through the initial $T'(\Id +Y)$-algebra ${\iota'}_Y \: T'(\carrier'_Y+ Y) \to \carrier'_Y$ as in the diagram:
\begin{equation*}\label{diag:factcanonicalSol}\small
\vcenter{
    \xymatrix@R=12pt@C=7pt{
    X \ar@{-->}[rr]^{\toFinal{e}} \ar[dd]_{e} \ar@{=}[dddr] & & \carrier_Y \ar@/_/[dd]_{\iota_Y^{-1}} \ar@{-->}[rr]^{\fromInit} & & TY \ar[dddr]^{\gamma_Y} \\
     & & & & TTY \ar[u]^{\mu^T_Y} \\
    T(X+Y)  \ar@{-->}[rr]_>>>>>>>>>{T(\toFinal{e} + \id_Y)} & & T(\carrier_Y +Y)
    \ar@/_/[uu]_{\iota_Y}
    \ar@{-->}[rr]_{T( \fromInit + \id_Y)} & & T(TY+Y) \ar[u]^{T[\id_{T Y},\eta^T_Y]}
    \\
     &   X \ar@{-->}[rr]^{\toFinal{e'}} \ar[dd]_{e'} & & \carrier'_Y \ar@/_/[dd]_{{\iota'}_Y^{-1}} \ar@{-->}[rr]^{\fromInit'} & & T'Y\\
     & & & & & T'T'Y \ar[u]_{\mu^{T'}_Y} \\
    & T'(X+Y)  \ar@{-->}[rr]_{T'(\toFinal{e'} + \id_Y)} & & T'(\carrier'_Y +Y)
    \ar@/_/[uu]_{{\iota'}_Y}
    \ar@{-->}[rr]_{T'( \fromInit' + \id_Y)} & & T'(T'Y+Y) \ar[u]_{T'[\id_{T' Y},\eta^{T'}_Y]}
    }
}
\end{equation*}
The statement of the Proposition amounts to show that the top face of the diagram commutes, that is,
\begin{equation}\label{eq:fact}
\gamma_Y \circ \fromInit \circ \toFinal{e} = \fromInit' \circ \toFinal{e'}.
\end{equation}
We are going to prove \eqref{eq:fact} by exploiting the initiality of $\carrier_Y$ and finality of $\carrier'_Y$. For this purpose, it is convenient to make the following observation:
\begin{itemize}
  \item[$(*)$] any $T'(\Id +Y)$-algebra $f \colon T'(A+Y) \to A$ canonically induces a $T(\Id +Y)$-algebra  $f \circ \gamma_{A+Y} \colon T(A+Y) \to A$ and the same---by naturality of $\gamma$---for algebra homomorphisms. Dually, any $T(\Id +Y)$-coalgebra $g \colon B \to T(B+Y)$ canonically induces a $T'(\Id +Y)$-algebra $\gamma_{B+Y} \circ g : T'(B+Y) \to B$ and the same for coalgebra homomorphisms.
      \begin{equation*}
    \xymatrix@C=40pt@R=15pt{
    A \ar[r]^{h} & \tilde{A} \\
    T'(A+Y) \ar[u]^{f} \ar[r]^{T'(h+\id_Y)} & T'(\tilde{A}+Y) \ar[u]^{\tilde{f}} \\
    T(A+Y) \ar[u]^{\gamma_{A+Y}} \ar[r]^{T(h+\id_Y)} & T(\tilde{A}+Y) \ar[u]^{\gamma_{\tilde{A}+Y}}
    }
    \qquad
    \xymatrix@C=40pt@R=15pt{
    B \ar[r]^{u} \ar[d]^{g}& \tilde{B} \ar[d]^{\tilde{g}}\\
    T'(B+Y) \ar[d]^{\gamma_{B+Y}} \ar[r]^{T'(u+\id_Y)} & T'(\tilde{B}+Y) \ar[d]^{\gamma_{\tilde{B}+Y}} \\
    T(B+Y) \ar[r]^{T(u+\id_Y)} & T(\tilde{B}+Y)
    }
\end{equation*}
\end{itemize}
By observation $(*)$, $\carrier'_Y$ has a $T(\Id +Y)$-algebra structure and thus by initiality there is a unique $T(\Id +Y)$-algebra morphism $a : \carrier_Y \to \carrier'_Y$. Then our claim \eqref{eq:fact} reduces to the commutativity of the following diagram.
      \begin{equation} \label{diag:fact12}
    \vcenter{
        \xymatrix@C=80pt@R=30pt{
        X \ar[r]^{\toFinal{e}} \ar[dr]_{\toFinal{e'}} & \carrier_Y \ar@{}[ld]|<<<<<<<<<<{\circone{2}} \ar@{}[rd]|{\circone{1}} \ar[r]^{\fromInit} \ar@{.>}[d]^-{a}& TY \ar[d]^{\gamma_Y} \\
        & \carrier'_Y\ar[r]^{\fromInit'} & T'Y
        }
    }
\end{equation}
We address commutativity of $\circone{1}$ and of $\circone{2}$ separately.
\begin{itemize}
  \item[-$\circone{1}$-] By observation $(*)$, $T'Y$ has also a $T(\Id +Y)$-algebra structure. Then, by initiality of $\carrier_Y$, for commutativity of $\circone{1}$ it suffices to show that $\gamma_Y \circ \fromInit$ and $\fromInit' \circ a$ are $T(\Id +Y)$-algebra morphisms. For this purpose, first observe that by construction $\fromInit$ and $a$ are $T(\Id +Y)$-algebra morphism and the same for $\fromInit'$ in virtue of observation $(*)$. Hence it suffices to prove that also $\gamma_Y$ is a $T(\Id +Y)$-algebra morphism. That is given by commutativity of the following diagram
        \begin{equation*}\label{diag:gammaAlgMorf}
    \vcenter{
        \xymatrix@C=60pt@R=30pt{
        TY \ar[rr]^{\gamma_Y} \ar@{}[rrd]|{\circone{X}} & & T'Y \\
        TTY \ar[u]^{\mu^T_Y} \ar[r]^{T\gamma_Y} \ar@{}[rd]|{\circone{Y}}  & TT'Y \ar[r]^{\gamma_{T'Y}} \ar@{}[rd]|{\circone{Z}} & T'T'Y \ar[u]_{\mu^{T'}_Y} \\
        T(TY+Y) \ar[r]^{T(\gamma_Y + \id_Y)} \ar[u]^{T[\id_{T'Y} + \eta^{T}_Y]} & T(T'Y+Y) \ar[r]^{\gamma_{T'Y+Y}} \ar[u]|{T[\id_{T'Y} + \eta^{T'}_Y]} & T'(T'Y+Y) \ar[u]_{T'[\id_{T'Y} + \eta^{T'}_Y]}
        }
    }
\end{equation*}
      where $\circone{X}$ and $\circone{Y}$ commute because $\gamma$ is a monad morphism and $\circone{Z}$ by naturality of $\gamma$.
  \item[-$\circone{2}$-] To show that also $\circone{2}$ in \eqref{diag:fact12} commutes, we first check that $a : \carrier_Y\to \carrier'_Y$ is also a $T'(\Id +Y)$-coalgebra morphism:
        \begin{equation*}
    \vcenter{
        \xymatrix@C=40pt@R=15pt{
            \carrier_Y\ar[rrr]^{a} \ar@/_/[d]_{\iota_Y^{-1}} &&& \carrier'_Y \ar@/_/[d]_{{\iota'}_Y^{-1}} \\
            T(\carrier_Y+Y) \ar@/_/[u]_{\iota_Y} \ar[drrr]_>>>>>>>>>>>>>>{T(a+\id_Y)} \ar[d]^{\gamma_{\carrier_Y+ Y}} \ar@{}[rd]|{\circone{V}} &&& T'(\carrier'_Y +Y) \ar@/_/[u]_{{\iota'}_Y} \ar@{}[ld]|{\circone{U}} \\
            T'(\carrier_Y +Y) \ar[urrr]^>>>>>>>>>>>>>{T'(a+\id_Y)} &&& T(\carrier'_Y+Y) \ar[u]_{\gamma_{\carrier'_Y+ Y}}
            }
    }
\end{equation*}
      In the diagram above, the pentagon with angle $\circone{U}$ commutes because $a$ is a $T(\Id +Y)$-algebra morphism, whereas the pentagon with angle $\circone{V}$ commutes by naturality of $\gamma$ applied to the maps $T(a + \id_Y)$ and $T'(a + \id_Y)$.

      To conclude, observe that $\toFinal{e'}$ (by construction) and $\toFinal{e}$ (by observation $(*)$) are also $T'(\Id +Y)$-coalgebra morphisms. Thus $a \circ \toFinal{e} = \toFinal{e'}$ by finality of $\carrier'_Y$, meaning that $\circone{2}$ in \eqref{diag:fact12} commutes.
\end{itemize}\qed
\end{proof}

 \begin{proposition_for}{prop:ass2->ass1}
 \propass
 \end{proposition_for}
 \begin{proof} We check that the two monads satisfy Assumption~\ref{ass:CPPOenrichTcont}. For all $Y\in\C$, the condition on the existence of initial algebras for the endofunctors $\free{\GF }(\Id+Y)$ and $\GF X + \Id +Y$ is already guaranteed by Assumption~\ref{ass:TheoryHSystems}. It remains to show local continuity. As $\GF $ is locally continuous and all free $\GF $-algebras exist, the monad $\free{\GF }$ is also locally continuous by Proposition~\ref{lem:Hstar}. Local continuity of $\GF X + \id$ is immediate by the fact that all copairing maps $[-,-]:\cat C(Y,Z) \times \cat C(Y', Z) \to \cat C(Y+Y',Z)$ in the $\cppo$-enriched category $\C$ are continuous (\emph{cf.} Section \ref{sec:cppo}).
 \qed
 \end{proof}

 \begin{proposition_for}[Factorisation property of $e \mapsto \epselim{e}$]{prop:factEpsilonElim}
\propfactEpsilonElim
%where the solution to $\pi_{X} \circ e: X \to \free{\GF }(X+X)$ is provided by the canonical fixpoint operator of $\free{\GF }$ with parameter $Y=X$.
\end{proposition_for}
\begin{proof}
% As we will be working with three different Elgot Monads, we use a graphical distinction for their canonical fixpoint solution operators: \begin{itemize}  \item the Elgot monad $\GF X + -$ has operator $e \mapsto \altsol e$;   \item the Elgot monad $\free{\GF }$ has operator $e \mapsto \cansol e$; \item the Elgot monad $\free{\GF }(X+-)$ has operator $e \mapsto \sol e$.\end{itemize}
Let us use the notation $e \mapsto \sol e$ for the canonical fixpoint solution operator of $\free{\GF }(X+\Id)$. We now apply Proposition~\ref{prop:factorizationLemma} to show that solutions of $\free{\GF }(X+\Id)$ factorize through the ones of $\GF X + \Id$. The connecting monad morphism is $\pi : \GF X + \Id \to \free{\GF }(X+\Id)$, defined above. Proposition~\ref{prop:factorizationLemma} yields the following factorisation property:
\begin{itemize}
\item[$(*)$] for any $Y,Z \in C$ and equation morphism $e\colon Z \to \GF X + Z + Y$, consider $\pi_{Z + Y} \circ e \colon Z \to \free{\GF }(X + Z +Y)$. The solution $\sol {(\pi_{Z + Y} \circ e)} \colon Z \to \free{\GF }(X + Y)$ provided by $\free{\GF }(X+\Id)$ factorises as $\pi_Y \circ \altsol e$, where $\altsol e \colon Z \to \GF X +Y$ is the solution provided by $\GF X + \id$ to $e$.
\end{itemize}
If we fix $Z = X$ and $Y = 0$, then $(*)$ says: for any $\GF $-system $e: X \to \GF X + X$ with internal computation, consider the equation morphism $(\pi_{X + 0} \circ e : X) \to \free{\GF }(X + X + 0)$ for $\free{\GF }(X+\Id)$ with parameter $Y = 0$. Then the following diagram commutes:
\begin{equation}\label{diag:factHintoHstar}
\xymatrix{
    X \ar[rr]^{\sol {(\pi_{X} \circ e)}} \ar[drr]_{\altsol e} && \free{\GF }X \\
    && \GF X \ar[u]_{\pi_0}
}
\end{equation}
To conclude our argument, we observe that the the system $\pi_{X + 0} \circ e \colon X \to \free{\GF }(X + X + 0)$ can be also seen as an equation for $\free{\GF }$ with parameter $Y = X + 0$. This means that also $\free{\GF }$ provides a solution to such equation, which can be checked to coincide with the one given by $\free{\GF }(X+\Id)$, that is, $\sol {(\pi_{X} \circ e)} = \cansol {(\pi_{X} \circ e)}$. Then the main statement is proven by be the following derivation:
\begin{align*}
\pi_0 \circ \epselim{e}\ \ &= & \pi_0 \circ \altsol e \tag{Definition of $\epselim{e}$}\\
    & = & \sol {(\pi_{X} \circ e)} \tag{commutativity of \eqref{diag:factHintoHstar}}\\
  & = &\cansol {(\pi_{X} \circ e)}. & \tag{observation above}
\end{align*} \qed
\end{proof}

\section{Proof of Theorem~\ref{th:comparewithHJS}}\label{app:proofthComparewithHJS}

This section is devoted to prove Theorem~\ref{th:comparewithHJS}. To this aim, we first give more details on accessible endofunctors and how they yield a canonical free algebra construction.

\begin{remark}
  \label{rem:chain}
  \begin{enumerate}[(1)]
  \item \label{pt:bounded} Ad\'amek and Porst~\cite{ap:04} showed that an endofunctor $H$ on $\Sets$ is
    accessible iff is it bounded in the following sense: there exists
    a cardinal $\lambda$ such that for every set $A$, every element of
    $HA$ lies in the image of $Hb$ for some $b: B \hookrightarrow A$
    of less than $\lambda$ elements.

  \item \label{pt:HaccessibleFreeAlg} Recall from~\cite{adamek:74} that for an accessible endofunctor
    $H$ on a cocomplete category $\C$ (not only the initial but) all
    \emph{free} $H$-algebras exist and are obtained from an inductive
    construction. More precisely, for every object $X$ of $\C$ define
    the following ordinal indexed \emph{free-algebra-chain}:
    \begin{eqnarray*}
      H_0 X & = & X, \\
      H_{i+1} X & = & HH_i X + X, \\
      H_j X & = & \colim\limits_{i < j} H_i X\qquad \text{for a
        limit ordinal $j$}.
    \end{eqnarray*}
    Its connecting morphisms $u_{i,j}: H_i X \to H_j X$ are
    uniquely determined by
    \[
    \begin{array}{rcl}
      u_{0,1} & = & (\xymatrix@1{X \ar[r]^-\inr & HX + X}),\\
      u_{i+1,j+1} & = & (\xymatrix@1{
        HH_i X + X \ar[rr]^-{Hu_{i,j} + X} && HH_j X + X}), \\
      \multicolumn{3}{p{10cm}}{$u_{i,j}$ ($i < j$) is the colimit cocone for
        limit ordinals $j$.}
    \end{array}
    \]
    Indeed, this defines an ordinal indexed chain uniquely (up to
    isomorphism). The ``missing'' connecting maps are determined by
    the universal property of colimits, e.g.~$u_{\omega,\omega+1}$ is
    unique such that $u_{\omega,\omega+1} \cdot u_{i+1,\omega} =
    u_{i+1,\omega+1} = Hu_{i,\omega}$ for all $i < \omega$.

    Now suppose that $H$ preserves $\lambda$-filtered
    colimits. Then $u_{\lambda, \lambda+1}$ is an isomorphism and one
    can show that $H_\lambda X$ is a free $H$-algebra on $X$ with the
    structure and universal morphism given by
    $u_{\lambda,\lambda+1}^{-1}$.
  \item \label{pt:freeHaccessible} As we saw previously, the assignment of a free $H$-algebra on $X$
    to any object $X$ yields a free monad on $H$; thus, in item~(2)
    above we have $\free{H} = H_\lambda$. Now notice that the construction
    in the previous point can be written object free; we obtain $\free{H}$
    after $\lambda$ steps of the following chain in the category of
    endofunctors on $\C$:
    \begin{eqnarray*}
      H_0 & = & \Id, \\
      H_{i+1} & = & HH_i + \Id, \\
      H_j & = & \colim\limits_{i < j} H_j \qquad\textrm{for limit
        ordinals $i$.}
    \end{eqnarray*}
    The connecting natural transformations $H_i \To H_j$ have the
    components described as connecting morphisms in item~(2).

    As a consequence we see that if $H$ is accessible then so is
    $\free{H}$; indeed, all $H_i$  preserve $\lambda$-filtered colimits if
    $H$ does.
  \end{enumerate}
\end{remark}

The next Proposition is instrumental in relating accessiblity of an endofunctor with the existence of initial algebras for its lifting.

\newcommand{\propliftingCoproduct}{Let $\C$ a cocomplete category, $M \colon \C \to \C$ be a monad and $\GFG \: \C \to \C$ be an accessible endofunctor with a lifting $\lift{\GFG} \colon \Kl(M) \to \Kl(M)$. Then for all $X \in \Kl(M)$ both the initial $\lift{\GFG}(\Id+X)$-algebra and the initial $\lift{\GFG}(\Id)+X$-algebra exist.}
\begin{proposition} \label{prop:liftingCoproduct}
\propliftingCoproduct
\end{proposition}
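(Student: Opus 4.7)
The plan is to reduce both claims to a single application of Proposition~\ref{prop:liftinginitialalgebra}, by exhibiting each of the two endofunctors on $\Kl(M)$ as a lifting of an accessible endofunctor on the cocomplete category $\C$, which therefore admits an initial algebra by Remark~\ref{rem:chain}.(\ref{pt:HaccessibleFreeAlg}).

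First I would show that the endofunctors $\GFG(\Id+X)$ and $\GFG + X$ on $\C$ (where here $X$ is viewed as an object of $\C$) are accessible. Since $\GFG$ preserves $\lambda$-filtered colimits for some cardinal $\lambda$ by hypothesis, and both $(-)+X$ and $(\Id+X)$ preserve all colimits (colimits commute with colimits), the composites $\GFG(\Id+X)$ and $\GFG+X$ preserve $\lambda$-filtered colimits as well. By Remark~\ref{rem:chain}.(\ref{pt:HaccessibleFreeAlg}), using cocompleteness of $\C$, each of these functors admits an initial algebra $\iota\colon F I \congrightarrow I$ in $\C$.

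Next I would verify the two lifting conditions: that $\lift{\GFG}(\Id+X)\colon \Kl(M) \to \Kl(M)$ is a lifting of $\GFG(\Id+X)$, and that $\lift{\GFG}+X$ is a lifting of $\GFG+X$; i.e.\ that each endofunctor commutes with $\J$ in the appropriate way. This rests on two ingredients already present in the paper: the hypothesis $\lift{\GFG}\circ \J = \J\circ \GFG$, together with the fact recalled in Section~\ref{Sec:Trace} that the coproducts in $\Kl(M)$ are inherited from $\C$ with injections $\J\inl$ and $\J\inr$. The latter makes $\J$ preserve coproducts on morphisms, giving the key identity $\J f + \id_X = \J(f+\id_X)$ for every $f\colon Y \to Z$ in $\C$. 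Combining these, one computes directly
\[
  \lift{\GFG}(\Id+X)(\J f) \;=\; \lift{\GFG}(\J f + \id_X) \;=\; \lift{\GFG}(\J(f+\id_X)) \;=\; \J(\GFG(f+\id_X)),
\]
and similarly $(\lift{\GFG}+X)(\J f) = \J(\GFG f + \id_X)$.

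Having established these lifting properties, the final step is a direct application of Proposition~\ref{prop:liftinginitialalgebra}: from each initial algebra $\iota$ in $\C$ we obtain an initial algebra $\J\iota$ for the lifting in $\Kl(M)$. The only mildly delicate point is the morphism-level verification of the lifting conditions; beyond this, the proof is essentially formal.
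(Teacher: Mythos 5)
Your proof is correct and follows essentially the same route as the paper's: establish accessibility of the two endofunctors on $\C$ so that their initial algebras exist by the free-algebra construction of Remark~\ref{rem:chain}, check that the two $\Kl(M)$-endofunctors are liftings of them (the paper phrases this as ``a composite of liftings is a lifting of the composite'', via the lifting of the exception monad, while you compute the commutation with $\J$ directly on morphisms), and conclude by Proposition~\ref{prop:liftinginitialalgebra}. One small inaccuracy: $(-)+X$ does not preserve \emph{all} colimits (it fails, e.g., on coproducts), but it does preserve connected, hence $\lambda$-filtered, colimits, which is all your accessibility argument actually needs.
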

\begin{proof} As the left adjoint $\J \: \C \to \Kl(M)$ is defined as the identity on objects, without loss of generality we can prove our statement for an object $\J Y \in \Kl(M)$, where $Y \in \C$.

First we observe that the endofunctor $Y + \Id \colon \C \to \C$ (\emph{cf.} Example~\ref{ex:mnds}.\ref{pt:exceptionmonad}) always has a lifting to $\Kl(M)$. Indeed, because the left adjoint $\J \: \C \to \Kl(M)$ preserves coproducts, we have
 \[\J \circ ( \Id + Y) = \J(\Id) + \J Y = (\Id +\J Y) \circ \J\]
 implying that $\Id + \J Y \: \Kl(M)\to \Kl(M)$ is a lifting of $\Id +Y \colon \C \to \C$.

 Now we can compose the $\C$-endofunctors $\GFG$ and $\Id + Y $ in two different ways, obtaining $ \GFG(\Id)+ Y \colon \C \to \C$ and $\GFG(\Id+Y)\colon \C \to \C$. It is straightforward to check that the composite of two liftings is a lifting of the composite functor. This means that we have liftings $\lift{\GFG}(\Id) + \J Y\colon \Kl(M) \to \Kl(M)$ and $\lift{\GFG}(\Id+\J Y)\colon \Kl(M) \to \Kl(M)$ respectively of $\GFG(\Id) + Y\colon \C \to \C$ and $\GFG(\Id+Y)\colon \C \to \C$.

 The next step is to use accessibility to get initial algebras in $\C$ that will be then lifted to $\Kl(M)$. To this aim, we observe that both functors $\GFG(\Id)+ Y \colon \C \to \C$ and $\GFG(\Id+Y)\colon \C \to \C$ are accessible, because the functor $Y + \Id$ is clearly accessible and $\GFG$ is assumed to have this property.

Thus as observed in Remark~\ref{rem:chain}.\ref{pt:HaccessibleFreeAlg} both an initial $\GFG(\Id)+ Y$-algebra and an initial $\GFG(\Id+Y)$-algebra exist. Then Proposition~\ref{prop:liftinginitialalgebra} yields the existence both of an initial $\lift{\GFG}(\Id)+ \J Y$-algebra and an initial $\lift{\GFG}(\Id+\J Y)$-algebra. \qed
\end{proof}

We are now ready to supply a proof of Theorem \ref{th:comparewithHJS}.

\begin{theorem_for}{th:comparewithHJS}
\thcomparewithHJS
\end{theorem_for}
\begin{proof}
Since $\Kl(M)$ inherits coproducts from $\Sets$, we only need to check the following properties:
\begin{enumerate}
  \item all free $\lift{H}$-algebras exist;
  \item for all $Y \in \Kl(M)$, the initial $\free{\lift{H}}(\Id+Y)$-algebra exists;
  \item for all $Y \in \Kl(M)$, the initial $\lift{H}\J X + \Id + Y$-algebra exists.
\end{enumerate}
In virtue of Proposition~\ref{prop:liftingCoproduct}, the three properties are implied respectively by the following statements:
\begin{enumerate}
  \item the functor $H\:\Sets \to \Sets$ is accessible;
  \item the functor $\free{\lift{H}} \: \Kl(M) \to\Kl(M)$ is the lifting of $\free{H} \: \Sets \to \Sets$ and $\free{H}$ is accessible;
  \item the functor $\lift{H}\J X + \Id \: \Kl(M) \to\Kl(M)$ is the lifting of $H X + \Id \: \Sets \to \Sets$ and $H X + \Id$ is accessible.
\end{enumerate}
The first point is given by assumption. For the second point, $\free{H}$ is accessible by Remark~\ref{rem:chain}.\ref{pt:freeHaccessible} and $\free{\lift{H}} \: \Kl(M) \to\Kl(M)$ is its lifting by Proposition~\ref{prop:liftingfreemonad}. For the third point, since the identity $\Id \: \Sets\to \Sets$ and the constant functor $H X \: \Sets\to\Sets$ are clearly accessible and coproducts preserve this property, then $H X + \id \: \Sets\to\Sets$ is also accessible. As the left adjoint $\J \: \Sets\to\Kl(M)$ preserves coproducts, it is immediate to check that $\lift{H}\J X + \id \: \Kl(M) \to\Kl(M)$ is the lifting of $H X + \Id \: \Sets\to\Sets$. Indeed:
\[\J \circ (H X + \Id) = \J H X + \J(\Id) = \lift{H} \J X + \J(\Id) =  (\lift{H} \J X + \Id) \circ \J. \]
This concludes the proof of the three properties above. \qed
\end{proof}

\section{Proofs of Section \ref{ssec:quot}}

In this appendix, we report the proofs of the results stated in Section~\ref{ssec:quot}, apart from Theorem~\ref{th:quotcomparewithHJS} that we prove separately in the next appendix.

\begin{proposition_for}{prop:QElgot}
\propQElgot
\end{proposition_for}
\begin{proof} We need to check the following:
 \begin{enumerate}
   \item for all $Y \in \C$ an initial $Q(\Id+Y)$-algebra exist and
   \item $Q$ is locally continuous.
 \end{enumerate}
 The first point is given by Assumption~\ref{ass:quotient}. For the second point, we already checked with Proposition~\ref{prop:ass2->ass1} that our assumptions on $\C$ and $\GF $ imply that $\free{\GF }$ is locally continuous. Then, by Proposition~\ref{lem:quot}, $Q$ has the same property. \qed
\end{proof}

\begin{proposition_for}[Factorisation for the quotient semantics]{prop:factQuotient}
\propfactQuotient
\end{proposition_for}
\begin{proof} We instantiate the statement of Proposition~\ref{prop:factorizationLemma} to the monads $\free{\GF }$, $Q$ and the monad morphism $\quotG \colon \free{\GF } \To Q$. It amounts to commutativity of the following diagram for a given $\free{\GF }$-system $e \colon X \to \free{\GF }X$ and the parameter $Y = 0$.
\begin{equation}\label{diag:factfreeHQ}
\xymatrix{
    X \ar[rr]^{\quotsol {(\quotG_X \circ e)}} \ar[drr]_{\cansol e} && Q0 \\
    && \free{\GF }0 \ar[u]_{\quotG_0}
}
\end{equation}
Thus for $\free{\GF }$-systems the equality \eqref{eq:factQ} is immediate, because $\bbq{e} = \quotsol {(\quotG_X \circ e)}$ by Definition~\ref{def:quotSem} and $\quotsol {(\quotG_X \circ e)} = \quotG_0 \circ \cansol e = \quotG_0 \circ \bb{e}$ by commutativity of \eqref{diag:factfreeHQ}.

Starting instead from an $\GF $-system $e'$ based on state space $X$, with or without internal computations, consider the following chain of equalities:
\[ \bbq{e'} = \quotsol {(\quotG_X \circ \overline{e'})}  =\ \quotG_0 \circ \cansol{\overline{e'}} = \quotG_0 \circ \bb{e'}. \]
The first and third equalities are given by unfolding the definition of $\bbq{-}$ and $\bb{-}$, whereas the second one is due to commutativity of \eqref{diag:factfreeHQ} applied to the $\free{\GF }$-system $\overline{e'} \colon X \to \free{\GF }X$ in place of $e$. \qed
\end{proof}

\begin{corollary_for} {for:sound-quot}
\corSoundQuot
\end{corollary_for}
\begin{proof} The statement is immediately given by the following derivation
\[\bbq{e} = \quotG_0 \circ \bb{e} = \quotG_0 \circ \bb{\epselim e} = \bbq{\epselim e}\]
where the first and third equalities hold by Proposition \ref{prop:factQuotient} and the second equality by Theorem \ref{th:epselimsound}. \qed
\end{proof}

\section{Proof of Theorem~\ref{th:quotcomparewithHJS}}
Finally, we can prove Theorem~\ref{th:quotcomparewithHJS}. The following lemma provides sufficient conditions for lifting the quotient of an endofunctor to $\Kl(M)$.
 \begin{proposition}\label{lemma:quotient}
 %\fznote{Using $T$,$T'$ for monads that are then lifted is not entirely coherent with what we do in general: in the abstract theory $T$ is already the monads with operators, not $\lift{T}$.Leave as this here or use other aliases (like $R$ for the quotient $T'$ and $S$ for $T$?)}\fbnote{Fatto!}
 Let $M,S\colon \C \to \C$ be monads such that there exists a distributive law $\lambda\colon SM \to MS$ and
 let $\lift{S}\colon \Kl(M) \to \Kl(M)$ be the corresponding lifting.
 Let $\quotG \colon S \To R$ be a monad quotient  such that
\begin{itemize}
\item[(c)] for each $X$, there is a map
 $\lambda_X'\colon RMX \to MRX$ making the following commute.
 $$\xymatrix{
 SMX \ar[d]_{\quotG_{MX}} \ar[r]^{\lambda_X} & MSX \ar[d]^{M{\quotG}_X}\\
 RMX \ar[r]_{\lambda'_X} & MRX
 }$$
 \end{itemize}
 Then $R$ lifts to a monad
 $\lift{R}\colon \Kl(M) \to \Kl(M)$ and $\lift{q} \colon \lift{S} \To \lift{R}$ defined as $\lift{\quotG_X}=\J(\quotG_X)$ is a monad quotient.
 %\fzwarning{monad quotient or monad morphism?}
 \end{proposition}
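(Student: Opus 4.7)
The plan is to invoke Proposition~\ref{LiftProp}, which reduces the task of lifting the monad $R$ to $\Kl(M)$ to that of producing a distributive law $\lambda'\colon RM \To MR$ of $R$ over $M$. Condition~(c) already specifies candidate components $\lambda'_X$, so the work splits into (i) checking that these assemble into a distributive law, and (ii) checking that the resulting $\lift{\quotG}$ is a monad quotient in $\Kl(M)$.

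For (i), naturality of $\lambda'$ follows from condition~(c) together with epicness of $\quotG_{MX}$: given $f\colon X \to Y$, both $MRf \circ \lambda'_X$ and $\lambda'_Y \circ RMf$ collapse, after precomposition with $\quotG_{MX}$, to $M\quotG_Y \circ MSf \circ \lambda_X$ by naturality of $\lambda$ and of $\quotG$, so epicness of $\quotG_{MX}$ delivers naturality of $\lambda'$. For each of the four distributive law axioms, I would adopt the same template: precompose with a suitable iterate of $\quotG$ and reduce to the corresponding axiom for $\lambda$, using the monad-morphism identities $\quotG \circ \eta^S = \eta^R$ and $\quotG \circ \mu^S = \mu^R \circ \quotG_R \circ S\quotG$. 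The two unit axioms and the axiom involving $\mu^M$ each reduce by precomposing with a single $\quotG$-component ($\quotG_X$, $\quotG_{MX}$, and $\quotG_{MMX}$ respectively) and are routine diagram chases.

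The main obstacle is the $\mu^R$-axiom $\lambda' \circ \mu^R M = M\mu^R \circ \lambda' R \circ R\lambda'$, whose source is $RRM$: here the natural choice is to precompose with $\quotG_{RMX} \circ S\quotG_{MX}$ (equivalently, by naturality of $\quotG$, with $R\quotG_{MX} \circ \quotG_{SMX}$), and this composite must be epi for the chase to close. This holds in the intended application $\C = \Sets$, where surjections split and are preserved by every endofunctor; I would state this explicitly (or, more generally, assume that $S$, equivalently $R$ via the aforementioned naturality identity, preserves epis). Modulo this point, the chase becomes mechanical: expand using condition~(c) on both sides, apply naturality of $\lambda$ on $\quotG_X$, and then invoke the $\mu^S$-axiom for $\lambda$ and the monad-morphism equation for $\quotG$ to recombine.

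Given (i), Proposition~\ref{LiftProp} produces the monad lifting $\lift{R}$, whose unit and multiplication at $X$ are $\J(\eta^R_X)$ and $\J(\mu^R_X)$. For (ii), $\lift{\quotG_X} = \J(\quotG_X)$ is a natural transformation $\lift{S} \To \lift{R}$ in $\Kl(M)$ because condition~(c) is exactly the compatibility of $\quotG$ with the two distributive laws, which is the standard criterion for a morphism between base monads to descend to a natural transformation between the Kleisli liftings; preservation of unit and multiplication in $\Kl(M)$ then reduces to the corresponding equations for $\quotG$ in $\C$ via short $\eta^M/\mu^M$ computations. Finally, for any parallel pair $f,g\colon RX \to MZ$, the unit law for $M$ yields $f \klafter \J(\quotG_X) = f \circ \quotG_X$ (and similarly for $g$), so epicness of $\lift{\quotG_X}$ in $\Kl(M)$ follows directly from epicness of $\quotG_X$ in $\C$, completing the proof that $\lift{\quotG}$ is a monad quotient.
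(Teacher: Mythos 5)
Your proposal follows essentially the same route as the paper: naturality of $\lambda'$ and the distributive-law axioms are checked by precomposing with epimorphic $\quotG$-components and reducing to the corresponding properties of $\lambda$ and the monad-morphism equations for $\quotG$, after which Proposition~\ref{LiftProp} yields the lifting $\lift{R}$, and the naturality, monad-morphism, and epi claims for $\J(\quotG)$ are verified by short $\eta^M/\mu^M$ computations (the paper gets epicness from $\J$ being a left adjoint, which amounts to your unit-law argument). Your caveat about the $\mu^R$-axiom is well taken: the paper writes out only the $\eta^M$-square and declares the remaining three ``analogous,'' silently using that the required precomposite $\quotG_{RMX}\circ S\quotG_{MX}$ is epi --- automatic in the intended setting $\C=\Sets$, but, exactly as you observe, needing epi-preservation by $S$ (equivalently $R$) in a general category.
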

\begin{proof}
 We first prove that $\lambda'\colon RM \To MR$ given by $\{\lambda'_X\}_{X}$ is a natural transformation.
 Let $f \colon X \to Y$ be a morphism in $\C$. As each $\quotG$-component is epi, it suffices to check that $MRf \circ \lambda'_X \circ \quotG_{MX} = \lambda'_Y \circ RMf \circ \quotG_{MX}$. For this purpose we construct the following cube.
 $$\xymatrix{
 RMX  \ar[rd]_{RMf} \ar[rr]^{\lambda'_X} && MRX \ar[rd]^{MRf}\\
 & RMY \ar[rr]^(0.7){\lambda'_Y} && MRY\\
 SMX \ar[rd]_{SMf} \ar[uu]^{\quotG_{MX}} \ar[rr]_(0.7){\lambda_X} && MSX \ar[rd]^{MSf} \ar[uu]|(0.3){M{\quotG}_X}\\
 & SMY \ar[uu]^(0.7){\quotG_{MY}} \ar[rr]_{\lambda_Y} && MSY \ar[uu]_{M{\quotG}_Y}
 }$$
The bottom face commutes by naturality of $\lambda$; the leftmost and the righmost faces commute by naturality of $\quotG$; the backward and the front face commute because of $(\dagger)$. It is therefore easy to see that $MRf \circ \lambda'_X \circ \quotG_{MX} = \lambda'_Y \circ RMf \circ \quotG_{MX}$.

Now, we prove that $\lambda'\colon RM \To MR$ is a distributive law of monads. The argument for the four diagrams is analogous, so we just show the one for $\eta_M$, depicted in the triangle $(1)$, below.
$$\xymatrix{
& RX \ar@{}[d]|{(1)} \ar[rd]^{R\eta^M_X} \ar[ld]_{\eta^M_{RX}} & & SX \ar@{}[ld]|{(2)} \ar[dd]^(0.7){\eta^M_{SX}} \ar[ll]_{\quotG_X} \ar[rd]^{S\eta^M_X}\\
MRX & & RMX \ar[ll]_{\lambda'_X} & & SMX \ar@{}[llu]|(0.3){(3)} \ar[ld]^{\lambda_X} \ar[ll]_(0.7){\quotG_{MX}}\\
& & & MSX \ar@{}[lu]|{(4)} \ar[ulll]^{M{\quotG}_X}
}$$
Observe that $(2)$ commutes by naturality of $\quotG$, $(3)$ commutes since $\lambda$ is a distributive law of monads and $(4)$ commute by $(\dagger)$.
Therefore the first equality of the following equation holds $$\lambda'_X \circ R\eta^M_X \circ \quotG_X = M{\quotG}_X \circ \eta^M_{SX}=\eta^M_{RX}\circ \quotG_X$$ and the second equality holds by naturality of $\eta^M$. The commutativity of $(1)$ follows since $\quotG_X$ is epi.

By Proposition~\ref{LiftProp}, and the fact that $\lambda'\colon RM \To MR$, then $R$ has a monad lifting $\lift{R} \colon \Kl(M) \to \Kl(M)$.

We now prove that $\lift{q} \colon \lift{S} \To \lift{R}$ is a monad morphism. First, we need to check that it is a natural transformation, that is for all morphisms $f \colon X \to Y$ in $\Kl(M)$, the following diagram commutes.
$$
\xymatrix{
\lift{S}X \ar[r]^{\J(\quotG_X)} \ar[d]_{\lift{S}f} & \lift{R}X \ar[d]^{\lift{R}f}\\
\lift{S}Y \ar[r]_{\J(\quotG_Y)} & \lift{R}Y
}
$$
By spelling out the definitions of $\J$ and $\lift{S}$, the above diagram corresponds to the following in $\C$.
$$
\xymatrix@C=2cm{
SX \ar@{}[rd]|{(1)} \ar[r]^{\eta_{SX}^M} \ar[d]_{Sf} & MSS \ar@{}[rd]|{(2)} \ar[r]^{M{\quotG}_X} \ar[d]|{MSf} & MRX \ar[d]^{MRf}\\
SMY\ar@{}[rd]|{(3)}  \ar[r]|{\eta^M_{SY}} \ar[d]_{\lambda_X} & MSMY \ar@{}[rd]|{(4)} \ar[r]|{M{\quotG}_{MY}} \ar[d]|{M\lambda_Y} & MRMY \ar[d]^{M\lambda'_Y}\\
MSY \ar[r]_{\eta^M_{MSY}} & MMSY \ar[r]_{MM{\quotG}_Y} & MMRY \ar[r]^{\mu^M_{RY}} & MRY
}
$$
Observe that $(1)$ and $(3)$ commute by naturality of $\eta^M$, $(2)$ commutes by naturality of $\quotG$ and $(4)$ commutes by $(\dagger)$.

Verifying that $\lift{q}$ is a also morphism of monads is immediate:
$\lift{q} \klafter \eta^{\lift{S}} = \J (q) \klafter \J (\eta^S) = \J (\eta^{R}) = \eta^{\lift{R}}$
and $\lift{q} \klafter \mu^{\lift{S}} = \J (q) \klafter \J (\mu^S) = \J (\mu^{R}) \klafter \J (Rq \circ \quotG_S)= \mu^{\lift{S}} \circ \lift{R}\lift{q}\circ \lift{\quotG_S}$.

All its components are epi since $\J$ is a left adjoint and thus preserves epis.
\qed
\end{proof}

\begin{theorem_for}{th:quotcomparewithHJS}
\thquotcomparewithHJS
\end{theorem_for}
\begin{proof} The conditions of Point~\ref{pt:QuotKleisli2} are guaranteed by Proposition~\ref{lemma:quotient}. In particular, the morphism $\lift{{\quot}} \colon \free{\lift{H}} \To \lift{\MM}$ is of the right type because $\lift{\free{H}} = \free{\lift{H}}$ by Proposition~\ref{prop:liftingfreemonad}. For point~\ref{pt:QuotKleisli3} we observe that, for $\Kl(M)$, $\lift{H}$, $\free{\lift{H}}$ and $\lift{H}\J X + \id$, proving Assumption~\ref{ass:quotient} amounts to show Assumption~\ref{ass:TheoryHSystems}, which we already did in Theorem~\ref{th:comparewithHJS}.

Thus it only remains to prove that for all $Y \in \Kl(M)$ an initial $\lift{\MM}(\id+Y)$-algebra exists. In virtue of Proposition~\ref{prop:liftingCoproduct}, it suffices to show that $\MM \: \Sets \to \Sets$ is accessible. The accessibility of the quotient $\MM$ of $\free{H}\colon \Sets \to \Sets$ is guaranteed from the fact that $\free{H}\: \Sets \to \Sets$ is accessible (Remark~\ref{rem:chain}\ref{pt:freeHaccessible}) and thus bounded (Remark~\ref{rem:chain}\ref{pt:bounded}) and that the quotients of bounded functors are also bounded.
\qed
\end{proof}

\section{Modeling Mazurkiewicz Trace Semantics}\label{app:mazurAss}

 The following statement allows to apply the framework of quotient semantics (Section \ref{ssec:quot}) to the modeling of Mazurkiewicz trace semantics (Section \ref{Sec:MazurTraces}). The functor $H$, the monads $\MM$ and $\free{H}$, the quotient of monads $\quot \: \free{H} \To \MM$ and the congruence relation $\equiv$ are as in Example \ref{Sec:MazurTraces}.

 %\newcommand{\propmazurAss}{The monads $\Pow \: \Sets \to \Sets$, $\MM \: \Sets \to \Sets$, the functor $H\: \Sets \to \Sets$ and the quotient of monads  $\quotG \: T \To \MM$ satisfy the assumptions of Theorem \ref{th:quotcomparewithHJS}.}
%\begin{theorem_for}{prop:mazurAss}
%\propmazurAss
%\end{theorem_for}
\begin{proposition}The monads $\Pow \: \Sets \to \Sets$ and $\MM \: \Sets \to \Sets$, the functor $H\: \Sets \to \Sets$ and the quotient of monads  $\quot \: \free{H} \To \MM$ satisfy the assumptions of Theorem \ref{th:quotcomparewithHJS}.
\end{proposition}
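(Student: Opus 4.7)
The plan is to discharge the conditions of Theorem~\ref{th:quotcomparewithHJS} one by one, where the only thing really demanding verification is condition~(c); everything else follows from standard facts about the powerset monad and polynomial endofunctors of $\Sets$, together with the observation that $\equiv$ is a monoid congruence on $A^*$.

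First I would record that $\Pow$ and $H$ satisfy the hypotheses of Theorem~\ref{thm:HJS}: the Kleisli category $\Kl(\Pow)$ is \cppo-enriched with left-strict composition (pointwise inclusion of Kleisli arrows), and $HX = 1 + A\times X$ is a finitary (hence accessible) polynomial endofunctor of $\Sets$. Since $\Pow$ is commutative, $H$ admits a canonical distributive law over $\Pow$, inducing a lifting $\lift{H}$ to $\Kl(\Pow)$ which is locally continuous; these observations are exactly the ones already used in Example~\ref{ex:NDA} and summarised at the end of Section~\ref{SSec:Distributive}. This takes care of the assumptions of Theorem~\ref{thm:HJS}.

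Next I would check that $\MM$ is indeed a monad and that $\quot \colon \free{H} \To \MM$ is a monad quotient. The key point is that $\equiv$ is by construction the least congruence on the monoid $A^*$ containing the relation induced by $I$. Therefore the quotient set $A^*/_\equiv$ carries a monoid structure inherited from $A^*$, with unit $[\epsilon]_\equiv$ and multiplication $[w]_\equiv \cdot [w']_\equiv = [ww']_\equiv$ (well-defined precisely because $\equiv$ is a congruence). This makes $\MM X = A^*/_\equiv \times X + A^*/_\equiv$ a monad in the same way $\free{H}X = A^* \times X + A^*$ is, and the natural transformation $\quot$ obtained componentwise by sending $w$ and $(w,x)$ to $[w]_\equiv$ and $([w]_\equiv, x)$ is a monad morphism (unit and multiplication are preserved for the same reason the quotient monoid is well-defined). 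Its components are surjective, hence epimorphic in $\Sets$, so $\quot$ is a monad quotient.

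Finally, for condition~(c), I would exhibit $\lambda'_X \colon \MM \Pow X \to \Pow\MM X$ explicitly by
\[
\lambda'_X([w]_\equiv, S) \;=\; \{\,([w]_\equiv, x) \mid x \in S\,\}, \qquad \lambda'_X([w]_\equiv) \;=\; \{[w]_\equiv\},
\]
mirroring the formula for the canonical $\lambda_X$ of the polynomial distributive law. Well-definedness is immediate since the right-hand sides depend only on the $\equiv$-class of $w$, and commutativity of the square in~(c) follows by a direct calculation along both sides of the diagram. The mild point worth writing out carefully is that both $\lambda_X$ and $\quot_X$ are defined by copair on the two summands of $A^*\times\Pow X + A^*$, so commutativity reduces to two elementary equalities, one for each summand. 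This is the only step with any content, but it presents no real difficulty: the whole statement is in effect a sanity check that taking the $\equiv$-quotient commutes with the polynomial distributive law defining $\lift{\free{H}}$.
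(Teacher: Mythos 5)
Your proposal is correct and follows the paper's proof in all essentials: the hypotheses of Theorem~\ref{thm:HJS} are discharged by citing the standard facts about $\Pow$ and the polynomial functor $H$, and the only substantive point is condition~(c), which in both arguments comes down to the observation that $\Pow\quot_X \circ \lambda_X$ depends only on the $\equiv$-class of the word component. Where the paper obtains $\lambda'_X$ abstractly from the universal property of the coequalizer $\quot_{\Pow X}$, you write the map down explicitly and check well-definedness on $\equiv$-classes --- the same verification in different clothing.
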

\begin{proof}
Clearly the functor $H \: \Sets \to \Sets$ is accessible. The remaining properties of $H$ and of the monad $\Pow \: \Sets \to \Sets$ are as in Theorem \ref{thm:HJS} and have been already verified in \cite{HasuoJS:07}. Thus it remains to show that the quotient $\quot \: \free{H} \To \MM$ satisfies condition $(c)$ of Theorem \ref{th:quotcomparewithHJS}. For this purpose, fix $X \in \Sets$. The desired morphism $\lambda'_X \: \MM\Pow X \to \Pow \MM X$ will be given by universal property of a standard coequalizer diagram induced by the congruence relation $\equiv\ \subseteq {A^* \times A^*}$. First we define the set $E_{\Pow X} \subseteq (\free{H}\Pow X \times \free{H}\Pow X)$ as
\[ E_{\Pow X} \ :=\ \{\big( (w, Y) (v,Y)\big) \mid w \equiv v\} \cup \{(w, v) \mid w \equiv v\}\]
Intuitively, $E_{\Pow X}$ is the set of equations on $\free{H}\Pow X$ induced by $\equiv$. There are evident projection maps $\pi_1,\pi_2 \: E_{\Pow X} \to \free{H} \Pow X$. It is immediate to verify that the following is a coequalizer diagram.
\[\xymatrix{
E_{\Pow X} \ar@<.5ex>[r]^{\pi_1} \ar@<-.5ex>[r]_{\pi_2} &
\free{H} \Pow X \ar[rr]^{{\quot}_{\Pow X}} &&
\MM \Pow X
}
\]
Also one can check that the morphism $\Pow {\quot}_X \circ \lambda_X \: \free{H} \Pow X \to \Pow \MM X$ (where $\lambda \: \free{H} \Pow \To \Pow \free{H}$ is a distributive law as in the statement of Theorem \ref{th:quotcomparewithHJS}) gives the same values if precomposed with $\pi_1$ or with $\pi_2$. Thus the universal property of coequalizer yields a unique morphism $\lambda'_X$ making the following commute.
\[\xymatrix@R=.4cm{
E_{\Pow X} \ar@<.5ex>[r]^{\pi_1} \ar@<-.5ex>[r]_{\pi_2} &
\free{H} \Pow X \ar[rr]^{{\quot}_{\Pow X}} \ar[dr]^{\lambda_X} &&
\MM \Pow X \ar@{-->}[dd]^{\lambda'_X} \\
&& \Pow \free{H} X \ar[dr]^{\Pow {\quot}_X}&& \\
& && \Pow \MM X
}
\]
Commutativity of the above diagram yields condition $(c)$ of Theorem \ref{th:quotcomparewithHJS}. \qed
\end{proof} 

%%% Local Variables: 
%%% mode: latex
%%% TeX-master: "epsilonTNG"
%%% End: 

\end{document}

\end{document}